\documentclass[pra, two column, superscriptaddress, showpacs,nofootinbib,longbibliography, title page]{revtex4-2}

\usepackage[T1]{fontenc}
\usepackage{physics}
\usepackage{enumerate}
\usepackage{etex}
\usepackage{amsmath,amssymb,amsthm,amstext, mathrsfs}
\usepackage[pdftex]{graphicx}
\usepackage{verbatim}
\usepackage{bbold}
\usepackage{mathtools}
\usepackage{pgfplots}
\pgfplotsset{width=8 cm,compat=1.8}
\usepackage{times,txfonts}
\usepackage{braket}
\usepackage{color}
\usepackage{array}
\usepackage{blkarray}
\usepackage{mathtools}
\usepackage{latexsym}
\usepackage{tabularx, booktabs}
\usepackage{graphics,epstopdf}
\usepackage{graphicx}
\usepackage{float}
\usepackage{amsfonts}
\usepackage{subfigure}
\usepackage{hyperref}
\usepackage{soul}
\hypersetup{
	colorlinks=true,
	linkcolor=red,
	filecolor=blue,      
	urlcolor=blue,
	citecolor=purple,
}

\DeclareMathOperator*{\Motimes}{\text{\raisebox{0.25ex}{\scalebox{0.65}{$\bigotimes$}}}}

\newtheorem{Corollary}{Corollary}

\newtheorem{definition}{Definition}

\newtheorem{thm}{Theorem}
\newtheorem{Lemma}{Lemma}
\newtheorem{remark}{Remark}

\begin{document}

\title{Scalable Self-Testing of Mutually Anticommuting Observables and Maximally Entangled Two-Qudits}

\author{Souradeep Sasmal}
\email{souradeep.007@gmail.com}
\affiliation{Institute of Fundamental and Frontier Sciences, University of Electronic Science and Technology of China, Chengdu 611731, China}

\author{Ritesh K. Singh}
\email{riteshcsm1@gmail.com}
\affiliation{Department of Physics, Indian Institute of Technology Hyderabad, Kandi, Sangareddy, Telangana 502285, India}

\author{Prabuddha Roy}
        \email{prabuddhar@kias.re.kr}
         \affiliation{Quantum Universe Center, Korea Institute for Advanced Study, Seoul 02455, Republic of Korea }

\author{A. K. Pan}
	\email{akp@phy.iith.ac.in}
	 \affiliation{Department of Physics, Indian Institute of Technology Hyderabad, Kandi, Sangareddy, Telangana 502285, India}


\begin{abstract}
The next frontier in device-independent quantum information lies in the certification of scalable and parallel quantum resources, which underpin advanced quantum technologies. We put forth a simultaneous self-testing framework for maximally entangled two-qudit state of local dimension $m_*=2^{\lfloor n/2 \rfloor}$ (equivalently $\lfloor n/2 \rfloor$ copies of maximally entangled two-qubit pairs), together with $n$ numbers of anti-commuting observables on one side. To this end, we employ an $n$-settings Bell inequality comprising two space-like separated observers, Alice and Bob, having $2^{n-1}$ and $n$ number of measurement settings, respectively. We derive the local ontic bound of this inequality and, crucially, employ the Sum-of-Squares decomposition to determine the optimal quantum bound without presupposing the dimension of the state or observables. We then establish that any physical realisation achieving the maximal quantum violation must, up to local isometries and complex conjugation, correspond to a reference strategy consisting of a maximally entangled state of local dimension of at least $2^{\lfloor n/2 \rfloor}$ and local observables forming an irreducible representation of the Clifford algebra. This construction thereby demonstrates that the minimal dimension compatible with $n$ mutually anticommuting observables is naturally self-tested by the maximal violation of the proposed Bell functional. Finally, we analyse the robustness of the protocol by establishing quantitative bounds relating deviations in the observed Bell value to the fidelity between the realised and the ideal strategies. Our results thus provide a scalable, dimension-independent route for the certification of high-dimensional entanglement and Clifford measurements in a fully device-independent framework.
\end{abstract}

\maketitle


\section{Introduction} 

With the rapid advancement of quantum technology, many promising device-independent (DI) information protocols increasingly require simultaneous access to multiple identically prepared entangled states (or, equivalently, higher dimensional entangled states) in order to scale single-copy schemes towards practical, large-scale implementations. For example, parallel DI key distribution protocols \cite{Cerf2002, Georgios2005, Mafu2013, Vidick2017, Jain2020} exploit multiple maximally entangled two-qubit states to enhance secure key rates. In DI randomness expansion \cite{Pironio2010}, the amount of certified randomness scales with the number of entangled states certified, as demonstrated in unbounded randomness expansion protocols \cite{Agresti2021}. Similarly, entanglement detection for arbitrary mixed states \cite{Bowles2018}, and verifiable delegated or blind quantum computing protocols \cite{Reichardt2013, Gheorghiu2015, Coladangelo2017ar, Jain2020, Gottesman1999, Kapit2016, Hu2020, Kues2017}, require a client to prepare and supply multiple entangled qubit pairs to a server. In all such cases, the ability to certify multiple entangled two-qubit states and their corresponding measurement settings is not merely of theoretical interest but a fundamental step towards building practical, scalable, and secure quantum networks and cloud-based quantum computation platforms.

The most stringent form of certification of quantum states and measurements, henceforth referred to as the `quantum strategy’, is provided by the DI approach. In this framework, experiments are characterised solely by observed input-output statistics, without assuming any knowledge of the underlying quantum devices. Under the assumptions of nosignalling and the validity of quantum theory, it has been shown that the maximal violation of the CHSH inequality \cite{CHSH1970, Bell1964} uniquely identifies maximally entangled two-qubit state \cite{Tsirelson1987, Werner1987, Popescu1992, Tsirelson1993}. Mayers and Yao \cite{Mayers2003} further demonstrated that any two quantum strategies achieving this maximal violation of the CHSH inequality are related by local isometries, thereby coining the term `self-testing', a method of certifying both the underlying entangled state and the associated measurements in a fully DI manner. The robustness of this approach was later established in \cite{McKague2012}. 

Beyond the self-testing of maximally entangled two-qubit states, significant progress has been made towards self-testing of quantum instruments \cite{Mohan2019, Shi2019, Tavakoli2020, Roy2023, Paul2024}, arbitrary two-qubit pure states, both using two-outcome tilted Bell inequalities \cite{Yang2013, Bamps2015, Coladangelo2017, Coopmans2019, Kaniewski2020, Valcarce2022}, mixed entangled states and arbitrary extremal measurements \cite{Sarkar2026}, and through approaches not reliant on Bell-like inequalities \cite{Rabelo2012, Rai2021, Rai2022, Ranendu2025}. Furthermore, while extensions to multipartite scenarios \cite{Wu2014, Li2018, Sarkar2022, Panwar2023, Šupić2023, Sarkar2025, Singh2025} have also been developed the self-testing of general high-dimensional entangled systems remains challenging due to the increasingly intricate structure of the corresponding Hilbert spaces \cite{Sarkar2021, Fu2022, Volcic2024}. 

For bipartite systems of local dimension $2^m$, one promising approach involves the self-testing of multiple copies of lower-dimensional states. Since any maximally entangled state of dimension $2^m$ can be decomposed into multiple copies of maximally entangled two-qubit states \cite{Kraft2018}, one can employ sequential self-testing, in which multiple copies are certified one after another through a sequence of Bell tests \cite{Reichardt2013}. However, this sequential approach relies on the stability and non-adaptivity of the measurement devices over time, rendering it vulnerable to adversarial strategies exploiting temporal dependencies.

A more direct approach is parallel self-testing \cite{Coladangelo2017, Wu2016, McKague2016, Coladangelo2017ps, McKague2017, Supic2021}, where multiple copies of a state, or even distinct entangled states \cite{Supic2021}, are certified simultaneously. These schemes rely on performing individual measurements on each copy held by each party, i.e., a Bell test is performed for each subsystem, and the statistics are collected. Both the collective and individual statistics are then used to self-test the multiple copies of maximally entangled two-qubit states. These schemes involve performing independent Bell tests on each subsystem and using both the individual and collective statistics to self-test several entangled two-qubit states concurrently. Related efforts have also targeted direct self-testing of higher-dimensional maximally entangled states \cite{Sarkar2021, Supic2025}, although such protocols typically require multi-outcome measurements that are experimentally demanding. Moreover, some of these protocols only certify observables on one subsystem (e.g., Bob’s anti-commuting measurements) \cite{Supic2025}.

In this work, we introduce a scalable framework for DI certification of the maximally entangled two-qudit states using only dichotomic measurements. Specifically, we devise an $n$-settings, two-outcome Bell inequality, by deriving the local ontic bound of a multi-setting linear functional introduced in the context of a prepare-and-measure communication game in \cite{Ghorai2018}. We subsequently obtain the optimal quantum bound of this inequality via a sum-of-squares (SOS) decomposition \cite{Bamps2015}, without assuming any prior knowledge of the underlying Hilbert-space dimension. This analysis leads to the self-testing of maximally entangled states of local dimension $m_*=\lfloor n/2\rfloor$ (equivalently, $m_*$ copies of maximally entangled two-qubit states \cite{Kraft2018}), together with the corresponding algebra of $n$ mutually anticommuting observables on one side, up to local isometries and complex conjugation (local transposition).

Importantly, since any two-outcome multi-setting Bell inequality with real coefficients and joint correlators is locally transposition invariant, the optimal violation does not, in general, imply strict local isometric equivalence. This distinction becomes crucial for higher-setting inequalities, such as the Elegant Bell Inequality \cite{Gisin2007}, where the inclusion of complex Pauli operators (e.g., $\sigma_y$) breaks strict unitary equivalence under transposition \cite{Andersson2017}. Consequently, the definition of self-testing, as equivalence under local isometries, is generalised to include local transposition equivalence or complex conjugation \cite{Andersson2017}. Our framework therefore establishes self-testing of the target maximally entangled two-qudit state up to local isometry and complex conjugation, aligning with the broaden definition of self-testing.

Moreover, we prove that the self-testing statement is robust to experimental imperfections. In particular, if the observed value of the Bell functional deviates from the optimal quantum value by $\delta$, we show that, up to local isometries, the extracted state and the implemented measurements are $\order{\sqrt{\delta}}$-close to the ideal maximally entangled target state and the corresponding Clifford observables.

The paper is structured as follows. Section~\ref{prelims} reviews basic notions of Bell scenarios, Bell inequalities, and self-testing. Sections~\ref{clb} and \ref{qbsos} introduces the $n$-setting Bell functional \cite{Ghorai2018} and derives its local and quantum bounds for arbitrary $n$. In Section~\ref{selftesting}, we establish the self-testing relations between the physical and reference experiments. Section~\ref{sec5} analyses robustness to experimental noise. Section~\ref{secSD} summarises our main findings and outlines future directions.


\section{Preliminaries}\label{prelims} 
To set the stage for our results, we briefly outline the Bell scenario, the relevant sets of correlations, and the notion of self-testing.

\subsection{Bell scenario: Local, Quantum and Nosignalling correlations} 

A Bell scenario involves two spatially separated and non-communicating parties, Alice and Bob, who share a common preparation $P\in\mathcal{P}$. In each experimental run, Alice performs one of $m_A$ measurements $A_{x}$ with input $x\in [m_A]$, obtaining an outcome $a \in \{1,2,...,O_A\}$. Similarly, Bob performs one of $m_B$ measurements $B_y$ with $y \in [m_B]$, yielding $b \in \{1,2,...,O_B\}$. The resulting statistics define a behaviour vector
\begin{equation}
\vec{P}:=\qty{p\qty(a,b|x,y,P)}_{a,b,x,y} \in \mathbb{R}^{m_Am_BO_AO_B}
\end{equation}
where $p\qty(a,b|x,y,P)\geq 0$ and $\sum_{a,b}p\qty(a,b|x,y,P) =1$ denotes the joint conditional probability of obtaining the observed outcome-pair $(a,b)$. 

In quantum theory, the preparation corresponds to a shared state $\rho \in \mathscr{L}(\mathcal{H}^A \otimes \mathcal{H}^B)$, and the probabilities are governed by the Born rule,
\begin{equation}
    p(a,b|x,y,\rho) = \Tr[\rho A_{a|x}\otimes B_{b|y}]
\end{equation}
where $A_{a|x},B_{b|y}\geq 0$ and $\sum_a A_{a|x}=\sum_b B_{b|y} =\openone$ are the POVM elements corresponding to the outcomes $a$ and $b$ respectively. 

The set of all quantum behaviours forms a convex set $\mathcal{Q}$, known as the quantum set. By Neumark’s theorem \cite{Peres1990}, every element of $\mathcal{Q}$ can be obtained through projective measurements on a pure bipartite state, possibly in a higher-dimensional Hilbert space.

In the ontological model of quantum theory, a specific preparation of $\rho$ essentially prepares the system into an underlying ontic state $\lambda\in\Lambda$ with a probability distribution $\mu(\lambda): \mu(\lambda)\geq0; \ \sum_{\lambda} \mu(\lambda)=1$, where $\Lambda$ denotes the ontic state space. The ontic state $\lambda$ determines the outcome of a measurement of each party, which is independent of the choice of measurement settings and outcomes of the other party (locality). For a local ontic model \cite{Bell1964}, the conditional joint probability in the ontic state space factorises as follows:
\begin{equation} \label{fact}
	p(a,b|x,y,\lambda) = p(a|x,\lambda) \ p(b|y,\lambda) \ \ \forall \ a,b,x,y,\lambda
\end{equation}  
Now, in order to reproduce the prediction of quantum theory from the ontic model, the following reproducibility condition need to be satisfied
\begin{equation}\label{jp}
	p(a,b|x,y,\rho)=\sum\limits_{\lambda\in\Lambda} \mu(\lambda) p(a|x,\lambda)p(b|y,\lambda)
\end{equation} 
The set of all such local correlations constitutes a polytope, denoted by $\mathcal{L}$.  It is known \cite{Goh2018, Rai2019, Le2023, Barizien2025} that the local set $\mathcal{L}\subset \mathcal{Q} \subset \mathcal{NS}$, where $\mathcal{NS}$ is the nosignalling polytope containing all behaviours consistent with relativistic causality \cite{Masanes2006}. Note that unlike $\mathcal{L}$ and $\mathcal{NS}$, the quantum set $\mathcal{Q}$ is not a polytope.


\subsection{Bell functionals and inequalities}

A Bell functional is a linear map $\mathscr{G} \in  \mathbb{R}^{m_Am_BO_AO_B}$ acting on a behaviour $\vec{P}$ as
\begin{equation}
  G(\vec{P}) = \expval{\mathscr{G},\vec{P}}:=\sum_{a,b,x,y} \mathscr{G}_{a,b,x,y}  p(a,b|x,y)
\end{equation}
where $\mathscr{G}_{a,b,x,y}\in \mathbb{R}$. The maximal attainable value over a set $\mathcal{S}$ comprising $\mathcal{NS}$, $\mathcal{Q}$ and $\mathcal{L}$, is given by
\begin{equation}
    (G)^{opt}_{\mathcal{S}}=\max\limits_{\vec{P} \in \mathcal{S}}G(\vec{P})
\end{equation}
A Bell inequality is then expressed as
\begin{equation}\label{bell}
    G(\vec{P}) \leq (G)^{opt}_{\mathcal{L}} 
\end{equation}
and a behaviour is said to be nonlocal if it violates this bound, i.e. $G(\vec{P}) >(G)^{opt}_{\mathcal{L}}$.


\subsection{Optimal quantum Bell violation, Quantum extremality and self-testing}

From convex geometry, a linear functional defines a hyperplane, and at its optimal quantum value, this hyperplane becomes tangent to the quantum set $\mathcal{Q}$. The corresponding behaviour is therefore an exposed extremal point of $\mathcal{Q}$ \cite{Le2023, Barizien2025}. However, extremality of correlations does not guarantee uniqueness of the underlying Hilbert-space realisation. Notably, in the $2-3-2$ Bell scenario, the maximal violation of the $I_{3322}$ inequality does not uniquely determine the quantum strategy \cite{Pal2010}. Therefore, to establish self-testing, one must show that any two quantum realisations producing the same optimal value are related by a local isometry and complex conjugation. Formally,

{\definition An observed behaviour $\vec{P}_{*} \equiv \{p_{*}(a,b|x,y)\}$ self tests a quantum strategy $\qty{\ket{\psi}, A_{x}, B_{y}}$, up to local isometries and complex conjugations, if for any other quantum realisation $\qty{\ket{\psi'}, A'_{x}, B'_y}$ that reproduces the same correlations,
\begin{equation}
p_{*}(a,b|x,y)=\expval{A'_{a|x}\otimes B'_{b|y}}{\psi'}, \ \ \forall a,b,x,y
\end{equation}
and there exist local isometries $\Phi_A: \mathcal{H}_A \mapsto \mathcal{H}'_A \otimes \mathcal{H}_A^{aux}$ and $\Phi_B: \mathcal{H}_B \mapsto \mathcal{H}'_B \otimes \mathcal{H}_B^{aux}$ such that
\begin{equation}\label{stdefimp}
    \qty(\Phi_A \otimes \Phi_B) \ket{\psi}= \ket{\psi'} \otimes \ket{\text{junk}} \ \ \forall a,b,x,y.
\end{equation}
Moreover, certain self-testable behaviours $\vec{P}_{*}$ saturate the quantum bound of a Bell functional $\mathscr{G}$, i.e., $\mathscr{G}.\vec{P}_{*} = (G)^{opt}_{\mathcal{Q}}$.
}


\section{Results: Classical bound of the considered higher settings Bell functional}\label{clb}

We consider a particular Bell scenario, in which Alice has $m_A=2^{n-1}$ and Bob has $m_B=n$ number of measurements. All the measurements yield only two outcomes $+1$ and $-1$. We evaluate the local ontic bound of the following Bell functional
\begin{equation} \label{bellfuncgen}
    \mathscr{G}_n:=   \sum\limits_{x=1}^{2^{n-1}}\sum\limits_{y=1}^{n} (-1)^{z_y^{x}} A_x B_y
\end{equation}
where the value of the functional is denoted by $G_n=\expval{\mathscr{G}_n}$. The term ${z_{y}^{x}}$ takes values of either $0$ or $1$, determined by a set of bit strings of length $n$ denoted by $l\in \qty{0,1}^{n}$. If we fix the first bit as $0$, there will be $2^{n-1}$ distinct $l$ for all $n$. For each $x\in \qty{1,\cdots,2^{n-1}}$, let $z^x:=\qty(z^x_1,\cdots,z^x_n)$ be the associated bitstring. Then $z_y^{x}$ is the element corresponding to $y$-th bit of the  $x$-th string \cite{Ghorai2018}. 

{\thm Any ontological model satisfying the condition of factorisability of the joint conditional probability must satisfy the following inequality
\begin{equation} \label{nbell}
(G_n)_{\mathcal{L}}=\abs{\expval{\sum\limits_{x=1}^{2^{n-1}} \sum\limits_{y=1}^{n} (-1)^{z_y^{x}} A_x  B_y}_\mathcal{\lambda} }\leq \qty(\left\lfloor \frac{n}{2} \right\rfloor +1) \binom{n}{\left\lfloor \frac{n}{2}\right\rfloor +1}
\end{equation} }
For $n=2,3$, the above inequality given by Eq.~(\ref{nbell}) reduces to the well-known CHSH inequality \cite{CHSH1970} and elegant Bell inequality \cite{Gisin2007} respectively. 

\begin{proof}
The factorisability condition given by Eq.~(\ref{fact}) implies
\begin{equation}
  \expval{A_xB_y}_{\lambda} = \sum\limits_{ab}abp(a|x,\lambda) \ p(b|y,\lambda) = \expval{A_x}_{\lambda}\expval{B_y}_{\lambda}
\end{equation}
Employing the factorisability and defining the quantity $\mathcal{B}_{x}:=\qty(\sum\limits_{y=1}^{n} (-1)^{z_y^{x}}B_y)$, from Eq.~(\ref{nbell}), we obtain: 
\begin{equation}
\abs{\expval{\mathscr{G}_{n}}_{L}}=\abs{\sum\limits_{x=1}^{2^{n-1}}\expval{A_x }_{\lambda} \expval{\mathcal{B}_{x}}_{\lambda} } \leq \sum\limits_{x=1}^{2^{n-1}} \abs{ \expval{A_x}_{\lambda} \expval{\mathcal{B}_{x}}_{\lambda}}   \leq \sum\limits_{x=1}^{2^{n-1}} \abs{ \expval{\mathcal{B}_{x}}_{\lambda}}
\end{equation}
First inequality is obtained from the triangle inequality and second inequality is obtained since $\abs{\expval{A_x}_{\lambda}} \leq 1$. As the upper bound in an ontological model is always exhaust by the deterministic bound, without loss of generality we can always take $\expval{B}_y \in \{+1,-1\}$. Therefore, the local ontic bound becomes $(G_{n})_{\mathcal{L}} = \sum\limits_{x=1}^{2^{n-1}} \abs{ \sum\limits_{y=1}^{n} (-1)^{z_y^{x}}}$. Now, by suitably evaluating the quantity $(-1)^{z_y^{x}}$, we determine the local ontic bound $(G_n)_{\mathcal{L}}$ (see Appx.~\ref{thm1}). 
\end{proof}


\section{Results: Optimal quantum bound of $\mathscr{G}_n$}\label{qbsos}
In this section, we evaluate the optimal quantum bound of this $n$-settings Bell inequality and derives the necessary self-testing statements.

{\thm \label{thm2}The optimal quantum value of the $n$-settings Bell functional defined in Eq.~(\ref{bellfuncgen}) is
\begin{equation}\label{opbqf}
    (G_{n})_{\mathcal{Q}}^{opt}=2^{n-1}\sqrt{n}
\end{equation}
This value is attained if and only if the shared state and local observables satisfy the following structural properties.

The shared bipartite state must have full support on the joint kernel of the SOS operators associated with the Bell operator. Consequently, on the support of the state the local observables behave as Hermitian involutions (i.e. Hermitian operators whose square equals the identity, $O^2=\openone$, implying eigenvalues $\pm 1$) obeying pairwise anticommutation relations in expectation, viz.
\begin{equation} \label{antrelsup}
    \begin{aligned}
 \ev**{\qty{B_y,B_{y'}} }{\psi} &= 0 \ \forall y \neq y', \\
 \ev**{\qty{A_x,A_{x'}} }{\psi} &= \frac{2}{n} \sum_{y} (-1)^{z_y^x+z_{y}^{x'}} \ \forall x\neq x',
    \end{aligned}
\end{equation}
These relations imply that the support of the state decomposes as a direct sum of maximally entangled states supported on orthogonal local subspaces of dimension at least $m_k \geq 2^{\lceil \frac{1}{2}(n-1) \rceil}$. Writing the Schmidt decomposition in block form (where a Schmidt block denotes the pair of corresponding orthogonal subspaces $\mathcal{H}_A^{(k)} \subseteq \mathcal{H}_A$ and $\mathcal{H}_B^{(k)} \subseteq \mathcal{H}_B$ associated with a fixed Schmidt coefficient), the state can be expressed as
\begin{equation}\label{blockstatem}
    \ket{\psi}=\bigoplus_k \sqrt{\lambda^{(k)}m_k} \ket{\phi^{+}_{m_k}}, 
\end{equation}
where $\ket{\phi^{+}_{m_k}}$ denotes the maximally entangled state on $\mathcal{H}_A^{(k)} \otimes \mathcal{H}_B^{(k)}$ of local dimension $m_k$, and the coefficients $\lambda^{(k)}>0$ satisfy $\sum_k \lambda^{(k)}=1$.

Within each Schmidt block the observables preserve the block structure and form a representation of the complex Clifford algebra $\mathrm{Cl}_n\qty(\mathbb{C})$. In particular, the global observables decompose as
\begin{equation}
   A_x =\bigoplus_k A_x^{(k)}, \ B_y =\bigoplus_k B_y^{(k)}=\bigoplus_k \qty(\mathcal{A}_y^{(k)})^T,
\end{equation}
with
\begin{equation} \label{aliceobsy}
    \mathcal{A}_y^{(k)}=\frac{\sqrt{n}}{2^{n-1}} \sum\limits_{x=1}^{2^{n-1}} (-1)^{z_y^x} A_x^{(k)} \ \ \forall y \in [n].
\end{equation}
and the operators within each block satisfy
\begin{equation}\label{blockantic}
 \qty{B_y^{(k)},B_{y'}^{(k)}} = \qty{\mathcal{A}_y^{(k)},\mathcal{A}_{y'}^{(k)}} =0 \ \ \forall y\neq y'.
\end{equation}
Any degeneracy in the Schmidt coefficients corresponds solely to classical mixing between identical Clifford blocks and therefore does not alter the optimal quantum value.}

\begin{proof}

The optimal quantum value $(G_n)_\mathcal{Q}^{opt}$ of the Bell functional $\mathscr{G}_n$ defined in Eq.~(\ref{bellfuncgen}) is given by the optimisation problem
\begin{equation}
    ({G}_n)_\mathcal{Q}^{opt}= \sup_{\rho,A_{x},B_y}\ \Tr[\mathscr{G}_{n} \ \rho]
    \end{equation}
where the supremum is taken over all bipartite quantum states $\rho$ and local measurement operators $\qty{A_x}$ and $\qty{B_y}$. While the set of $\rho$ is convex, the joint optimisation over $\rho$ and measurement operators is intrinsically non-convex due to the bilinear dependence of the objective function $\Tr[\mathscr{G}_{n}\ \rho]$ on these variables \cite{Klep2024}. Consequently, direct optimisation is generally intractable. In practice, one may employ semidefinite programming (SDP) relaxations such as the NPA hierarchy to obtain upper bounds on $(G_n)_\mathcal{Q}^{opt}$ \cite{Navascués2007,Navascués2008}.

Here, we adopt the SOS relaxation approach, which reformulates the optimisation as a convex problem by expressing the shifted Bell operator as a sum of positive semidefinite terms \cite{Bamps2015, Ghorai2018},
\begin{equation} \label{starti}
    G_n \openone - \mathscr{G}_n = \sum_{x = 1}^{2^{n-1}} \frac{\omega_x}{2} \ M_x^{\dagger} M_x,
\end{equation}
where $\omega_{x}>0$ are real coefficients and $M_x$ are Hermitian polynomials in $\qty{A_x,B_y}$ acting on the composite Hilbert space $\mathcal{H}^d \otimes \mathcal{H}^d$. Since each $M^{\dagger}_x M_x\geq 0$, it follows that for any $\rho$,
\begin{equation} \label{opti}
    \Tr[\mathscr{G}_n \ \rho] \leq G_n
\end{equation}
Hence, $G_n$ provides an upper bound on the optimal quantum value, and the tightest bound corresponds to the minimal $G_n$ consistent with Eq.~(\ref{starti}),
\begin{equation} \label{sdpprog}
    \min_{M_x} \ G_n \quad \text{s.t.} \quad  (G_n)^{opt}_{\mathcal{Q}} \openone - \mathscr{G}_n = \sum_{x = 1}^{2^{n-1}} \frac{\omega_x}{2} \ M^{\dagger}_x M_x 
\end{equation}
We construct the explicit SOS decomposition by defining
\begin{equation}\label{omega}  
M_{x} = \frac{\openone\otimes \mathcal{B}_x}{\omega_{x}} - A_x \otimes \openone ; \ \ \omega_{x} =||\   \mathcal{B}_x  \ ||_{\rho}\geq 0, 
\end{equation}
where 
\begin{equation} \label{curlyBx}
    \mathcal{B}_x :=\sum\limits_{y=1}^{n} (-1)^{z^x_y} B_y \ \ \ \forall x
\end{equation}
are unnormalised Hermitian operators representing linear combinations of Bob’s observables, and $\norm{\cdot}_{\rho}$ denotes the state-weighted norm, $\norm{\mathcal{B}_x}_{\rho}=\sqrt{\Tr[\mathcal{B}_x^{\dagger}\mathcal{B}_x \ \rho]}$. Assuming dichotomic observables ($A_x^2=B_y^2=\openone$), straightforward algebra yields 
\begin{eqnarray}\label{betasos11}
  \sum_{x = 1}^{2^{n-1}} \frac{\omega_x}{2} \ M_x^{\dagger} M_x=  \qty(\sum\limits_{x=1}^{2^{n-1}}\omega_x)\openone - \mathscr{G}_n.
\end{eqnarray}
A detailed evaluation is provided in Appx.~\ref{detailedSOS}. Comparing coefficients of $\openone$ and $\mathscr{G}_n$ in Eq.~(\ref{sdpprog}) gives
\begin{equation}\label{betasos1}
	(G_n)_{\mathcal{Q}}^{opt}= \max\sum\limits_{x=1}^{2^{n-1}} \sqrt{\Tr[\mathcal{B}_x^{\dagger}\mathcal{B}_x \rho]}=2^{n-1}\sqrt{n}.
\end{equation}
where the final equality follows from the explicit optimisation presented in Appx.~\ref{detailedSOS}.

\paragraph*{Conditions for saturation: } Taking the expectation value of the SOS decomposition (Eq.~\ref{betasos11}) with the optimal state $\rho$, we find
\begin{equation}
\forall x, \   \sum_{x = 1}^{2^{n-1}} \Tr[ M^{\dagger}_x M_x \rho]=0
\end{equation}
which, since each term is non-negative, implies
\begin{equation}\label{kerneleq}
\forall x, \ M_x\ket{\psi}=0  \implies \qty(\openone \otimes \mathcal{B}_x) \ket{\psi}=\sqrt{n}  \qty(A_x\otimes\openone) \ket{\psi}.
\end{equation}
for any purification $\ket{\psi}$ of $\rho$. Each $M_x$ may therefore be regarded as a residual operator that vanishes on the support of the optimal state. In Appx.~\ref{optimalState}, we show that these relations directly imply the anticommutation conditions among the local observables given in Eq.~(\ref{antrelsup}).

\paragraph*{State and observable structure in Schmidt form:} Expressing the optimal state $\ket{\psi}$ in a fixed product basis,
\begin{equation} \label{statefixpro}
    \ket{\psi}=\sum_{i,j=1}^d \mu_{ij} \ket{i}_A \otimes \ket{j}_B=\mathrm{vec}\qty(\mathscr{M})
\end{equation}
where the vectorisation map $\mathrm{vec}: \mathbb{C}^{d \times d} \to \mathbb{C}^{d^2}$ is defined by $\mathrm{vec}\qty(\mathscr{M}):=\sum_{i,j=1}^d \mu_{ij} \ket{i}_A \otimes \ket{j}_B$, and performing the singular value decomposition $\mathscr{M}=U_A \mathcal{D} V_B^{\dagger}$, with $D=\text{diag}\qty(\lambda_1,\ldots,\lambda_r)$ containing the Schmidt coefficients $\sqrt{\lambda_i} > 0$ and $\sum_i \lambda_i =1$, the state becomes
\begin{equation}\label{statefixpro2}
    \ket{\psi}=\sqrt{r}\qty(U_A \otimes V^{\dagger}_B) \qty(\openone \otimes \mathcal{D}^{\frac{1}{2}}) \ket{\phi^{+}}.
\end{equation}
with $r=\mathrm{rank}\qty(\mathscr{M})$. Substituting into the optimality condition (Eq.~\ref{kerneleq}) and using the vectorisation identity $\qty(A \otimes B) \mathrm{vec}(X)=\mathrm{vec}\qty(AXB^T)$, we obtain
\begin{equation}\label{mainoprel}
  \qty(U_A^{\dagger}\mathcal{A}_y U_A) \mathcal{D}^{\frac{1}{2}} = \mathcal{D}^{\frac{1}{2}} \qty(V_B^{\dagger} B_y^T V_B)
\end{equation}
Redefining the unitarily transformed local observables $\mathcal{A}_y \equiv U_A \mathcal{A}_y U_A^{\dagger}$ and $B_y\equiv V_B B_y V_B^{\dagger}$ with corresponding Schmidt bases $\ket{u_i}:=U_A \ket{i}_A$ and $\ket{v_i}:=V_B \ket{i}_B$ with $i \in \qty{1,2,\ldots,r}$, the operator condition simplifies to $\mathcal{A}_y \mathcal{D}^{\frac{1}{2}}=\mathcal{D}^{\frac{1}{2}}B_y^T \ \forall y$.

In the Schmidt basis, $\mathcal{D}$ can be grouped according to its distinct eigenvalues $\lambda^{(k)}$ with multiplicity $m_k$ for $k=1,\ldots,\alpha$, such that $\sum_{i=1}^{\alpha} m_{k}=r$.  Thus, $\mathcal{D}=\oplus_{k=1}^{\alpha} \lambda^{(k)} \openone_{m_k}$, and state decomposes as
\begin{equation}\label{statescmidtgl}
    \ket{\psi}= \bigoplus\limits_{k=1}^{\alpha} \sqrt{\lambda^{(k)} m_k} \ket{\phi^{+}_{m_k}},
\end{equation}
where each $\ket{\phi^{+}_{m_k}} \in \mathcal{H}^{m_k}_A \otimes \mathcal{H}^{m_k}_B$ is maximally entangled state on its degeneracy block.

From Eq.~(\ref{mainoprel}) it follows that both $\mathcal{A}_y$ and $B_y$ commute with $\mathcal{D}$ and are therefore block-diagonal in the same Schmidt decomposition
\begin{equation} \label{obsscmidtgl}
 \forall y, \  \mathcal{A}_y = \bigoplus\limits_{k=1}^{\alpha} \mathcal{A}_y^{(k)}, \ \ \ \text{and }  B_y = \bigoplus\limits_{k=1}^{\alpha} B_y^{(k)}, \ \ \ \mathcal{A}_y^{(k)}, B_y^{(k)} \in \mathscr{L}\qty(\mathcal{H}^{(k)}).
\end{equation}
Within each block, the relation 
\begin{equation}\label{fmainoprel}
    \mathcal{A}_y \mathcal{D}^{\frac{1}{2}}=\mathcal{D}^{\frac{1}{2}}B_y^T \implies \mathcal{A}_y^{(k)} = \qty(B_y^{(k)})^T, \ \forall y,k,
\end{equation}
establishing that Alice’s and Bob’s local observables coincide on each degenerate Schmidt subspace and satisfy identical algebraic relations.

If all Schmidt coefficients are distinct ($\alpha=r$, $m_k=1$), the observables are diagonal and commute, precluding Bell inequality violation. Conversely, if all coefficients are equal ($D=\frac{1}{r}\openone$), the shared state is maximally entangled, and the optimality conditions enforce that the observables form a family of pairwise anticommuting Hermitian involutions satisfying Eq.~(\ref{blockantic}). These operators thus generate a representation of the complex Clifford algebra $\mathrm{Cl}_n\qty(\mathbb{C})$, whose minimal complex representation has local dimension $m_*=2^{\lfloor n/2 \rfloor}$. Accordingly, each degenerate block must have dimension at least $m_k \geq m_*$.

Combining these results, any optimal quantum realisation saturating $\qty(G_n)^{opt}_Q=2^{n-1}\sqrt{n}$ must take the form of Eqs.~(\ref{blockstatem}-\ref{blockantic}). Each block realises an irreducible complex Clifford algebra representation, unique up to local unitaries, while any non-trivial degeneracy corresponds only to classical mixing between identical blocks.
\end{proof}
\begin{remark}
Note that the transpose appearing in the relation $\mathcal{A}_y \mathcal{D}^{\frac{1}{2}}=\mathcal{D}^{\frac{1}{2}}B_y^T$ in Eq.~(\ref{fmainoprel}) arises solely from the adopted vectorisation convention. In the Schmidt basis, where the maximally entangled state is real and symmetric, transposition acts trivially, i.e., $\qty(\mathcal{A}_y \otimes B^T_y) \ket{\phi^+}=\qty(B_y \otimes \mathcal{A}_y^T) \ket{\phi^+}$. Thus, the two quantum realisations $\qty{A_x,B_y^T,\ket{\psi}}$ and its transposed (or complex-conjugated) counterpart $\qty{A_x^T,B_y,\ket{\psi}}$ attains the same maximal Bell violation and are therefore physically indistinguishable with respect to all measurable Bell correlations, even though they are not, in general, related by a local unitary transformation. To distinguish between such conjugate strategies, one must impose additional constraints on the observed statistics beyond standard Bell correlations, as discussed in Ref.~\cite{Andersson2017} in the context of the self-testing properties of Gisin’s elegant Bell inequality~\cite{Gisin2007}.
\end{remark}
These considerations underpin the self-testing analysis developed in Sec.~\ref{selftesting}, where we show that attaining the maximal quantum violation uniquely identifies, up to local unitaries and complex conjugation, the underlying state and observables realising the canonical Clifford-algebraic structure.


\section{Results: Device-Independent Characterisation of the Optimal Quantum Strategy - Self-Testing from the Maximal Violation of $\mathscr{G}_n$} \label{selftesting}

Having established the optimal quantum bound and characterised all realisations that saturate it, we now turn to the self-testing properties of the Bell functional $\mathscr{G}_n$.

Self-testing provides a DI certification of both the shared quantum state and the local observables solely from the observed violation, without assuming any knowledge of the internal workings or the underlying Hilbert space dimension of the quantum system. Consider an arbitrary physical realisation
\begin{equation}\label{idstr}
    \qty{A_{x}\in\mathcal{L}\qty(\mathcal{H}_{A}), B_{y}\in\mathcal{L}\qty(\mathcal{H}_B), \ket{\psi}\in\mathcal{H}_{A}\otimes \mathcal{H}_{B}},
\end{equation}
which achieves the optimal quantum value of the Bell functional $\mathscr{G}_n $ in the ideal, noiseless scenario. The aim of self-testing is to show that any such physical realisation, reproducing the same correlations as the optimal quantum strategy must, up to local isometries and irrelevant degrees of freedom, be equivalent to a fixed reference experiment of fixed local dimension (up to global complex conjugation), denoted by
\begin{equation}\label{refstr}
   \qty{A'_{x}\in\mathcal{L}\qty(\mathcal{H}_{A'}), B'_{y}\in\mathcal{L}\qty(\mathcal{H}_B'), \ket{\psi'}\in\mathcal{H}_{A'}\otimes \mathcal{H}_{B'}},
\end{equation}
The operator relations derived for the optimal violation, particularly the anticommutation conditions and the algebraic structure of the local measurements, enable one to reconstruct the exact form of the target quantum strategy up to local isometries and complex conjugation.

The following analysis formalises this claim and establishes explicit self-testing relations for the optimal quantum strategy associated with $\mathscr{G}_n$.

\begin{thm}[Blockwise self-testing of the Clifford realisation] \label{bscr}
If a quantum strategy $\qty{A_{x}^{(k)}, B_{y}^{(k)}, \ket{\psi^{(k)}}}$ achieves the optimal quantum value of the Bell functional $\mathscr{G}_n$ within the $k^{th}$ Schmidt block, satisfying
\begin{equation}\label{gloobsprop}
\qty{\mathcal{A}_{y}^{(k)},\mathcal{A}_{y'}^{(k)}}=\qty{B_{y}^{(k)},B_{y'}^{(k)}}=2 \delta_{yy'}, \ \ \mathcal{A}_{y}^{(k)}=\qty(B_y^{(k)})^T,
\end{equation}
then there exist local isometries
\begin{equation} \label{hspliiso}
    \Phi_A^{(k)}:\mathcal{H}_A^{(k)}\to \mathcal{H}_{A'}^{(k)} \otimes \mathcal{H}_{J_A^{(k)}}, \ \ \Phi_B^{(k)}:\mathcal{H}_B^{(k)}\to \mathcal{H}_{B'}^{(k)} \otimes \mathcal{H}_{J_B^{(k)}},
\end{equation}
implemented by local unitaries $U_A^{(k)}$ and $V_B^{(k)}$, such that
\begin{equation} \label{obscliff}
 \Phi_B^{(k)}B_{y}^{(k)}\qty(\Phi_B^{(k)})^{\dagger}=\Gamma_y \otimes \openone_{J_B^{(k)}}, \ \ \Phi_A^{(k)}\mathcal{A}_{y}^{(k)}\qty(\Phi_A^{(k)})^{\dagger}=\Gamma_y^T \otimes \openone_{J_A^{(k)}},
\end{equation}
for all $y\in[n]$ and the isometry acts on the optimal physical state as
\begin{equation} \label{statecliff}
    \qty(\Phi_A^{(k)}\otimes \Phi_B^{(k)}) \ket{\psi^{(k)}} = \ket{\phi^+_{m_{*}}}_{A'B'}\otimes \ket{junk}_{J_A^{(k)}J_B^{(k)}},
\end{equation}
where  $\ket{junk}_{J_A^{(k)}J_B^{(k)}}$ is an auxiliary (uncorrelated) state. The operators $\qty{\Gamma_y}$ denote the canonical Clifford generators, acting irreducibly on a minimal subspace of dimension $m_{*}=2^{\lfloor n/2 \rfloor}$. The subsystems $\mathcal{H}_{J_A^{(k)}}$ and $\mathcal{H}_{J_B^{(k)}}$ represent ancillary `junk' spaces that are uncorrelated and do not contribute to the observed correlations.
\end{thm}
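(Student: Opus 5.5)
Within the $k$-th Schmidt block, Theorem~\ref{thm2} gives us two facts. First, the reduced state is maximally entangled, $\ket{\psi^{(k)}}=\ket{\phi^+_{m_k}}$ (after normalisation). Second, the operators $\{B_y^{(k)}\}_{y=1}^n$ are Hermitian involutions that pairwise anticommute, as in Eq.~(\ref{gloobsprop}). The plan is to invoke the structure theory of finite-dimensional representations of $\mathrm{Cl}_n(\mathbb{C})$, which is semisimple.

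The argument splits according to the parity of $n$.
\begin{itemize}
\item If $n$ is even, $\mathrm{Cl}_n(\mathbb{C})\cong M_{m_*}(\mathbb{C})$ is simple with $m_*=2^{n/2}$. Every $*$-representation on $\mathcal{H}_B^{(k)}$ is therefore unitarily equivalent to $\Gamma_y\otimes\openone_{J}$ with $\dim J=m_k/m_*$. Concretely, the unitary is built from the eigenspaces of the commuting products $i\,B_{2j-1}^{(k)}B_{2j}^{(k)}$, a Jordan--Wigner-type decomposition.
\item If $n$ is odd, $\mathrm{Cl}_n(\mathbb{C})\cong M_{m_*}\oplus M_{m_*}$ with $m_*=2^{\lfloor n/2\rfloor}$. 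The central element $\omega=i^{\lfloor n/2\rfloor}B_1\cdots B_n$ is a Hermitian involution, and the representation splits into its $\pm1$ eigenspaces. On these, $B_n=\pm(\text{fixed product of }B_1,\dots,B_{n-1})$. These are the two inequivalent irreps, related by $\Gamma_n\mapsto-\Gamma_n$, which is where ``up to complex conjugation'' enters. On each eigenspace we again obtain $\Gamma_y\otimes\openone$.
\end{itemize}
Call the resulting unitary $V_B^{(k)}$. It defines $\Phi_B^{(k)}$ by identifying $\mathcal{H}_B^{(k)}\cong\mathbb{C}^{m_*}\otimes\mathcal{H}_{J_B^{(k)}}$, which gives the first half of Eq.~(\ref{obscliff}).

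For Alice we avoid a separate representation-theoretic argument. We set $U_A^{(k)}:=\overline{V_B^{(k)}}$, the complex conjugate. Then $\mathcal{A}_y^{(k)}=(B_y^{(k)})^T$ transforms as
\[
\overline{V}\,B^T\,V^T=(VBV^\dagger)^T=\Gamma_y^T\otimes\openone,
\]
which is the second half of Eq.~(\ref{obscliff}). For the state we use the transpose trick $(X\otimes\openone)\ket{\phi^+}=(\openone\otimes X^T)\ket{\phi^+}$. This gives $(\overline{V}\otimes V)\ket{\phi^+_{m_k}}=\ket{\phi^+_{m_k}}$. Since the tensor-product identification maps the canonical basis of $\mathbb{C}^{m_k}$ to a product basis $\ket{i}\otimes\ket{j}$, the vector $\ket{\phi^+_{m_k}}$ factorises as $\ket{\phi^+_{m_*}}_{A'B'}\otimes\ket{\phi^+_{m_k/m_*}}_{J_AJ_B}$. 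This yields Eq.~(\ref{statecliff}), with junk a maximally entangled state that is untouched by all observables. In the odd case we either include $\omega$'s eigenspace label in the junk or absorb the sign via conjugation, and note that Alice's $\mathcal{A}_y$ and Bob's $B_y$ pick the same sector, since they are linked by transposition on $\ket{\phi^+}$.

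The main obstacle is the odd-$n$ case. There we must handle the two inequivalent irreducible representations and argue that the correlations cannot distinguish them, so that a block mixing both sectors still maps to the reference strategy. Two points need work. First, the direct sum of the $\omega=\pm1$ sectors has to be recast as ``$\Gamma_y$ or its conjugate $\overline{\Gamma_y}$'', which is valid because the canonical generators can be chosen real or imaginary so that conjugation flips exactly the sign of $\omega$. Second, the sector label has to be stored as a classical flag in the junk. Since the statement asserts $\Gamma_y\otimes\openone$, I would first check whether the chosen canonical generators make the conjugate irrep coincide with the $\omega=-1$ sector. If they do, the result holds up to the global complex conjugation allowed by the definition of self-testing; otherwise the sector flag must be retained explicitly.
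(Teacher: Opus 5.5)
Your route differs from the paper's. The paper builds $V_B^{(k)}$ and $U_A^{(k)}$ by an explicit recursion: it diagonalises $B_1$, uses the unitary off-diagonal block $X_2$ of $B_2$ to bring $B_2$ to $-\sigma_y$, passes to the $n-2$ anticommuting involutions $iX_sX_2^\dagger$, and repeats. It then extracts the state separately from the stabiliser conditions $(\Gamma_y^T\otimes\Gamma_y)\,\mathcal{U}^{(k)}\ket{\psi^{(k)}}=\mathcal{U}^{(k)}\ket{\psi^{(k)}}$, peeling off one $\ket{\phi^+}$ per pair of generators. You instead use the Wedderburn structure of $\mathrm{Cl}_n(\mathbb{C})$ and set $U_A^{(k)}=\overline{V_B^{(k)}}$, so Alice needs no separate argument. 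You then get the state in one line from $(\overline{V}\otimes V)\ket{\phi^+_{m_k}}=\ket{\phi^+_{m_k}}$ and the product-basis factorisation of $\ket{\phi^+_{m_k}}$, using the blockwise maximal entanglement that Theorem~\ref{thm2} already supplies. For even $n$ this is complete, and cleaner than the paper's argument.

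The gap is the odd-$n$ case. You correctly locate what the paper's recursion passes over. Once the $\lfloor n/2\rfloor$ pairs are stripped off, $B_n$ becomes a fixed Pauli string tensored with a Hermitian involution $W$ on the junk, and the $\pm1$ eigenspaces of $W$ are unconstrained. So $\Gamma_n\otimes\openone_J$ does not follow.

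Your repair, however, rests on the claim that the generators can be chosen so that conjugation flips $\omega$. That is false in general, because the effect of conjugation on $\omega$ does not depend on the basis. Let $m=\lfloor n/2\rfloor$ and $\omega=i^m\Gamma_1\cdots\Gamma_n=s\openone$. Then $\bar\Gamma_1\cdots\bar\Gamma_n=\overline{i^{-m}s}\,\openone=i^{m}s\,\openone$, hence $\omega(\bar\Gamma)=(-1)^m s\,\openone$.

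Conjugation therefore exchanges the two irreps only for $n\equiv 3\pmod 4$. There (e.g.\ $n=3$, the elegant Bell inequality) your flag-plus-conjugation conclusion is the right one.

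For $n\equiv 1\pmod 4$ (e.g.\ $n=5$) each irrep is self-conjugate, and the statement cannot be proved at all. Take the strategy $B_y=-\Gamma_y$, $A_x=-\frac{1}{\sqrt n}\sum_y(-1)^{z^x_y}\Gamma_y^T$ (so $\mathcal{A}_y=-\Gamma_y^T$) on $\ket{\phi^+_{m_*}}$. It attains $2^{n-1}\sqrt n$, satisfies the hypotheses, and reproduces all statistics, since the marginals vanish. Now suppose an isometry satisfies $\Phi_B B_y\Phi_B^\dagger=\Gamma_y\otimes\openone_J$, or the same with $\bar\Gamma_y$ in place of $\Gamma_y$. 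Then $\Phi_B B_y=(\Gamma_y\otimes\openone_J)\Phi_B$ (respectively with $\bar\Gamma_y$). So $\Phi_B$ intertwines the $\omega=-s$ irrep with copies of the $\omega=s$ irrep, and Schur's lemma forces $\Phi_B=0$.

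For these $n$ your argument can only end at the flagged form $\Phi_B B_y\Phi_B^\dagger=\Gamma_y\otimes P_+ +\Gamma'_y\otimes P_-$. Here $P_\pm$ are the sector projectors carried by the junk, $\Gamma'_y=\Gamma_y$ for $y<n$, and $\Gamma'_n=-\Gamma_n$. The flag is automatically correlated between $J_A$ and $J_B$ through $U_A^{(k)}=\overline{V_B^{(k)}}$. The theorem must therefore be weakened to admit such a sign-flipped sector, rather than proved as written.
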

\begin{proof}
For clarity, we work with the Alice's effective observables $\mathcal{A}_y$, as defined in Eq.~(\ref{aliceobsy}). Within each Schmidt block $k$, the physical experiment is equivalent to the reference experiment in the sense that there exists a local unitary transformation which decomposes the Hilbert space into a known reference subspace and an auxiliary `junk' subspace, as expressed in Eq.~(\ref{hspliiso}). 

The existence of such a local unitary immediately implies the existence of a corresponding local isometry capable of extracting the known reference strategy by discarding (or tracing out) the junk subsystem. One of the standard constructive ways to realise this is through the SWAP-isometry method \cite{McKague2012, Yang2013, Supic2020}, in which an isometry acts jointly on the physical system and a suitably prepared ancillary system. The isometry effectively `swaps' the properties from the unknown physical subsystem into a well-defined ancillary register of known dimension, thereby reproducing the ideal reference strategy while isolating irrelevant degrees of freedom.

Operationally, each isometry can be implemented by a local unitary acting on an extended system as $\Phi_A^{(k)}\qty(\cdot)=U^{(k)}_A \qty(\cdot \otimes \ket{0}_{J_A^{(k)}})$, and $\Phi_B^{(k)}\qty(\cdot)=U^{(k)}_B \qty(\cdot \otimes \ket{0}_{J_B^{(k)}})$. We now demonstrate the existence of local unitaries $U_A^{(k)}$ and $V_B^{(k)}$ implementing the isometries of Eq.~(\ref{hspliiso}). The construction proceeds by iteratively block-diagonalising the observables to obtain the canonical Clifford representation. To show the iterative relation we are relabelling the observable as $B_y^{(k)}$ to $B_{y,n}^{(k)}$, and similarly for Alice. We first diagonalise $B_{1,n}^{(k)}$ using a suitable local unitary $V_1^{(k)} \in \mathcal{L}\qty(\mathcal{H}_B^{(k)})$ such that
\begin{equation}\label{b1m}
    \mathtt{B}_{1,n}^{(k)} = V_1^{(k)}B_{1,n}^{(k)}\qty(V_1^{(k)})^\dag=\sigma_z \otimes \openone_{m_*/2}\otimes\openone_{J_B}
\end{equation}
where $m_{*}=\lceil \frac{1}{2}\qty(n-1)\rceil$ and $\openone_{m_*/2}$ is a $\qty(m_*/2)\times\qty(m_*/2)$ identity matrix. As $\{B_{1,n}^{(k)},B_{s,n}^{(k)}\}=0 \implies \{\mathtt{B}_{1,n}^{(k)},\mathtt{B}_{s,n}^{(k)}\}=0 \ \forall s\in\{2,3,4,\dots,n\}$. As a consequence, in the same basis where $\mathtt{B}_{1,n}^{(k)} $ is diagonalised via $V_1$, the remaining observables becomes off-diagonal and can be defined as $\mathtt{B}_{s,n}^{(k)} = V_1^{(k)}B_{s,n}^{(k)}(V_1^{(k)})^\dag $ such that $\mathtt{B}_{s,n}^{(k)}$ are of the form
\begin{equation}
    \mathtt{B}_{s,n}^{(k)} = \begin{bmatrix}
        0 & X_{s,n}\\
        X_{s,n}^\dag & 0
    \end{bmatrix} ; \ \forall s\geq2
\end{equation}
We next define a unitary
\begin{equation}\label{Un}
    V_{2}^{(k)} = \begin{bmatrix}
        \openone_{m_*/2} & 0\\
        0 & -\iota X_{2,n}
    \end{bmatrix} \in \mathscr{L}\qty(\mathcal{H}_{B'}^{(k)}),
\end{equation}
which maps
\begin{equation}\label{b2m}
  V_{2}^{(k)}\mathtt{B}_{2,n}^{(k)} \qty(V_{2}^{(k)})^\dag = \qty(-\sigma_y\otimes  \openone_{m_*/2})\otimes\openone_{J_B}
\end{equation}
while preserving the anticommutation with $\mathtt{B}_{1,n}^{(k)}$. Now, using $\qty{\mathtt{B}_{2,n}^{(k)},\mathtt{B}_{s,n}^{(k)}} = 0$ for all $s\in\{3,4,\dots,n\}$, the remaining observables then take the recursive form
\begin{equation}\label{simp}
\begin{aligned}
   V_{2}^{(k)}\mathtt{B}_{s,n}^{(k)}\qty(V_{2}^{(k)})^\dag &= \begin{bmatrix}
        0 & \iota X_{s,n}X_{2,n}^\dag\\
        \iota X_{s,n} X_{2,n}^\dag & 0
    \end{bmatrix} = \sigma_x \otimes \iota X_{s,n} X_{2,n}^\dag
    \end{aligned}
\end{equation}
We have already obtained explicit forms of $\mathtt{B}_{1,n}^{(k)}$ and $\mathtt{B}_{2,n}^{(k)}$ from Eqs.~(\ref{b1m}) and (\ref{b2m}) in the known dimension $m_*$. We now seek the explicit representation of the remaining $(n-2)$ mutually anticommuting observables. To this end, we begin by defining $\mathscr{B}_{s,n} = \iota X_{s,n} X_{2,n}^\dag \ \forall s\in \{3,4,\dots,n\}$. By applying the same arguments used at the beginning of the proof, we obtain $\mathscr{B}_{3,n} = \sigma_z\otimes\openone_{m_*/4}\otimes\openone_{J_B}$ and $\mathscr{B}_{4,n} = -\sigma_y\otimes\openone_{m_*/4}\otimes\openone_{J_B}$. We repeat this procedure up to $s=n$, following the details in \cite{Singh2025njp}, from which we obtain the required recurrence relation.
\begin{eqnarray}\label{byp}
    V_{B}^{(k)}B_{y,n}^{(k)}\qty(V_{B}^{(k)})^\dag=
    \begin{cases}
    B_{y,n}^\prime  \otimes \openone_{J_B} & \forall y\in \{1,2\} \vspace{0.3 cm}\\
\sigma_{x} \otimes B^\prime_{{y-2},{n-2}}  \otimes \openone_{J_B}& \forall y \in \{3,4,\dots,n\} 
    \end{cases}
\end{eqnarray}
where $ B^\prime_{1,k} = \sigma_z\otimes\openone_{2^{l-1}}$ and $B^\prime_{2,k} = -\sigma_y\otimes\openone_{2^{l-1}}$ with $l=\lceil \frac{1}{2} \qty(s-1)\rceil \ \forall s \in [n]$.

An identical recursive argument holds for Alice’s observables $\mathcal{A}_{y,n}^{(k)}$ using unitaries $U_A^{(k)}\in \mathscr{L}\qty(\mathcal{H}_A^{(k)})$. We start with the observable $\mathcal{A}^{(k)}_{1,n}$, which we will diagonalise using the unitary $U_1^{(k)}$.Using the anticommutativity relation $\{\mathcal{A}^{(k)}_{1,n},\mathcal{A}^{(k)}_{s,n}\}=0 \ \forall s\geq 2$, we get
\begin{equation}
    U_1^{(k)}\mathcal{A}_{s,n}^{(k)}\qty(U_1^{(k)})^\dag = \begin{bmatrix}
        0 & Y_{s,n}\\
        Y^\dag_{s,n} & 0
    \end{bmatrix} \ \ \forall s\geq 2
\end{equation}
Next, we construct a unitary similar to Eq.~(\ref{Un}) of the form 
\begin{equation}
    U_{2}^{(k)}=\begin{bmatrix}
        \openone_{m_*/2} & 0\\
        0 &\iota Y_{2,n}
    \end{bmatrix}
\end{equation}
And for the rest of the observables, the repetitive recursion steps will lead to  
\begin{eqnarray}\label{ayp}
    U_{A}^{(k)}\mathcal{A}_{y,n}^{(k)}\qty(U_{A}^{(k)})^\dag=
    \begin{cases}
   \mathcal{A}_{y,n}^\prime \otimes \openone_{J_A}   & \forall y\in \{1,2\} \vspace{0.3 cm}\\
\sigma_{x} \otimes \mathcal{A}^\prime_{{y-2},{n-2}}\otimes \openone_{J_A} & \forall y \in \{3,4,\dots,n\} 
    \end{cases}
\end{eqnarray}
where $\mathcal{A}^\prime_{1,k} = \sigma_z\otimes\openone_{2^{l-1}}$ and $\mathcal{A}^\prime_{2,k} = \sigma_y\otimes\openone_{2^{l-1}} $ with $l=\lceil \frac{1}{2} \qty(s-1)\rceil \  \forall s \in [n]$. Thus the pair $\qty(U_A^{(k)},V_B^{(k)})$ maps the observables to their canonical Clifford form $\qty{\Gamma_y}$, acting on the minimal subspace of dimension $m_{*}=2^{\lfloor n/2 \rfloor}$, while leaving ancillary subsystems untouched, as given in Eq.~(\ref{obscliff}). 

Under this unitary transformation $\mathcal{U}^{(k)}=U_A^{(k)}\otimes V_B^{(k)}$, the observables take the canonical Clifford form
\begin{equation} \label{canoobsblock}
    \mathcal{U}^{(k)}\qty(\mathcal{A}_y^{(k)} \otimes B_y^{(k)})\qty(\mathcal{U}^{(k)})^\dagger =\Gamma_y^T \otimes \Gamma_y,
\end{equation}
and the block state transformation $\mathcal{U}^{(k)} \ket{\psi^{(k)}}$ therefore satisfies the following relations
\begin{equation}
    \qty(\Gamma_y^T \otimes \Gamma_y) \mathcal{U}^{(k)} \ket{\psi^{(k)}} = \mathcal{U}^{(k)} \ket{\psi^{(k)}}, \ \ \forall y \in [n].
\end{equation}
By expanding the transformed state in the canonical computational basis
\begin{widetext}
    \begin{equation}
    \mathcal{U}^{(k)}\ket{\psi^{(k)}}=\sum_{i_1i_2\ldots i_n,j_1,j_2\ldots j_n\in \{0,1\}}\ket{i_{1}i_{2}\ldots i_{n}j_{1}j_{2}\ldots j_{n}}_{A^\prime B^\prime}\otimes \ket{\psi^{(k)}_{i_{1}i_{2}...i_{n}j_{1}j_{2}...j_{n}}}_{{J_A^{(k)}} {J_B^{(k)}}}
\end{equation}
\end{widetext}
where, the bit strings $(i_1i_2\ldots i_n)$ and $(j_1,j_2\ldots j_n)$ label the basis of the reference subsystems $\mathcal{H}_{A'}^{(k)}$ and $\mathcal{H}_{B'}^{(k)}$. Define recursively, as before,
\begin{equation}\label{pside}
  \ket{\psi^{(k)}_{s,n}(i_1,..,i_{s-1},j_1,..,j_{s-1})} = \sum_{i_r, j_r,r\geq s} \qty(\Motimes_{r=s}^n\ket{i_r j_r})\otimes\ket{\psi^{(k)}_{i_{1}..i_{n}j_{1}..j_{n}}}
\end{equation}
so that for $m=1$, $\ket{\psi^{(k)}_{1,n}(i_1,j_1)} =\mathcal{U}^{(k)}\ket{\psi^{(k)}}$. Hence
\begin{equation}\label{m1}    \mathcal{U}^{(k)}\ket{\psi^{(k)}}=\sum_{i_{1},j_{1}}\ket{i_{1}j_{1}}\otimes \ket{\psi^{(k)}_{2,n}(i_{1},j_{1})}.
\end{equation}
Applying the stabiliser for $y=1$, for the first Clifford generator,
\begin{equation}
    \Gamma_1^T \otimes \Gamma_1 = \qty(\sigma_z \otimes \sigma_z) \otimes \openone_{m_*/2} \otimes \openone_{m_*/2}
\end{equation}
the stabiliser condition gives
\begin{equation} \label{y1relatmea}
\begin{aligned}
       \qty(\Gamma_1^T \otimes \Gamma_1 ) \sum_{i_1,j_1}\ket{i_1j_1}\ket{\psi^{(k)}_{2,n}(i_{1},j_{1})}
        = \sum_{i_1,j_1}\ket{i_1j_1}\ket{\psi^{(k)}_{2,n}(i_{1},j_{1})}
\end{aligned}
\end{equation}
Now, lets evaluate the action of $\sigma_z\otimes\sigma_z$. Since in the computational basis $\sigma_z\ket{0}=+\ket{0}$ and $\sigma_z\ket{1}=-\ket{1}$, 
\begin{equation}
    \sigma_z\otimes\sigma_z\ket{i_1j_1}=(-1)^{i_1+j_1}\ket{i_1j_1}
\end{equation}
Substituting this action into Eq.~(\ref{y1relatmea}), we find
\begin{equation} \label{y1relatmea1}
    \sum_{i_1,j_1} (-1)^{i_1+j_1} \ket{i_1,j_1}\ket{\psi^{(k)}_{2,n}(i_{1},j_{1})}=\sum_{i_1,j_1}\ket{i_1,j_1}\ket{\psi^{(k)}_{2,n}(i_{1},j_{1})}
\end{equation}
Since the computational basis vectors $\ket{i_1j_1}$ are linearly independent, the coefficients of each basis element must match on both sides of the equality, i.e., for every pair $(i_1,j_1)$
\begin{equation}\label{y1relatmea1}
    (-1)^{i_1+j_1} \ket{i_1,j_1}\ket{\psi^{(k)}_{2,n}(i_{1},j_{1})}=\ket{i_1,j_1}\ket{\psi^{(k)}_{2,n}(i_{1},j_{1})}
\end{equation}
If $i_1+j_1$ is even (i.e., for even parity when both bits are equal, $i_1,j_1=00$ or $11$), $\ket{\psi^{(k)}_{2,n}(i_{1},j_{1})}$ remains unrestricted. If $i_1+j_1$ is odd (i.e., for odd parity when both bits differs, $i_1,j_1=01$ or $10$), $\ket{\psi^{(k)}_{2,n}(i_{1},j_{1})}=0$. Thus, the state has no support on the basis vectors corresponding to opposite computational outcomes between Alice and Bob. Only the even-parity components ($\ket{00}$ and $\ket{11}$) survive,
\begin{equation}
    \mathcal{U}^{(k)}\ket{\psi^{(k)}} = \ket{00}\otimes\ket{\psi^{(k)}_{2,n}(0,0)} + \ket{11}\otimes\ket{\psi^{(k)}_{2,n}(1,1)} 
\end{equation}
Applying the stabiliser for $y=2$, for the second Clifford generator,
\begin{equation}
    \Gamma_2^T\otimes\Gamma_2=\qty(\sigma_y \otimes - \sigma_y)\otimes \openone_{m_*/2} \otimes \openone_{m_*/2}
\end{equation}
and the relation
\begin{equation}
(\Gamma_2^T\otimes\Gamma_2) \sum_{i_1,j_1}\ket{i_1j_1}\ket{\psi^{(k)}_{2,n}(i_{1},j_{1})}=\sum_{i_1,j_1}\ket{i_1j_1}\ket{\psi^{(k)}_{2,n}(i_{1},j_{1})}
\end{equation}
yields the constraint $\ket{\psi^{(k)}_{2,n}(0,0)}=\ket{\psi^{(k)}_{2,n}(1,1)}$. Hence, the state in Eq.~(\ref{m1}) becomes
\begin{equation}
    \mathcal{U}\ket{\psi^{(k)}} = \qty(\ket{00}+\ket{11})\otimes\ket{\psi^{(k)}_{2,n}(0,0)}
\end{equation}
Up to normalisation, the first two qubits form a maximally entangled pair $\ket{\phi^+}=\qty(\ket{00}+\ket{11})/\sqrt{2}$. Note that for $n=2,3$, the state will reduce to a single copy of the maximally entangled two-qubit state as
\begin{equation}
    \mathcal{U}\ket{\psi^{(k)}} =  \ket{\phi^+}_{A'B'}\otimes\ket{junk}_{J_{A^{(k)}}J_{B^{(k)}}}
\end{equation}
where $\ket{junk}_{J_{A^{(k)}}J_{B^{(k)}}}=\sqrt{2}\ket{\psi^{(k)}_{2,n}(0,0)}$. Proceeding with the next anticommuting pair $\qty(\Gamma_4,\Gamma_5)$ acting on the subsequent tensor factors,
and repeating the same argument recursively, one finds that at each iteration $s$ (see Eq.~(\ref{simp}) and the discussion below it) the state factors an additional copy of $\ket{\phi^+}$,
\begin{equation}
    \mathcal{U}^{(k)}\ket{\psi^{(k)}} = \ket{\phi^+}^{\otimes \lfloor n/2 \rfloor} \otimes \ket{junk}_{J_A^{(k)}J_B^{(k)}}
\end{equation}
where $\ket{junk}_{J_A^{(k)}J_B^{(k)}} = 2^{\frac{1}{2}\lfloor \frac{n}{2} \rfloor}\ket{\psi(0,\dots,0)}_{J_A^{(k)}J_B^{(k)}}$ denotes a normalised ancillary state supported on the remaining degrees of freedom. 

Therefore, under the local unitary transformation $\mathcal{U}^{(k)}$, each Schmidt block $k$ of the optimal strategy is mapped onto $\mathcal{U}^{(k)}\ket{\psi^{(k)}} = \ket{\phi^+_{m_*}} \otimes \ket{junk}_{J_A^{(k)}J_B^{(k)}}$, where $m_*=2^{\lfloor n/2 \rfloor}$ is the minimal dimension supporting the irreducible Clifford representation.
Hence, each block contains a maximally entangled state on its irreducible support, tensored with uncorrelated ancillas that play no role in the Bell correlations.
\end{proof}

As an example, we provide a detailed explanation of self-testing of observables and state for the $n=4$ case in the Appx.~\ref{apxn4}.

The block-wise theorem guarantees the existence of local isometries acting within each Schmidt subspace that map the corresponding physical operators and states to their canonical Clifford-algebraic form. To extend this result to the entire Hilbert space, these block-wise isometries must be coherently combined into global maps acting on the full systems of Alice and Bob. The following lemma formalises this construction and shows that the direct sum of orthogonally defined isometries remains an isometry on the total space.

\begin{Lemma}[Direct-sum composition of local isometries] \label{dscli}
    Let $\qty{\Phi_A^{(k)}:\mathcal{H}_A^{(k)}\to \mathcal{H}_{A'}^{(k)} \otimes \mathcal{H}_{J_A^{(k)}}}_{k=1}^{\alpha}$ and $\qty{\Phi_B^{(k)}:\mathcal{H}_B^{(k)}\to \mathcal{H}_{B'}^{(k)} \otimes \mathcal{H}_{J_B^{(k)}}}_{k=1}^{\alpha}$ be families of local isometries defined on mutually orthogonal subspaces $\mathcal{H}_A=\bigoplus_{k=1}^{\alpha} \mathcal{H}_A^{(k)}$ and $\mathcal{H}_B=\bigoplus_{k=1}^{\alpha} \mathcal{H}_B^{(k)}$. Then the maps
    \begin{equation}\label{algiso}
     \Phi_A:=  \bigoplus_{k=1}^{\alpha} \Phi_A^{(k)} : \mathcal{H}_A \to \qty(\bigoplus_{k=1}^{\alpha} \mathcal{H}_{A'}^{(k)}) \otimes \qty(\bigoplus_{k=1}^{\alpha} \mathcal{H}_{J_A^{(k)}})
    \end{equation}
    and similarly
   \begin{equation}\label{bogiso}
     \Phi_B:=  \bigoplus_{k=1}^{\alpha} \Phi_B^{(k)} : \mathcal{H}_B \to \qty(\bigoplus_{k=1}^{\alpha} \mathcal{H}_{B'}^{(k)}) \otimes \qty(\bigoplus_{k=1}^{\alpha} \mathcal{H}_{J_B^{(k)}})
    \end{equation} 
    are themselves isometries satisfying $\Phi_A^{\dagger}\Phi_A=\openone_{\mathcal{H}_A}$ and $\Phi_B^{\dagger}\Phi_B=\openone_{\mathcal{H}_B}$.
\end{Lemma}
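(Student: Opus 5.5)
The approach is to give the ambiguous ``direct sum'' in Eq.~(\ref{algiso}) a concrete meaning as an operator, and then check $\Phi_A^{\dagger}\Phi_A=\openone_{\mathcal{H}_A}$ directly. The argument for Bob is identical, so I treat only Alice.

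\textbf{Step 1: make the map explicit.} Let $P_k$ be the orthogonal projector of $\mathcal{H}_A$ onto $\mathcal{H}_A^{(k)}$. These subspaces are mutually orthogonal and span $\mathcal{H}_A$ (they are the Schmidt blocks of Theorem~\ref{thm2}), so
\begin{equation*}
P_kP_l=\delta_{kl}P_k, \qquad \sum_k P_k=\openone_{\mathcal{H}_A}.
\end{equation*}
Write $\mathcal{K}_{A'}:=\bigoplus_k\mathcal{H}_{A'}^{(k)}$ and $\mathcal{K}_{J}:=\bigoplus_k\mathcal{H}_{J_A^{(k)}}$. Let $\iota_k:\mathcal{H}_{A'}^{(k)}\otimes\mathcal{H}_{J_A^{(k)}}\to\mathcal{K}_{A'}\otimes\mathcal{K}_J$ be the tensor product of the two canonical summand inclusions. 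I then define
\begin{equation*}
\Phi_A:=\sum_k \iota_k\,\Phi_A^{(k)}\,P_k ,
\end{equation*}
where $\Phi_A^{(k)}P_k$ means $\Phi_A^{(k)}$ applied to the restriction of the vector to $\mathcal{H}_A^{(k)}$.

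\textbf{Step 2: properties of the embeddings.} Each $\iota_k$ is an isometry, being a tensor product of isometries. The ranges of $\iota_k$ and $\iota_l$ are orthogonal for $k\neq l$. To see this, for product vectors the inner product factorises as
\begin{equation*}
\langle a_k\otimes j_k,\,a_l\otimes j_l\rangle=\langle a_k,a_l\rangle\langle j_k,j_l\rangle ,
\end{equation*}
and $\langle a_k,a_l\rangle=0$ because $a_k$ and $a_l$ lie in different summands of $\mathcal{K}_{A'}$. Linearity extends this to all vectors. Hence
\begin{equation*}
\iota_k^{\dagger}\iota_l=\delta_{kl}\,\openone_{\mathcal{H}_{A'}^{(k)}\otimes\mathcal{H}_{J_A^{(k)}}}.
\end{equation*}

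\textbf{Step 3: compute $\Phi_A^{\dagger}\Phi_A$.} Using Step 2, the hypothesis $\Phi_A^{(k)\dagger}\Phi_A^{(k)}=\openone_{\mathcal{H}_A^{(k)}}$, and Step 1,
\begin{equation*}
\Phi_A^{\dagger}\Phi_A=\sum_{k,l}P_k\,\Phi_A^{(k)\dagger}\,\iota_k^{\dagger}\iota_l\,\Phi_A^{(l)}P_l=\sum_k P_k\,\Phi_A^{(k)\dagger}\Phi_A^{(k)}P_k=\sum_kP_k=\openone_{\mathcal{H}_A}.
\end{equation*}
The same argument gives $\Phi_B^{\dagger}\Phi_B=\openone_{\mathcal{H}_B}$.

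\textbf{Expected obstacle.} The only delicate point is Step 2. The stated target space $\bigl(\bigoplus_k\mathcal{H}_{A'}^{(k)}\bigr)\otimes\bigl(\bigoplus_k\mathcal{H}_{J_A^{(k)}}\bigr)$ is strictly larger than $\bigoplus_k\bigl(\mathcal{H}_{A'}^{(k)}\otimes\mathcal{H}_{J_A^{(k)}}\bigr)$, because it also contains the cross terms $\mathcal{H}_{A'}^{(k)}\otimes\mathcal{H}_{J_A^{(l)}}$ with $k\neq l$. So one must check that the diagonal pieces $\mathcal{H}_{A'}^{(k)}\otimes\mathcal{H}_{J_A^{(k)}}$ still sit orthogonally inside it; the factorisation of the inner product settles this. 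Finally, I would remark that $\Phi_A$ is an isometry rather than a unitary, since its range misses those cross terms. This is harmless for self-testing, which only requires isometries.
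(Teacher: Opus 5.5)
Your proof is correct and is essentially the paper's argument: both decompose along the orthogonal blocks $\mathcal{H}_A^{(k)}$, use $(\Phi_A^{(k)})^{\dagger}\Phi_A^{(k)}=\openone$ blockwise, and discard the cross terms because images of distinct blocks are orthogonal; the paper does this with inner products of decomposed vectors, and you do it at the level of the operator $\Phi_A^{\dagger}\Phi_A$. Your explicit embeddings $\iota_k$, together with the factorisation check that the pieces $\mathcal{H}_{A'}^{(k)}\otimes\mathcal{H}_{J_A^{(k)}}$ sit mutually orthogonally inside the larger target $\bigl(\bigoplus_k\mathcal{H}_{A'}^{(k)}\bigr)\otimes\bigl(\bigoplus_k\mathcal{H}_{J_A^{(k)}}\bigr)$, justify a step the paper only asserts.
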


\begin{proof}
 We prove the claim for Alice’s side; the argument for Bob's side is identical. Since the full local Hilbert space $\mathcal{H}_A$ splits into orthogonal subspaces (the Schmidt blocks), $\mathcal{H}_A=\bigoplus_{k=1}^{\alpha}\mathcal{H}_A^{(k)}$, every vector $\ket{\psi_A}\in \mathcal{H}_A$ admits a unique orthogonal decomposition
 \begin{equation}
     \ket{\psi_A} = \sum_{k=1}^{\alpha} \ket{\psi_A^{(k)}}, \ \ \text{where } \ket{\psi_A^{(k)}} \in \mathcal{H}_A^{(k)}.
 \end{equation}
Because the decomposition is orthogonal, we have $\ip{\psi_A^{(k)}}{\psi_A^{(k')}}=\delta_{kk'}$. Now for two arbitrary vectors $\ket{\chi_A},\ket{\psi_A} \in \mathcal{H}_A$, their inner product in $\mathcal{H}_A$ is given by $\ip{\psi_A}{\chi_A}=\sum_{k,k'}\ip{\psi^{(k)}_A}{\chi^{(k')}_A}$. But because the subspaces $\mathcal{H}_A^{(k)}$ are orthogonal, $\ip{\psi_A}{\chi_A}$ simplifies to
\begin{equation}
    \ip{\psi_A}{\chi_A} = \sum_{k=1}^{\alpha} \ip{\psi^{(k)}_A}{\chi^{(k)}_A}.
\end{equation}
Now apply the composed map $\Phi_A=\bigoplus_k \Phi_A^{(k)}$. Each $\Phi_A^{(k)}$ acts independently on its own block:
\begin{equation}
    \Phi_A \ket{\psi_A}=\sum_{k=1}^{\alpha} \Phi_A^{(k)}\ket{\psi_A^{(k)}}.
\end{equation}
Computing the inner product of the images, we get
\begin{equation}
    \ip{\chi_A (\Phi_A)^{\dagger}}{\Phi_A \psi_A}=\sum_{k,k'}\ip{\chi_A^{(k)} (\Phi_A^{(k)})^{\dagger}}{\Phi_A^{(k')} \psi^{(k')}_A}
\end{equation}
But again, the images of distinct $k$’s lie in orthogonal subspaces of the target, so all cross-terms vanish, i.e., $\ip{\chi_A^{(k)} (\Phi_A^{(k)})^{\dagger}}{\Phi_A^{(k')} \psi^{(k')}_A}=\delta_{kk'}$. Therefore,
\begin{eqnarray}
    \ip{\chi_A (\Phi_A)^{\dagger}}{\Phi_A \psi_A}&=&\sum_{k}\ip{\chi_A^{(k)} (\Phi_A^{(k)})^{\dagger}}{\Phi_A^{(k)} \psi^{(k)}_A} \nonumber \\
    &=&\sum_k \ip{\chi_A^{(k)}}{\psi^{(k)}_A} \ \ \text{[Since $(\Phi_A^{(k)})^{\dagger}\Phi_A^{(k)}=\openone$]} \nonumber \\
    &=&\ip{\chi_A }{\psi_A}
\end{eqnarray}
Hence, it follows that $\qty(\Phi_A)^{\dagger}\Phi_A=\openone$. This shows that the global map $\Phi_A$ preserves the inner products on all of $\mathcal{H}_A$, i.e., $\Phi_A$ is an isometry. The same argument applies identically to $\Phi_B$.

\end{proof}

Having demonstrated that the local isometries associated with each Schmidt block can be coherently assembled into a single global construction, we now proceed to establish the full self-testing statement for the entire quantum realisation.

\begin{Corollary}[Global self-testing from the maximal violation of $\mathscr{G}_n$]
Let a bipartite quantum strategy $\qty{A_{x}, B_{y}, \ket{\psi}}$ achieve the maximal quantum value $\qty(G_n)^{opt}_{\mathcal{Q}}=2^{n-1}\sqrt{n}$ of the Bell functional $\mathscr{G}_n$, with the state is given by Eq.~(\ref{statescmidtgl}) and also the observables admit the block structure given by Eq.~(\ref{obsscmidtgl}) with properties given be Eq.~(\ref{gloobsprop}). Then there exist \textit{global} local isometries (up to complex conjugation) given by Eqs.~(\ref{algiso}) and (\ref{bogiso}), such that 
\begin{equation}\label{finalobsst}
\begin{aligned}
     \qty(\Phi_A\otimes\Phi_B)\ket{\psi} &= \ket{\phi^+_{m_*}}_{A'B'} \otimes \ket{\text{junk}}_{J_AJ_B}, \\
     \qty(\Phi_A \mathcal{A}_y \Phi_A^{\dagger}) \otimes \openone_{J_B}&=\Gamma_y^T \otimes \openone_{J_AJ_B} \ \ \forall y, \\
     \qty(\Phi_B B_y \Phi_B^{\dagger})\otimes \openone_{J_A}&=\Gamma_y \otimes \openone_{J_AJ_B} \ \ \forall y.
\end{aligned}
\end{equation}
where $\qty{\Gamma_y}$ are the canonical Clifford generators acting on the minimal irreducible subspace of dimension $m_*=\lfloor n/2\rfloor$. 
\end{Corollary}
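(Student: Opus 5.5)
The plan is to assemble the global statement from the blockwise one. Theorem~\ref{thm2} gives the decomposition $\ket{\psi}=\bigoplus_k \sqrt{\lambda^{(k)}m_k}\ket{\phi^+_{m_k}}$ with $\mathcal{A}_y=\bigoplus_k\mathcal{A}_y^{(k)}$ and $B_y=\bigoplus_k B_y^{(k)}$ on mutually orthogonal Schmidt subspaces. The hypotheses (\ref{gloobsprop}) hold in each block. I would therefore apply Theorem~\ref{bscr} to each normalised block state $\ket{\psi^{(k)}}=\ket{\phi^+_{m_k}}$. This yields isometries $\Phi_A^{(k)},\Phi_B^{(k)}$ such that
\[
(\Phi_A^{(k)}\otimes\Phi_B^{(k)})\ket{\psi^{(k)}}=\ket{\phi^+_{m_*}}_{A'B'}\otimes\ket{junk^{(k)}},
\]
and such that the observables are mapped to $\Gamma_y^T\otimes\openone$ and $\Gamma_y\otimes\openone$. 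Lemma~\ref{dscli} then guarantees that $\Phi_A=\bigoplus_k\Phi_A^{(k)}$ and $\Phi_B=\bigoplus_k\Phi_B^{(k)}$ are global isometries.

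The key step, and the one I expect to be the main obstacle, is making the reference register common to all blocks. As literally written, Eq.~(\ref{algiso}) has target $(\bigoplus_k\mathcal{H}_{A'}^{(k)})\otimes(\bigoplus_k\mathcal{H}_{J_A^{(k)}})$. This is not the image of a direct sum of maps into $\mathcal{H}_{A'}^{(k)}\otimes\mathcal{H}_{J_A^{(k)}}$, so I would not use it as stated. Instead I would identify every $\mathcal{H}_{A'}^{(k)}$ with one fixed copy $\mathbb{C}^{m_*}$. This identification is legitimate because each block's generators are conjugated to the \emph{same} canonical $\Gamma_y$, and the irreducible representation of $\mathrm{Cl}_n(\mathbb{C})$ is unique up to unitary equivalence. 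The one exception is odd $n$, where there are two inequivalent irreps, differing by the sign of the product of all generators. For odd $n$ I would absorb this sign into the junk label, or else treat it via the complex-conjugation freedom, which is exactly the ``up to complex conjugation'' clause. With the common register fixed, I define $\mathcal{H}_{J_A}:=\bigoplus_k\mathcal{H}_{J_A^{(k)}}$ and use the canonical isomorphism $\bigoplus_k(\mathbb{C}^{m_*}\otimes\mathcal{H}_{J_A^{(k)}})\cong\mathbb{C}^{m_*}\otimes\mathcal{H}_{J_A}$, which holds because the tensor product distributes over direct sums. Composing $\bigoplus_k\Phi_A^{(k)}$ with this isomorphism still gives an isometry, by Lemma~\ref{dscli}.

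Next I would verify the state relation. By linearity,
\[
(\Phi_A\otimes\Phi_B)\ket{\psi}=\sum_k\sqrt{\lambda^{(k)}m_k}\,\ket{\phi^+_{m_*}}\otimes\ket{junk^{(k)}}_{J_A^{(k)}J_B^{(k)}}=\ket{\phi^+_{m_*}}\otimes\ket{junk},
\]
with $\ket{junk}:=\sum_k\sqrt{\lambda^{(k)}m_k}\ket{junk^{(k)}}$, where the normalisation of the block states is handled by the $\sqrt{\lambda^{(k)}m_k}$ factors. This is normalised because the $\ket{junk^{(k)}}$ live in orthogonal summands of $\mathcal{H}_{J_A}\otimes\mathcal{H}_{J_B}$ and $\sum_k\lambda^{(k)}=1$, up to the block normalisation convention of Eq.~(\ref{blockstatem}). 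The factorisation works precisely because the reference factor is block-independent.

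Finally I would verify the observable relations. Since $\Phi_A$ is block diagonal and $\mathcal{A}_y$ preserves blocks, $\Phi_A\mathcal{A}_y\Phi_A^\dagger=\bigoplus_k\Gamma_y^T\otimes\openone_{J_A^{(k)}}=\Gamma_y^T\otimes\openone_{J_A}$ on the range of $\Phi_A$, and likewise for Bob. Tensoring with the identity on the other party's junk gives Eq.~(\ref{finalobsst}). Alice's actual $A_x$ are then recovered on the support via Eq.~(\ref{kerneleq}) as $A_x=\frac{1}{\sqrt n}\sum_y(-1)^{z_y^x}\mathcal{A}_y$. Global complex conjugation is the only residual ambiguity, by the Remark following Theorem~\ref{thm2}. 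I would also note that $m_*$ in the statement should read $2^{\lfloor n/2\rfloor}$.
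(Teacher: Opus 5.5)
Your argument is the same as the paper's: apply Theorem~\ref{bscr} in each Schmidt block, glue the block isometries with Lemma~\ref{dscli}, and pull $\ket{\phi^+_{m_*}}$ out of the direct sum. Your explicit identification of every $\mathcal{H}_{A'}^{(k)}$ (and $\mathcal{H}_{B'}^{(k)}$) with one copy of $\mathbb{C}^{m_*}$ is correct, and so is the isomorphism $\bigoplus_k(\mathbb{C}^{m_*}\otimes\mathcal{H}_{J_A^{(k)}})\cong\mathbb{C}^{m_*}\otimes\bigoplus_k\mathcal{H}_{J_A^{(k)}}$. This is exactly the step the paper takes tacitly when it factors $\ket{\phi^+_{m_*}}$ out in the second line of its state computation. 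With the codomain literally as in Eq.~(\ref{algiso}), the blocks' reference registers would remain orthogonal and no such factorisation would be possible. Your correction $m_*=2^{\lfloor n/2\rfloor}$ is also right, and the normalisation that makes the junk a unit vector is $\sum_k m_k\lambda^{(k)}=1$.

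The one place you go beyond the paper, the odd-$n$ patch, does not work as stated. For odd $n$, the two irreducible representations of $\mathrm{Cl}_n(\mathbb{C})$ are distinguished by the central scalar $c$ in $\Gamma_1\cdots\Gamma_n=c\,\openone$, where $c^2=(-1)^{n(n-1)/2}$. Complex conjugation (equivalently transposition, for Hermitian $\Gamma_y$) sends $c\mapsto\bar c$. It therefore swaps the two irreps only when $c$ is imaginary, i.e.\ for $n\equiv 3\pmod 4$. For $n\equiv 1\pmod 4$ (e.g.\ $n=5$) one has $c=\pm1$, conjugation preserves the class, and the other irrep is $\{-\Gamma_y\}$. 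That irrep gives the same behaviour, since all correlators are invariant under $(A_x,B_y)\mapsto(-A_x,-B_y)$ and the marginals vanish. Absorbing the sign into the junk does not give the displayed identities either: it yields $\Phi_B B_y\Phi_B^\dagger=\Gamma_y\otimes Z_{J_B}$ with a $\pm1$ flag $Z_{J_B}$, not $\Gamma_y\otimes\openone_{J_B}$.

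Concretely, take $B_y=\Gamma_y\oplus(-\Gamma_y)$ and $\mathcal{A}_y=B_y^T$ on $\ket{\phi^+_{2m_*}}=\tfrac{1}{\sqrt2}(\ket{\phi^+_{m_*}}\oplus\ket{\phi^+_{m_*}})$. This is a single Schmidt block satisfying Eq.~(\ref{gloobsprop}) and attaining $2^{n-1}\sqrt{n}$. Yet $\Phi_B B_y=(\Gamma_y\otimes\openone)\Phi_B$ for all $y$ would force $\Phi_B\bigl(c\,\openone\oplus(-c)\,\openone\bigr)=c\,\Phi_B$, so $\Phi_B$ would vanish on half the space. This operator identity is already forced by the on-state version, since the state has full Schmidt rank. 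A single global conjugation does not help, because both signs occur.

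So for odd $n$ the last two lines of Eq.~(\ref{finalobsst}) follow, in your proposal as in the paper, only by quoting Theorem~\ref{bscr} verbatim. A correct odd-$n$ statement needs a flag on the junk, and for $n\equiv1\pmod 4$ a sign flip rather than only conjugation. For even $n$, where the irrep is unique, your argument is complete.
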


\begin{proof}
  Expressing the physical state in Schmidt decomposition of Eq.~(\ref{statescmidtgl}), according to Lemma~\ref{dscli} the action of global isometry on the state will be action of each block isometry on each Schmidt block of the state, then substituting the block-self-testing relation from Theorem~\ref{bscr}, we obtain
  \begin{eqnarray}
     \qty(\Phi_A\otimes\Phi_B)\ket{\psi} &=&\bigoplus_{k=1}^{\alpha} \sqrt{\lambda^{(k)}m_k} \ket{\phi^+_{m_*}}_{A'B'} \otimes \ket{\text{junk}}_{J_{A^{(k)}}J_{B^{(k)}}} \nonumber \\
     &=&\ket{\phi^+_{m_*}}_{A'B'} \otimes \qty(\bigoplus_{k=1}^{\alpha} \sqrt{\lambda^{(k)}m_k} \ket{\text{junk}}_{J_{A^{(k)}}J_{B^{(k)}}}) \nonumber \\
     &=& \ket{\phi^+_{m_*}}_{A'B'} \otimes \ket{\text{junk}}_{J_AJ_B}.
  \end{eqnarray}
  Also, since the observables are block-diagonal, the global isometries act component wise
  \begin{equation}
  \begin{aligned}
      \qty(\Phi_A \mathcal{A}_y \Phi_A^{\dagger}) &= \bigoplus_{k=1}^{\alpha} \Gamma_y^T \otimes \openone_{J_A^{(k)}} \\
      &=\Gamma_y^T \otimes \openone_{J_A}
  \end{aligned}
  \end{equation}
  and similar relation holds for Bob. Therefore, after applying the global local isometries, the physical realisation is locally equivalent to a reference experiment consisting of a maximally entangled state $\ket{\phi^+_{m_{*}}}$ of minimal dimension and identical Clifford observables on both sides, tensored with arbitrary ancillary systems that play no role in the observed Bell correlations. 
\end{proof}


\section{Robust self-testing}\label{sec5}

In the preceding Sec.~\ref{selftesting}, we have established the existence of local isometries, $\Phi_A$ and $\Phi_B$, that extracts the target quantum strategy $\qty{A'_x,B'_y,\ket{\psi'}}$ from the ideal one $\qty{A_x,B_y,\ket{\psi}}$. In the absence of noise, the optimal quantum violation of the Bell functional, $\expval{\mathscr{G}_n}^{opt}=2^{n-1}\sqrt{n}$ is achieved for this ideal strategy.

In practice, however, the observed violation is typically slightly below the optimal value, $\expval{\mathscr{G}_n}^{obs}=2^{n-1}\sqrt{n}-\delta$ for some small $\delta>0$. This deviation reflects that the physical strategy, of unknown dimension, denoted by $\qty{\tilde{A}_x,\tilde{B_y},\ket{\tilde{\psi}}}$, does not exactly satisfy the ideal relations $M_x\ket{\psi}=0$. Instead, the corresponding relations are violated by a controlled amount, with  $\norm{\tilde{M}\ket{\tilde{\psi}}}$ bounded by a function of $\delta$. Without loss of generality, the physical state can be considered to be pure and the observables Hermitian and dichotomic (i.e., of eigenvalues $\pm 1$), since no assumption is made on the Hilbert space dimension, which can always be extended to purify the state and make the measurements projective.

The local isometry then maps these physical objects $\qty{\tilde{A}_x,\tilde{B_y},\ket{\tilde{\psi}}}$ to a target reference strategy of known dimension, $\qty{\tilde{A}'_x,\tilde{B}'_y,\ket{\tilde{\psi'}}}$. In the ideal (noise-free) case, one would expect $\qty{A'_x=\tilde{A}'_x,B'_y=\tilde{B}'_y,\ket{\psi'}=\ket{\tilde{\psi'}}}$, corresponding to $\delta=0$ (up to complex conjugation).

However, since $\delta>0$ in reality, there exists a finite deviation between the mapped physical observables and states and their ideal target counterparts. In the following, we quantify these deviations by deriving the operator norm bounds, which characterise how close the extracted strategy $\qty{\tilde{A}'_x,\tilde{B}'_y,\ket{\tilde{\psi'}}}$ is to the corresponding target reference quantities $\qty{A'_x,B'_y,\ket{\psi'}}$ as a function of the observed discrepancy $\delta$. 

Specifically, we evaluate the following norm bounds
\begin{equation}
    \norm{\openone\otimes \qty(B'_y-\tilde{B}'_y)\ket{\tilde{\psi}'}}, \ \ \norm{\qty(A'_x-\tilde{A}'_x)\otimes \openone \ket{\tilde{\psi}'}}, \ \ \norm{\ket{\tilde{\psi}'}-\ket{\tilde{\psi}}}
\end{equation}
Since every dichotomic POVM can be dilated to a projective observable on a larger space, without loss of generality we assume that all physical observables are Hermitian with spectrum contained in $\qty{\pm1}$, i.e. projective dichotomic observables, i.e., $\norm{\tilde{B}_y\ket{\tilde{\psi}}}=\norm{\tilde{A}_x\ket{\tilde{\psi}}}=1$.

{\thm \label{thm5} Let $\qty{\tilde{A}_x, \tilde{B}_y,\ket{\tilde{\psi}}}$ be a physical strategy achieving
\begin{equation} \label{suboptrob}
    \expval{\mathscr{G}_n}_{\tilde{\psi}}= 2^{n-1}\sqrt{n}-\delta, \ \ \delta\geq 0,
\end{equation}
then there exist constants $C_n,D_n,E_n\geq 0$, such that the following approximate relations hold on the support of $\ket{\tilde{\psi}}$
\begin{equation}\label{robth}
\begin{aligned}
(i) & \ \ \norm{\qty(\Phi_A\otimes \Phi_B) \ket{\tilde{\psi}}- \ket{\psi'}}  \leq C_n \sqrt{\delta} \\
(ii) & \ \ \norm{\qty(\Phi_A\tilde{A}_x\Phi_A^{\dagger}-A'_x)\otimes \openone \ket{\tilde{\psi}'}}  \leq D_n \sqrt{\delta} \\
 (iii) & \ \ \norm{\openone\otimes \qty(\Phi_B\tilde{B}_y\Phi_B^{\dagger}-B'_y) \ket{\tilde{\psi}'}}  \leq E_n \sqrt{\delta}
\end{aligned}
\end{equation}
and the pairwise anticommutators of Bob’s observables satisfy
\begin{equation}
\abs{\expval{\qty{\tilde{B}'_y,\tilde{B}'_{y'}}}_{\tilde{\psi}}}\leq L_n \sqrt{\delta}
\end{equation}
where $\ket{\psi'}=\ket{\phi^{+}_{m_*}}\otimes \ket{junk}$. $A'_x$, and $B'_y$ are the target observables given by Eq.~(\ref{finalobsst}). Moreover, the constants scale as $C_n\sim \order{n^{\frac{-1}{4}}}$, and $D_n,E_n\sim \order{n^{\frac{1}{4}}}$.}

\begin{proof}
For the ideal scenario, the SOS decomposition guarantees
\begin{equation}\label{idsos}
  \forall x \ \ \   M_x\ket{\psi}=0 \implies A_x \otimes\openone \ket{\psi}=\frac{1}{\sqrt{n}}(\openone\otimes\mathcal{B}_x)\ket{\psi},
\end{equation}
establishing a perfect correspondence between Alice’s and Bob’s ideal observables on the support of the ideal state $\ket{\psi}$. Under the isometry $\Phi:=\Phi_A\otimes\Phi_B$, this relation is preserved on the target reference state $\ket{\psi'}$. Importantly, the condition $M_x\ket{\psi}=0 \ \forall x$ implies that $\ket{\psi}$ lies in the subspace $\mathcal{S}:=\bigcap_x \ker(M_x)$, i.e., the intersection of the kernels of all SOS operators $M_x$. Any state achieving the maximal violation must belong to this subspace, i.e., $\ket{\psi}\in \mathcal{S}$. 

For a physical realisation $\ket{\tilde{\psi}}$ achieving a slightly suboptimal value $\expval{\mathscr{G}_n}_{\tilde{\psi}}=\expval{\mathscr{G}_n}_{\psi}-\delta$, a component necessarily lies outside $\mathcal{S}$, and the distance from the kernel quantifies the deviation of the physical state from the ideal one. Taking the expectation value of the SOS decomposition in Eq.~(\ref{starti}) with respect to $\ket{\tilde{\psi}}$ gives
\begin{equation} \label{sosrobmain}
\expval{G_n \openone - \mathscr{G}_n}_{\tilde{\psi}} = \sum_{x = 1}^{2^{n-1}} \frac{\omega_x}{2} \expval{ M_x^{\dagger} M_x }_{\tilde{\psi}}
\end{equation}
which can be equivalently expressed as, by taking $\expval{G_n \openone - \mathscr{G}_n}_{\tilde{\psi}}=2^{n-1}\sqrt{n}-\expval{\mathscr{G}_n}_{\tilde{\psi}}=\delta$ and using Eq.~(\ref{suboptrob})
\begin{equation}
    \delta=\sum_{x = 1}^{2^{n-1}} \frac{\omega_x}{2} \norm{M_x\ket{\tilde{\psi}}}^2
\end{equation}
Since each term $\frac{\omega_x}{2} \norm{M_x\ket{\tilde{\psi}}}^2$ in the sum is non-negative, each contribution must be individually bounded by the total deviation $\delta$. Using the fact that $\omega_x=\sqrt{n}$ in the ideal case and remains bounded in the near-optimal regime, one obtains
\begin{equation} \label{mbound}
\norm{M_x\ket{\tilde{\psi}}} \leq F_n \sqrt{\delta}  \  \forall x, \ \ F_n=\sqrt{\frac{2^{2-n}}{\sqrt{n}}}
\end{equation}
This immediately provides a bound on the component of the physical state lying outside the kernel of the SOS operators. Physically, this allows $\ket{\tilde{\psi}}$ to be decomposed into a component within the ideal subspace $\mathcal{S}$, denoted by $\ket{\psi_{\parallel}}\in \mathcal{S}$, and an orthogonal component $\ket{\psi_{\perp}}\in \mathcal{S}^{\perp}$, given by
\begin{equation}\label{orperc}
    \ket{\psi_{\parallel}}=\frac{\Pi\ket{\tilde{\psi}}}{\norm{\Pi\ket{\tilde{\psi}}}},  \ \ \ \ket{\psi_{\perp}}=\frac{\qty(\openone-\Pi)\ket{\tilde{\psi}}}{\norm{\qty(\openone-\Pi)\ket{\tilde{\psi}}}}
\end{equation}
where $\Pi$ is the orthogonal projector on to $\mathcal{S}$. Decomposing $\ket{\tilde{\psi}}$ as
\begin{equation}
    \ket{\tilde{\psi}} = \Pi \ket{\tilde{\psi}} + \qty(\openone-\Pi)\ket{\tilde{\psi}}
\end{equation}
we find (Appx.~\ref{statebound} for detailed derivation)
\begin{equation}\label{staterobfinal}
        \norm{ \ket{\tilde{\psi}}-\ket{\psi}} =\norm{\qty(\openone-\Pi)\ket{\tilde{\psi}}}
\leq C_n \sqrt{\delta}, \ C_n\sim\order{n^{-1/4}}
\end{equation}
Since the isometry $\Phi$ acts locally and preserves vector norms, the above bounds remain valid after mapping the physical state to the reference space.
\begin{equation}
         \norm{ \ket{\tilde{\psi}'}-\ket{\psi}'} \leq C_n \sqrt{\delta}, \ C_n\sim\order{n^{-1/4}}
\end{equation}
For the physical implementation, the norm of Bob’s effective observable operator deviates slightly from its ideal value. In the ideal case, the quantity $\omega_x=\norm{\mathcal{B}_x}_{\psi}=\sqrt{n}$. For the physical realisation, we denote the corresponding (deviated) quantity as
$\tilde{\omega}_x:=\norm{\tilde{\mathcal{B}}_x}_{\tilde{\psi}}$. Hence, we obtain
\begin{equation}\label{omro}
\tilde{\omega}_x^2=\expval{\tilde{\mathcal{B}}_x^2}_{\tilde{\psi}}=n+\sum_{y<y'} (-1)^{z^x_y+z^x_{y'}} \expval{\qty{\tilde{B}_y,\tilde{B}_{y'}}}_{\tilde{\psi}} = n+\Delta_x \ \ \ \forall x
\end{equation}
We define
\begin{equation}\label{dx}
    \Delta_x:=\sum_{y<y'} (-1)^{z^x_y+z^x_{y'}} \eta_{yy'}, \ \ \ \eta_{yy'}:= \expval{\qty{\tilde{B}_y,\tilde{B}_{y'}}}_{\tilde{\psi}} \ \ \forall y\neq y'
\end{equation}
The quantity $\Delta_x=\tilde{\omega}_x^2-n$ thus quantifies the deviation of $\tilde{\omega}_x$ from its ideal value. In the ideal case, where all observables $B_y$ perfectly anticommute, then $\eta_{yy'}=0$ and consequently $\Delta_x=0$. In Appx.~\ref{roancobb}, we derive a bound on the degree of non-anticommutativity for Bob’s observables, given by
\begin{equation}
    \abs{\expval{\qty{\tilde{B}'_y,\tilde{B}'_{y'}}}_{\tilde{\psi}}} \leq L_n \sqrt{\delta}, \ \ \ L_n \sim \order{\frac{1}{2^{n+1}n}}
\end{equation}

To quantify the robustness of the observables, we begin by decomposing the deviation of Alice’s physical observable from its ideal counterpart into two parts: one correlated with Bob’s observables and a residual term capturing the remaining difference. The key idea is that the SOS relations allow Bob’s scaled operators to be written as linear combinations of Alice’s observables on the support of the state. Near-optimal violation therefore implies that deviations in the SOS relations translate directly into bounded deviations of the individual measurement operators. Using standard norm inequalities, this yields a bound on Alice’s observable in terms of the residual error operator and the deviation of Bob’s scaled observables from their ideal forms as (see Appx.~\ref{roobs} for detailed evaluation)
\begin{equation}\label{alicerobim1}
  \norm{\qty(\tilde{A}_x-A_x)\otimes \openone}_{\ket{\tilde{\psi}}} \leq \qty(F_n+Q_n) \sqrt{\delta} +\norm{\openone\otimes \frac{\tilde{\mathcal{B}}_x-\mathcal{B}_x}{\tilde{\omega}_x}}_{\ket{\tilde{\psi}}}
\end{equation}

Explicitly, by expressing Bob’s scaled physical observable $\tilde{\mathcal{B}}_x$ in terms of Alice’s physical observable $\tilde{A}_x$ and a residual error term $\tilde{M}$, one finds that, in the ideal noiseless case, Bob’s ideal scaled observable $\mathcal{B}_x$ can be represented as a linear combination of Alice’s physical operators on the subspace defined by the ideal state $\ket{\psi}$. Then, Applying the triangle inequality and exploiting the SOS relations, we obtain (see Appx.~\ref{roobs} for detailed evaluation)
\begin{equation}\label{alicerobim2}
     \frac{1}{\tilde{\omega}_x}\norm{\openone\otimes\qty(\tilde{\mathcal{B}}_x-\mathcal{B}_x)\ket{\tilde{\psi}}} \leq  H_n \sqrt{\delta},
\end{equation}
where $H_n\sim \order{n^{-\frac{1}{4}}}$. Then, combining Eqs.~(\ref{alicerobim1}) and (\ref{alicerobim2}), we obtain
\begin{equation}\label{alicerobim3}
  \norm{\qty(\tilde{A}_x-A_x)\otimes \openone \ket{\tilde{\psi}}} \leq D_n \sqrt{\delta} \ \ \ \text{with } D_n \sim \order{n^{\frac{1}{4}}}
\end{equation}
By inverting the linear relation defining the scaled observables given by Eq.~(\ref{curlyBx}), each of Bob’s physical operators can be expressed as a linear combination of the deviations of the scaled observables. Norm inequalities then yield explicit robustness bounds for Bob’s individual measurements from Eq.~(\ref{alicerobim1}) as
\begin{equation}\label{bobrobim3}
\norm{\openone\otimes\qty(\tilde{B}_y-B_y)\ket{\tilde{\psi}}} \leq E_n \sqrt{\delta}  \ \ \ \text{with } E_n \sim \order{n^{\frac{1}{4}}}
\end{equation}

\end{proof}

The above bounds establish that any strategy achieving a near-maximal violation of the Bell functional must be close, up to local isometries, to the ideal quantum realisation identified in Sec.~\ref{selftesting}. In particular, the extracted state approaches the maximally entangled target state while the implemented measurements approximate the corresponding Clifford observables, with deviations scaling at most as $\order{\sqrt{\delta}}$. These results demonstrate that the self-testing statement derived for the ideal case remains effective under small experimental imperfections.


\section{Summary and Discussion}\label{secSD}

In this work, we propose a scalable self-testing framework for the maximally entangled two-qudit state of local dimension $m_*=2^{\lceil \frac{1}{2}(n-1) \rceil}$ (equivalently $\lceil \frac{1}{2}(n-1) \rceil$ copies of maximally entangled two-qubit states \cite{Kraft2018}), together with $n$ mutually anticommuting observables on one side. The framework is based on an $n$-settings, two-outcome Bell inequality. We first derived analytically the local ontic bound of this inequality for arbitrary $n$. We then determine the optimal quantum bound of the Bell functional $\mathscr{G}_n$ by employing the SOS decomposition technique \cite{Bamps2015}, without assuming any restriction on the underlying Hilbert space dimension. The resulting quantum bound exceeds the classical one for all $n$, and its analytical form enables a complete characterisation of the quantum strategies saturating it. 

The maximal quantum value is achieved by a state that decomposes into a direct sum of maximally entangled components, each supported on local subspaces of dimension at least $m_*$. On every such Schmidt block, both parties implement identical families of dichotomic, pairwise anticommuting Hermitian observables forming a representation of the complex Clifford algebra $\mathrm{Cl}_n\qty(\mathbb{C})$. Consequently, the full measurement operators are block-diagonal, with each block acting irreducibly on the corresponding Schmidt subspace and carrying an independent Clifford representation. Any nontrivial degeneracy of the Schmidt spectrum therefore corresponds to classical mixing between identical Clifford blocks.

Within each block, Alice's $2^{n-1}$ observables correspond geometrically to the vertices of an $n$-dimensional hypercube inscribed in the operator Bloch sphere defined by the Clifford generators, $\mathcal{A}_y^{(k)}$, cf. by Eq.~(\ref{aliceobsy}). Each observable direction can be represented by the normalised vector $\vec{v}_x=\frac{1}{\sqrt{n}}\qty((-1)^{z^x_1},\ldots,(-1)^{z^x_n})$, encoding the pattern of signs defining its relation to the $n$ anticommuting axes.  Bob’s $n$ measurements correspond to the orthogonal Clifford directions themselves, the coordinate axes of this hypercube, while Alice’s observables form its vertices. The hypercube symmetry ensures that the averaged correlations are invariant under any parity inversion of the coefficients, thereby realising the parity-oblivious constraints relations known from the $n\to 1$ parity-oblivious multiplexing task in a prepare-measure scenario \cite{Spekkens2009, Ghorai2018, Singh2025njp}. 

A further subtle feature of $\mathscr{G}_n$ arises from the fact that the optimal quantum strategy is not unique under local unitaries. In particular, if the strategy $\qty{\mathcal{A}_y=\Gamma_y^T,B_y=\Gamma_y,\ket{\phi^+}}$ attains the maximal quantum value, then its complex-conjugated counterpart $\qty{\mathcal{A}^T_y=\Gamma_y,B_y^T=\Gamma_y^T,\ket{\phi^+}}$ does as well. This equivalence occurs because the transposition (or complex conjugation) operation maps a given irreducible representation of $\mathrm{Cl}_n\qty(\mathbb{C})$ to another inequivalent but correlation-preserving one, for all Hermitian combinations $\Gamma_y^T\otimes\Gamma_y$, the corresponding joint expectation values remain identical. Consequently, any two-outcome Bell functional composed purely of joint correlators with real coefficients, such as $\mathscr{G}_n$, cannot distinguish between a strategy and its complex-conjugated counterpart. Physically, this means that the Bell violation is invariant under local complex conjugation, and that any self-testing statement based on the optimal violation of such two-outcome Bell inequalities must therefore be understood up to both local isometries and local complex conjugation. This generalises the ambiguity previously identified for the Elegant Bell inequality \cite{Gisin2007, Andersson2017} (recovered as the special case $n=3$), showing that it is not an isolated feature but a general algebraic consequence of Bell functionals whose optimal observables realise complex Clifford-algebra representations.

Within this extended equivalence framework, we have proved the existence of local isometries that extract, from any optimal physical realisation, the reference configuration composed of maximally entangled states of local dimension $m_*$ and $n$ mutually anticommuting measurement operators, up to local unitary and complex-conjugation freedoms. This formally establishes that the maximal quantum value of $\mathscr{G}_n$ self-tests a tensor product of maximally entangled qubit pairs (of total local dimension $m_*=2^{\lfloor n/2 \rfloor}$) together with the associated Clifford measurement structure.

We have also analysed the robustness of this self-testing scheme by examining deviations from the SOS kernel. In the ideal case, the optimal state lies entirely within the kernel subspace of the SOS operators. When the observed value deviates from the quantum maximum by a small parameter $\delta$, the physical state necessarily acquires a small component orthogonal to this kernel. Operationally, this means that the experimentally realised correlations include a residual contribution orthogonal to the ideal algebraic constraints defining the optimal quantum strategy. The distance of the physical state from this kernel subspace thus provides a direct quantitative measure of how far the observed statistics are from the ideal self-tested configuration. By bounding this orthogonal component, we obtained explicit norm inequalities quantifying how close the extracted physical strategy is to the ideal reference strategy. In particular, we show that the physical state and observables remain close (in operator norm) to their ideal counterparts within an error of order $\order{n^{1/4}}\sqrt{\delta}$. Thus, even as the number of measurement settings increases, the degradation in fidelity grows only polynomially with $n$, demonstrating that the protocol remains stable in the large-setting regime. In addition, we established that the anticommutation relations of Bob’s observables remain approximately preserved as $\order{a/2^n n}\sqrt{\delta}$, confirming that the Clifford structure is robust under small perturbations. Although these robustness bounds are derived analytically under certain approximations, they can be systematically tightened using the techniques of \cite{Kaniewski2019}, enabling quantitative benchmarking of different scalable self-testing schemes.

Beyond their theoretical relevance, our results have direct implications for device-independent quantum information processing. Because the inequality $\expval{\mathscr{G}_n} \leq (G_n)^{opt}_{\mathcal{L}}$ self-tests multiple copies of maximally entangled pairs using only two-outcome measurements, it naturally lends itself to applications in DI randomness expansion and quantum key distribution. The possibility of certifying an unbounded amount of randomness through a scalable number of measurement settings paves the way towards asymptotically high performance DI quantum protocols.\\

\section{Acknowledgements}
SS acknowledges the support from the National Natural Science Fund of China (Grant No. W2533013). RKS acknowledges the financial support from the Council of Scientific and Industrial Research (CSIR, 09/1001(17051)/2023-EMR-I), Government of India. PR acknowledges the support by KIAS individual Grant No. QP100601 at the Korea Institute for Advanced Study and by the local hospitality from the grant IITH/SG160 of IIT Hyderabad, India.  AKP acknowledges the support from the Research Grant SERB/CRG/2021/004258, Government of India.


\appendix
\onecolumngrid

\section{Proof of theorem 1}\label{thm1}

 The local realist bound of the $n$-settings Bell functional (Eq.~(10) of the main text) is given by
\begin{equation} \label{bnlb1}
	\qty|\expval{\mathscr{G}_{n}}_{L}|\leq  \sum\limits_{x=1}^{2^{n-1}} \qty| \  \sum\limits_{y=1}^{n} (-1)^{z_y^{x}} \expval{B_y}_{\lambda} \ | \leq \sum\limits_{x=1}^{2^{n-1}} \qty| \  \sum\limits_{y=1}^{n} (-1)^{z_y^{x}} \ |
\end{equation}
Note that $-1\leq \expval{B_y}\leq 1$ implies the optimal local realist bound of $ \mathscr{G}_{n}$ occurs when $\expval{B_y} \in \{+1,-1\}$. Interestingly, regardless of whether $\expval{B_y}=+1$ or $\expval{B_y}=-1$, the value of $|\expval{\mathscr{G}}_L|$ remains unchanged. 

Since $z_y^x \in\{0,1\}$, the quantity $(-1)^{z_y^{x}}$ is either $0$ or $1$. The value of $z_y^x \in\{0,1\}$ is fixed by the term appearing at $x^{th}$ row of $y^{th}$ column of the following matrix $\mathcal{S}_{(2^{n-1}\times n)}$. Subsequently, the value of $(-1)^{z_y^{x}}$ is also fixed by the matrix $\mathcal{M}_{(2^{n-1}\times n)}$.

\begin{eqnarray}
\mathcal{S}_{(2^{n-1}\times n)}=	
	\begin{bmatrix} 
		0 & 0 & \cdots& 0 & 0\\
		0 & 0 & \cdots& 0 & 1\\
			0 & 0 & \cdots& 1 & 0\\
		\vdots& \vdots&\ddots&\vdots&\vdots\\
	0 & 1 & \cdots& 1 & 1
	\end{bmatrix}  \ ; \ \ 
\mathcal{M}_{(2^{n-1}\times n)}=	
	\begin{bmatrix} 
		1 & 1 & \cdots& 1 & 1\\
		1 & 1 & \cdots& 1 & -1\\
		1 & 1 & \cdots& -1 & 1\\
		\vdots& \vdots&\ddots&\vdots&\vdots\\
	1 & -1 & \cdots& -1 & -1
	\end{bmatrix} 
\end{eqnarray}

Modulus of sum of each row of $\mathcal{M}_{(2^{n-1}\times n)}$ provides $ \qty| \  \sum\limits_{y=1}^{n} (-1)^{z_y^{x}}  \ | \ \forall x$. For instance, the sum of the elements in the first row is $ \qty| \  \sum\limits_{y=1}^{n} (-1)^{z_y^{x=1}}  \ | =n$, on the other hand, the sum is $ \qty| \  \sum\limits_{y=1}^{n} (-1)^{z_y^{x=2^{n-1}}}  \ | = |(2-n)| = (n-2)$. 

Now, in order to evaluate the sum for any given row, we must determine the count of elements in $\mathcal{M}_{(2^{n-1}\times n)}$ that are either $+1$ or $-1$ in that specific row. Note that the first element in each row (corresponding to the first column) is set to $+1$. Now, lets consider that in the $r^{th}$ row, out of remaining $(n-1)$ elements, there are $k$ number of elements having value $-1$. Then sum of the elements in this particular row becomes $(n-2k)$. Furthermore, as there are $\binom{n-1}{k}$ rows that contain $k$ occurrences of $-1$ entries, the total sum of the elements for such rows is $(n-2k)\binom{n-1}{k}$. It's worth mentioning that $(n-2k)>0$ when $0\leq k \leq \lfloor n/2 \rfloor$. Consequently, the overall sum of rows containing more elements with $+1$ value than those with $-1$ value is given by $\sum\limits_{k=0}^{\lfloor n/2 \rfloor} (n-2k)\binom{n-1}{k}$.

For rows containing fewer elements with $+1$ than with $-1$ values, the sum is assessed by counting the elements with $+1$ values. Let there be $k'$ occurrences of $+1$ elements in the $(r')^{th}$ row. The sum for this specific row is then $(2k'-n)$. Additionally, there are $\binom{n-1}{k'-1}$ rows that contain $k'$ instances of $+1$ entries. Consequently, the sum for such rows is $(2k'-n)\binom{n-1}{k-1}<0 \ \forall 0 \leq k' \leq \lfloor n/2 \rfloor$.   

Therefore, following the preceding arguments, we obtain
\begin{eqnarray} \label{bnlb2}
(\mathscr{G}_n)_{\mathcal{L}} = \sum\limits_{x=1}^{2^{n-1}} \qty| \ \sum\limits_{y=1}^{n} (-1)^{z_{y}^{x}} \ | &=& \sum\limits_{k=0}^{\lfloor \frac{n}{2} \rfloor}  \binom{n-1}{k} \ (n-2k) + \qty|\sum\limits_{k'=1}^{\lfloor \frac{n}{2} \rfloor}  \binom{n-1}{k'-1} \ (2k'-n)| \nonumber \\
&=& \sum\limits_{k=0}^{\lfloor \frac{n}{2} \rfloor}  \binom{n-1}{k} \ (n-2k) + \sum\limits_{k'=1}^{\lfloor \frac{n}{2} \rfloor}  \binom{n-1}{k'-1} \ (n-2k') \\
&=& n + \sum\limits_{k=1}^{\lfloor \frac{n}{2} \rfloor}  \qty[\binom{n-1}{k}+\binom{n-1}{k-1}] \ (n-2k)
\end{eqnarray} 
Now, changing the index $k'$ to $k$ and using the binomial identity $\qty[\binom{n-1}{k}+\binom{n-1}{k-1}]=\binom{n}{k}$, we get the following:
\begin{eqnarray}
(\mathscr{G}_n)_{\mathcal{L}}&=& \binom{n}{0} \ n + \sum\limits_{k=1}^{\lfloor \frac{n}{2} \rfloor} \binom{n}{k} \ (n-2k) =\sum\limits_{k=0}^{\lfloor \frac{n}{2} \rfloor} \binom{n}{k} \ \qty{(n-k)-k} \\
&=& \sum\limits_{k=0}^{\lfloor \frac{n}{2} \rfloor} \frac{n!}{(n-k-1)! \ k!} - \sum\limits_{k=1}^{\lfloor \frac{n}{2} \rfloor} \frac{n!}{(n-k)! \ (k-1)!} \\
&=& \sum\limits_{k=0}^{\lfloor \frac{n}{2} \rfloor} \frac{n!}{(n-k-1)! \ k!} - \sum\limits_{l=0}^{\lfloor \frac{n}{2} \rfloor-1} \frac{n!}{(n-l-1)! \ l!} \\
&=& \frac{n!}{(n-\lfloor n/2 \rfloor-1)! \ \lfloor n/2 \rfloor!} =\binom{n}{\lfloor n/2 \rfloor+1} \ \qty(\lfloor n/2 \rfloor+1)
\end{eqnarray}      
which is given in Eq. (11) of the main text.


\section{Evaluation of the optimal quantum bound of the Bell functional $\mathscr{G}_n$ via the SOS Method}\label{detailedSOS}

We evaluate the optimal quantum value of the Bell functional $\mathscr{G}_n$ employing the sum-of-squares (SOS) relaxation method. This approach reformulates the non-convex optimisation over quantum states and measurement operators into a convex problem, enabling analytic characterisation of both the optimal bound and the structure of the corresponding quantum realisations.

Applying SOS relaxation, the shifted Bell operator can be written as
\begin{equation} \label{sstart}
    {G}_n \openone_n - \mathscr{G}_n = \sum_{x = 1}^{2^{n-1}} \frac{\omega_x}{2} \ M^{\dagger}_x M_x,
\end{equation}
where $M_x$ and $\omega_x$ are defined by
\begin{equation}\label{somegaa}  
M_{x} := \frac{\openone\otimes \mathcal{B}_x}{\omega_{x}} - A_x \otimes \openone \ \ ; \ \ \omega_{x} := \sqrt{\Tr[\mathcal{B}_x^{\dagger}\mathcal{B}_x \ \rho]} \geq 0, \ \ \text{and }  \mathcal{B}_x=\sum\limits_{y=1}^{n} (-1)^{z^x_y} B_y.
\end{equation}
Using Eq.~(\ref{somegaa}) and assuming dichotomic observables $A_x^2=B_y^2=\openone$ (since their eigenvalues are $\pm 1$), we find
\begin{equation} \label{sstart}
    M_x^\dagger M_x=\sum\limits_{x=1}^{2^{n-1}}\qty[ \frac{\qty(\mathcal{B}_x)^2}{\omega_x^2} +  \openone] -2\sum\limits_{x=1}^{2^{n-1}}\frac{A_x \otimes \mathcal{B}_x}{\omega_x}.
\end{equation}
Substituting $M_x$ and $\omega_{x}$ from Eq.~(\ref{somegaa}) into the right-hand side of Eq.~(\ref{sstart}) yields
\begin{equation} \label{sbello}
\sum_{x = 1}^{2^{n-1}} \frac{\omega_x}{2} \ M^{\dagger}_x M_x= \frac{1}{2}\sum\limits_{x=1}^{2^{n-1}}\qty[ \frac{\qty(\mathcal{B}_x)^2}{\omega_x} +  \omega_x \ \openone ] -\sum\limits_{x=1}^{2^{n-1}} A_x \otimes \mathcal{B}_x = \frac{1}{2}\sum\limits_{x=1}^{2^{n-1}}\qty[ \frac{\qty(\mathcal{B}_x)^2}{\omega_x} +  \omega_x \ \openone] - \mathscr{G}_{n}
\end{equation}
Comparing this with the SOS definition and taking the expectation value with respect to $\rho$ gives
\begin{eqnarray}\label{sbetasos1a}
	G_{n}&=&\frac{1}{2}\sum\limits_{x=1}^{2^{n-1}} \frac{1}{\omega_x} \Tr[\qty(\mathcal{B}_x)^2  \rho ] + \frac{1}{2} \sum\limits_{x=1}^{2^{n-1}} \omega_{x} ,  \nonumber \\
	&=& \sum\limits_{x=1}^{2^{n-1}} \omega_x,  \ \ \ \text{since } \omega_{x}^2 =\Tr[\mathcal{B}_x^{\dagger}\mathcal{B}_x \rho]  
\end{eqnarray}
Thus, the optimal quantum bound is
\begin{equation}\label{ssosopt}
    (G_n)_{\mathcal{Q}}^{opt}= \max \sum_{x=1}^{2^{n-1}} \omega_x = \max\sum_{x=1}^{2^{n-1}} \sqrt{\Tr[\mathcal{B}_x^{\dagger}\mathcal{B}_x \rho]}
\end{equation}
Expanding $\mathcal{B}_x^{\dagger}\mathcal{B}_x$ produces contributions both from both the diagonal terms $B_y^2$ and the cross terms $B_yB_{y'}$. Taking the expectation with respect to $\rho$ and using $B_y^2=\openone$, we obtain
\begin{equation}\label{omega1}
     \omega_x^2=n+\sum_{y< y'}   (-1)^{z^x_y+z^x_{y'}} \Tr[\qty{B_y, B_{y'}}\rho]
\end{equation}


\subsection{Evaluation of $\Tr[\qty{B_y,B_{y'}}\rho] \ \ \forall y \neq y'$} \label{appacp}

For all $\forall 1\leq y < y' \leq n$, define the real vectors
\begin{equation}\label{landc}
     \mathbf{\Lambda}:=\qty[\Lambda_{12},\Lambda_{13},\ldots,\Lambda_{(n-1)n}]^{T}\in \mathbb{R}^{\frac{n(n-1)}{2}}, \ \ \text{with } \Lambda_{yy'}:=\Tr[\qty{B_y, B_{y'}}\rho].
\end{equation}
and, for each $x \in \qty{1,2,\ldots,2^{n-1}}$,
\begin{equation}\label{landc1}
     \mathbf{C}_x:=\qty[c^x_{(1,2)},c^x_{(1,3)},\ldots, c^x_{(n-1,n)}]\in \mathbb{R}^{\frac{n(n-1)}{2}}, \ \ \text{with } c^x_{(y,y')}:= (-1)^{z^x_y+z^x_{y'}} \in \qty{+1,-1}.
\end{equation}
With these definitions, the optimisation in Eq.~(\ref{ssosopt}) becomes
\begin{equation}\label{ssosopt1}
    (G_n)_{\mathcal{Q}}^{opt} = \max \sqrt{n+\mathbf{C}_x \cdot \mathbf{\Lambda}}.
\end{equation}

\begin{Lemma} \label{idenlemsup}
 Let $z^x=(z_1^x,z_2^x,\ldots,z_n^x)$ denote binary strings of length $n$, where each component $z^x_y \in \qty{0,1}$, $z^x_1=0$ fixed and the remaining $n-1$ bits varying freely, yielding $2^{n-1}$ distinct strings. Then, for all $x\in \qty{1,\cdots,2^{n-1}}$ and $y,y'\in \qty{1,\cdots,n}$, the following identities hold
    \begin{eqnarray}
          &\text{(i)}& \sum_{x=1}^{2^{n-1}} (-1)^{z^x_y+z^x_{y'}} = \begin{cases}
              0 & \text{if } y \neq y' \\
              2^{n-1} & \text{if } y = y'
          \end{cases}\label{iden1} \\
               &\text{(ii)}& \sum_{x=1}^{2^{n-1}} (-1)^{z^x_{y_1}+z^x_{y'_1}} \cdot (-1)^{z^x_{y_2}+z^x_{y'_2}} =  \begin{cases} 
               0 & \text{if } (y_1,y'_1)\neq (y_2,y'_2) \\
               2^{n-1} & \text{if } (y_1,y'_1) = (y_2,y'_2)
          \end{cases} \label{iden2}
    \end{eqnarray}
\end{Lemma}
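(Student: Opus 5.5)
The plan is to reduce both identities to one elementary fact about characters of the group $\{0,1\}^{n-1}$. Fix $z^x_1=0$ and let the remaining bits $(z^x_2,\ldots,z^x_n)$ range over all of $\{0,1\}^{n-1}$. For any subset $T\subseteq\{1,\ldots,n\}$ consider
\begin{equation*}
S(T):=\sum_{x=1}^{2^{n-1}}(-1)^{\sum_{y\in T}z^x_y}.
\end{equation*}
The key observation is the following. If $T$ contains some index $t\neq 1$, then the map that flips the $t$-th bit is a bijection on the set of strings. It preserves $z^x_1=0$ and reverses the sign of every summand, so $S(T)=-S(T)$ and hence $S(T)=0$. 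If instead $T\subseteq\{1\}$, every summand equals $1$ because $z^x_1=0$, so $S(T)=2^{n-1}$.

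For (i), I first reduce the exponent modulo $2$. Since $(-1)^{2z}=1$, the exponent $z^x_y+z^x_{y'}$ can be replaced by the sum over the symmetric difference $\{y\}\triangle\{y'\}$. When $y=y'$ this set is empty and the sum is $2^{n-1}$. When $y\neq y'$ the set is $\{y,y'\}$, and at least one of its two elements differs from $1$, so the sum vanishes by the observation above.

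For (ii), I read $(y_1,y'_1)$ and $(y_2,y'_2)$ as the unordered pairs with $y_i\neq y'_i$ that index the components of $\mathbf{\Lambda}$ and $\mathbf{C}_x$. The product of the two signs is $(-1)^{\sum_{y\in T}z^x_y}$, where $T=\{y_1,y'_1\}\triangle\{y_2,y'_2\}$. If the pairs coincide, then $T=\emptyset$ and the sum is $2^{n-1}$. If they differ, $T$ is nonempty. Moreover $|T|$ is even, because it is the symmetric difference of two $2$-element sets, so $|T|\in\{2,4\}$. An even-sized nonempty set cannot be contained in $\{1\}$, so it has an element other than $1$, and the observation gives $0$.

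The only point needing care, and hence the main obstacle, is the role of the fixed first bit. A naive argument of the form ``flip bit $y$'' fails when $y=1$. This is exactly why the parity argument on $|T|$ is needed in (ii), and why (i) uses the fact that two distinct indices cannot both equal $1$. I would also state explicitly that (ii) is meant for off-diagonal pairs ($y_i\neq y'_i$), since this is the setting of Eq.~(\ref{landc}); under that convention the identity says that the vectors $\mathbf{C}_x$, viewed as functions of $x$, are orthogonal with equal norms.
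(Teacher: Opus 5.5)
Your proof is correct and takes the same route as the paper. Both arguments rewrite each summand as a Walsh character $(-1)^{\sum_{i\in T}z_i}$ on the strings with $z_1=0$ and use that a nontrivial character sums to zero; you prove that vanishing directly with the bit-flip involution instead of citing Walsh--Hadamard orthogonality. Your reduction to the symmetric difference, the parity argument ruling out $T=\{1\}$, and the restriction of (ii) to unordered pairs with $y_i\neq y'_i$ are exactly what the paper's write-up leaves imprecise. In particular, the paper's intersection $\{y,y'\}\cap S$ is nonempty for $y=y'\neq 1$ even though the character is trivial there.
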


\begin{proof}
    
Each index $x$ corresponds to a binary string $z^x=\qty(z_1^x,z_2^x,\ldots,z_n^x)\in \{0,1\}^n$ with $z^x_1=0$. The complete set of such binary strings is
\begin{equation}
V=\qty{z=(0,z_2,z_3,\ldots,z_n)\ | \ z_i \in \qty{0,1} }, \ \  \abs{V}=2^{n-1}
\end{equation}
Any sum over the index $x$ can equivalently be expressed as a sum over the set $V$, i.e.
\begin{equation}
 \sum_{x=1}^{2^{n-1}}f(z^x)=\sum_{z\in V} f(z)
\end{equation}
for any function $f:V\to \mathbb{R}$. The order of summation is immaterial, as these are finite sums independent of indexing.

To prove Eqs.~(\ref{iden1}) and (\ref{iden2}), we employ the Walsh-Hadamard orthogonality. A Walsh function (or Walsh character) is defined for any subset $S\subseteq \qty{2,3,\ldots,n}$ as
\begin{equation}
    \chi_S : V \to \qty{+1,-1}, \ \ \chi_S=(-1)^{\sum_{i\in S}z_i}
\end{equation}
Each $\chi_S$ maps a binary string $z\in V$ to $\pm 1$, depending on the parity of bits indexed by $S$. When $S=\emptyset$, $\chi_{\emptyset}(z)=1$ is the trivial Walsh function. The collection $\qty{\chi_S:S\subseteq \{2,3,\ldots,n\}}$ forms a complete orthogonal basis for real-valued functions on $V$. Specifically,
\begin{equation}
    \sum_{z\in V} \chi_{S}(z) \chi_{S'}(z)=
    \begin{cases}
        2^{n-1} & \text{if } S=S' \\
        0 & \text{if } S\neq S'
    \end{cases}
\end{equation}
Now, for any pair $(y,y')$, define the subset $S_1=\qty{y,y} \cap S$. Then, 
\begin{equation}
    (-1)^{z_y+z_{y'}}=(-1)^{\sum_{i\in S_1}z_i}=\chi_{S_1}(z),
\end{equation}
since $z_1=0$ contributes nothing whenever $y=1$ or $y'=1$. Summing over all $z\in V$,
\begin{equation}
    \sum_{z\in V} (-1)^{z_y+z_{y'}}= \sum_{z\in V} \chi_{S_1}(z)
\end{equation}
Applying the orthogonality of Walsh characters gives
\begin{equation}
    \sum_{z\in V} \chi_{S_1}(z) =
    \begin{cases}
        2^{n-1}, & \text{if } S_1=\emptyset \\
        0, & \text{if } S_1\neq \emptyset
    \end{cases}
\end{equation}
The subset $S_1=\emptyset$ precisely when $y=y'$, thereby proving Eq.~(\ref{iden1}).

Next, for any two index pairs $\qty(y_1,y'_1)$ and $\qty(y_2,y'_2)$,
\begin{equation}
    (-1)^{z_{y_1}+z_{y'_1}} \cdot (-1)^{z_{y_2}+z_{y'_2}}=(-1)^{z_{y_1}+z_{y'_1}+z_{y_2}+z_{y'_2}}.
\end{equation}
We define another subset $S_2 = \qty{y_1,y'_1,y_2,y'_2} \cap S$ so that so that the product again corresponds to a Walsh character $\chi_{S_2}(z)$. Then,
\begin{equation}
   \sum_{z\in V}(-1)^{z_{y_1}+z_{y'_1}} \cdot (-1)^{z_{y_2}+z_{y'_2}}=\sum_{z \in V} \chi_{S_2}(z)=
   \begin{cases}
       2^{n-1}, & \text{if } S_2=\emptyset \\
       0, & \text{if } S_2\neq \emptyset
   \end{cases}
\end{equation}
The subset $S_2=\emptyset$ only when the two index pairs coincide, i.e. $\qty(y_1,y'_1)=\qty(y_2,y'_2)$, which establishes Eq.~(\ref{iden2}).
\end{proof}

Expanding Eq.~(\ref{ssosopt1}) in a Taylor series gives
\begin{equation}\label{taylor1}
   \begin{aligned}
      (G_n)^{opt}_{\mathcal{Q}} &= \max \sum_{x=1}^{2^{n-1}} \sqrt{n+\qty(\mathbf{C}_x \cdot \mathbf{\Lambda})} = \max \sum_{x=1}^{2^{n-1}} \qty{\sqrt{n}\sum_{k=0}^{\infty}\binom{1/2}{k}\frac{\qty(\mathbf{C}_x \cdot \mathbf{\Lambda})^k}{n^k}} \\
      &=\max \sum_{x=1}^{2^{n-1}} \qty{\sqrt{n}+\frac{\mathbf{C}_x \cdot \mathbf{\Lambda}}{2\sqrt{n}}-\frac{\qty(\mathbf{C}_x \cdot\mathbf{\Lambda})^2}{8 n^{\frac{3}{2}}} + \cdots}
   \end{aligned} 
\end{equation}
Considering the linear term $\sum_x \mathbf{C}_x\cdot\mathbf{\Lambda}$ from Eq.~(\ref{taylor1}), we get
\begin{equation} \label{expansup1}
\begin{aligned}
 \sum_{x=1}^{2^{n-1}} \mathbf{C}_x\cdot\mathbf{\Lambda} &= \sum_{x=1}^{2^{n-1}} \sum_{y<y'} c^x_{yy'} \ \Lambda_{yy'} \ \ \ [\text{From Eq.~(\ref{landc})}] \\
 &=\sum_{y<y'} \qty(\sum_{x=1}^{2^{n-1}} c^x_{yy'}) \ \Lambda_{yy'} =0 \ \ \ [\text{From Eq.~(\ref{iden1})}].
\end{aligned}
\end{equation}
Similarly, considering the quadratic term $\sum_x \qty(\mathbf{C}_x \cdot \mathbf{\Lambda})^2$ from Eq.~(\ref{taylor1}), we find
\begin{equation}\label{expansup2}
\begin{aligned}
 \sum_{x=1}^{2^{n-1}} \qty(\mathbf{C}_x \cdot \mathbf{\Lambda})^2 &= \sum_{x=1}^{2^{n-1}} \qty(\sum_{y<y'} c^x_{yy'} \ \Lambda_{yy'})^2 \ \ \ [\text{From Eq.~(\ref{landc})}] \\
 &= \sum_{x=1}^{2^{n-1}} \qty{\sum_{y_1<y'_1} \sum_{y_2<y'_2} c^x_{y_1y'_1}c^x_{y_2y'_2} \ \Lambda_{y_1y'_1} \ \Lambda_{y_2y'_2}} \\
 &=\sum_{y_1<y'_1} \sum_{y_2<y'_2} \qty(\sum_{x=1}^{2^{n-1}} c^x_{y_1y'_1}c^x_{y_2y'_2})  \ \Lambda_{y_1y'_1} \ \Lambda_{y_2y'_2} \\
 &=\sum_{y_1<y'_1} \sum_{y_2<y'_2} \Lambda_{y_1y'_1} \ \Lambda_{y_2y'_2} \qty(2^{n-1}\delta_{(y_1,y'_1),(y_2,y'_2)}) \ \ \ [\text{From Eq.~(\ref{iden2})}] \\ 
 &= 2^{n-1} \sum_{y\leq y'} \Lambda_{yy'}^2
\end{aligned}
\end{equation}
More generally, for an arbitrary power $k$,
\begin{equation}\label{taylor2}
\begin{aligned}
   \sum_{x=1}^{2^{n-1}} \qty(\mathbf{C}_x\cdot\mathbf{\Lambda})^k &=  \sum_{x=1}^{2^{n-1}} \qty(\sum_{y_1<y'_1} \cdots \sum_{y_k<y'_k} \Lambda_{y_1y'_1} \cdots \Lambda_{y_ky'_k} \ c^x_{(y_1<y'_1)} \cdots c^x_{(y_k<y'_k)}) \\
   &=\sum_{y_1<y'_1} \cdots \sum_{y_k<y'_k} \Lambda_{y_1y'_1} \cdots \Lambda_{y_ky'_k} \qty(\sum_{x=1}^{2^{n-1}} c^x_{(y_1<y'_1)} \cdots c^x_{(y_k<y'_k)})
\end{aligned}
\end{equation}
By the same reasoning as in Lemma~\ref{idenlemsup}, and using Walsh–Hadamard orthogonality, the inner sum over $x$ is non-zero only when every index appears with even multiplicity (i.e. all pairs are repeated). Therefore,
\begin{equation}\label{iden3}
    \sum_{x=1}^{2^{n-1}} \qty(\mathbf{C}_x\cdot\mathbf{\Lambda})^k =
    \begin{cases}
        0, & \text{if } k=(2m+1) \\
        2^{n-1} \sum\limits_{y<y'} \Lambda^{2m}_{yy'}, & \text{if } k=2m
    \end{cases}
\end{equation}
The result in Eq.~(\ref{iden3}) can be understood as a direct consequence of parity symmetry within the Walsh–Hadamard basis. Each coefficient $c^x_{(y,y')}$ alternates sign according to the parity of the corresponding bits in the binary string $z^x$. When taking products of such terms, every index $y$ contributes a factor of $(-1)^{z^x_y}$ each time it appears. If an index appears an even number of times, these factors cancel pairwise, resulting in a net contribution of $+1$; if it appears an odd number of times, a residual $(-1)^{z^x_y}$ remains. Summing over all binary strings $z^x$ uniformly, any residual sign depending on $z^x_y$ averages to zero due to the symmetry between configurations with $z^x_y \in \{0,1\}$. Consequently, only those terms where all indices appear in even multiplicities, i.e., all pairs are repeated, survive the summation. This explains why all odd-order terms vanish and why, for even orders, only diagonal contributions proportional to $\Lambda^{2m}_{yy'}$ remain.

Substituting Eq.~(\ref{iden3}) into Eq.~(\ref{taylor1}), we obtain
\begin{equation}\label{taylorf}
   (G_n)^{opt}_{\mathcal{Q}} = F(\mathbf{\Lambda}):= \max \qty[ 2^{n-1}\sqrt{n} - 2^{n-1}\sqrt{n} \sum_{m=1}^{\infty}\frac{1}{n^m} \ \binom{1/2}{m} \ \qty(\sum\limits_{y<y'} \Lambda^{2m}_{yy'})]
\end{equation}
Note that the real valued function $F(\mathbf{\Lambda})$ is strictly concave, as its Hessian satisfies $H(\mathbf{\Lambda})=\laplacian{\mathbf{\Lambda}}<0$ throughout the interior of the compact convex domain $\mathcal{D}\subset \mathbb{R}^{\frac{n(n-1)}{2}}$. Strict concavity ensures a unique global maximum, attained where $\grad{\mathbf{\Lambda}}=0$, i.e., at $\Lambda_{yy'}=0 \ \forall y\neq y'$. 

Moreover, the point $\mathbf{\Lambda}=0$ is itself physically attainable: If Bob’s observables form a set of $n$ pairwise anti-commuting dichotomic operators acting irreducibly on a Hilbert space of dimension $2^{\lfloor n/2\rfloor}$ (for instance, a representation of the complex Clifford algebra), and $\rho=\dyad{\phi^{+}}$ is a maximally entangled state on this space, then indeed $\Tr[\qty{B_y,B_{y'}}\rho]=0$.

Evaluating $F$ at the optimal point $\mathbf{\Lambda}=0$ yields
\begin{equation}\label{sacoppr}
 (G_n)^{opt}_{\mathcal{Q}} = 2^{n-1}\sqrt{n}, \ \ \text{with } \Lambda_{yy'}=\Tr[\qty{B_y,B_{y'}}\rho]=0 \ \ \forall y<y' \in \qty{1,2,\ldots,n} 
\end{equation}
Thus, the maximal achievable value of the Bell functional $\mathscr{G}_n$ is attained if and only if there exists an optimal state $\rho$ for which all pairwise anticommutators vanish, i.e.,  $\Tr[\qty{B_y,B_{y'}}\rho]=0 \ \forall y\neq y'$, consequently $\omega_x=\sqrt{n} \ \forall x \in [2^{n-1}]$, and hence $\omega_x=\sqrt{n}, \ \forall x \in \qty[2^{n-1}]$.


\subsection{Observable relations} 

Taking the expectation value of the SOS identity given in Eq.~(\ref{sstart}) with the optimal state $\rho$, we obtain
\begin{equation}\label{orcos1}
     \Tr[ \qty({G}_n \openone_n - \mathscr{G}_n)\rho] = \sum_{x = 1}^{2^{n-1}} \frac{\omega_x}{2} \Tr[ M^{\dagger}_x M_x \rho]
\end{equation}
If the upper bound is saturated, i.e. $\Tr[\mathscr{G}_n \rho]=(G_n)^{opt}_Q=2^{n-1}\sqrt{n}$, then the left-hand side of Eq.~(\ref{orcos1}) vanishes, implying
\begin{equation}\label{orcos2}
\frac{\sqrt{n}}{2} \sum_{x = 1}^{2^{n-1}} \Tr[ M^{\dagger}_x M_x \rho]=0
\end{equation}
Since both $(M_x)^{\dagger}M_x$ and $\rho$ are positive semidefinite, each term $\Tr[ M^{\dagger}_x M_x \rho]$ is non-negative. Consequently, the sum can vanish only if every term is individually zero
\begin{equation}\label{orcos3}
    \Tr[ M^{\dagger}_x M_x \rho]=0 \ \ \ \forall x.
\end{equation}
Let $\ket{\psi}$ be any purification of $\rho$ on $\qty(\mathcal{H}_A\otimes \mathcal{H}_{J_A})\otimes\qty(\mathcal{H}_B\otimes \mathcal{H}_{J_B}) $, where $\mathcal{H}_{J_A},\mathcal{H}_{J_B}$ are suitable auxiliary Hilbert spaces, and $M_x$ acts on the system Hilbert space only. Then,
\begin{equation}\label{mdm1}
     \Tr[ M^{\dagger}_x M_x \rho]=\expval{M^{\dagger}_x M_x \otimes \openone}{\psi}=\norm{M_x \otimes \openone \ket{\psi}}^2
\end{equation}
By Eq.~(\ref{orcos3}), the norm on the right is zero, and omitting tensor factors for clarity we obtain
\begin{equation}\label{msizero}
M_x\ket{\psi}=0 \ \ \ \forall x.
\end{equation}
Substituting the expression for $M_x$ from Eq.~(\ref{somegaa}), we find
\begin{equation}\label{aborf}
   \qty(\openone \otimes \mathcal{B}_x) \ket{\psi}=\sqrt{n}  \qty(A_x\otimes\openone) \ket{\psi} \ \ \ \forall x.
\end{equation}
Acting with $\qty(A_x\otimes\openone)$ on left of this Eq.~(\ref{aborf}) gives
\begin{equation}\label{aborf1}
   \qty(A_x \otimes \mathcal{B}_x) \ket{\psi}=\sqrt{n} \ket{\psi} \ \ \ \forall x.
\end{equation}
Similarly, acting with $\qty(\openone \otimes \mathcal{B}_x)$ on the left of Eq.~(\ref{aborf}) yields
\begin{equation}\label{aborf2}
 \forall x \ \   \qty(\openone \otimes \mathcal{B}_x^2) \ket{\psi}=n \ket{\psi} \implies \sum_{y<y'}c^x_{yy'} \qty(\openone \otimes \qty{B_y,B_{y'}}) \ket{\psi}=0.
\end{equation}
where $c^x_{yy'}$ is defined in Eq.~(\ref{landc1}). The columns of 
the matrix $\mathbf{C}=\qty(c^x_{yy'})$ are mutually orthogonal and nonzero, as follows from Eq.~(\ref{iden2}) of the lemma~\ref{idenlemsup}, namely $\sum_x c^x_{yy'}c^x_{y_1y'_1}=2^{n-1}\delta_{(y,y'),(y_1,y'_1)}$. Because the linear map $\qty(\openone \otimes \qty{B_y,B_{y'}}) \ket{\psi} \mapsto \sum_{y<y'}c^x_{yy'} \qty(\openone \otimes \qty{B_y,B_{y'}}) \ket{\psi}$ is injective, one may multiply Eq.~(\ref{aborf2}) by $c^x_{y_0y'_0}$ and sum over all $x$ without affecting equality,
\begin{equation}\label{aborf3}
    \sum_x c^x_{y_0y'_0} \sum_{y<y'}c^x_{yy'} \qty(\openone \otimes \qty{B_y,B_{y'}}) \ket{\psi}=0 \implies \sum_{y<y'}c^x_{yy'} \qty(\sum_x c^x_{y_0y'_0} c^x_{yy'})  \qty(\openone \otimes \qty{B_y,B_{y'}}) \ket{\psi}=0. 
\end{equation}
Using, $\sum_x c^x_{yy'}c^x_{y_1y'_1}=2^{n-1}\delta_{(y,y'),(y_1,y'_1)}$, we obtain
\begin{equation}\label{finacb}
   2^{n-1} \qty(\openone \otimes \qty{B_{y_0},B_{y'_0}}) \ket{\psi}=0 \implies  \qty(\openone \otimes \qty{B_{y},B_{y'}}) \ket{\psi}=0 \ \ \forall y<y'.
\end{equation}
Thus, for the optimal violation, Bob's observables must be mutually anticommuting on the support of the optimal state $\ket{\psi}$.

To derive the corresponding relations for Alice’s observables, consider two distinct indices $x\neq x'$. Expanding $\mathcal{B}_x$ and $\mathcal{B}_{x'}$ in terms of $B_y$ and using Eq.~(\ref{aborf}), we have
\begin{equation}
    \qty(A_x\otimes\openone) \ket{\psi}=\frac{1}{\sqrt{n}} \qty(\openone \otimes \sum_{y=1}^n (-1)^{z_y^x} B_y) \ket{\psi}, \ \ \ \qty(A_{x'}\otimes\openone) \ket{\psi}=\frac{1}{\sqrt{n}} \qty(\openone \otimes \sum_{y'=1}^n (-1)^{z_{y'}^x} B_{y'}) \ket{\psi}
\end{equation}
Then,
\begin{equation}\label{alicac1}
\begin{aligned}
 \qty(A_x A_{x'} \otimes\openone) \ket{\psi} &= \frac{1}{n} \qty(\openone \otimes\sum_{y,y'} (-1)^{z_y^x+z_{y'}^{x'}}  B_y B_{y'}) \ket{\psi} \\
 &=\frac{2}{n} \qty(\openone \otimes \sum_{y=y'}(-1)^{z_y^x+z_{y'}^{x'}} B_yB_{y'}) \ket{\psi} + \frac{2}{n} \qty(\openone \otimes \sum_{y<y'} (-1)^{z_y^x+z_{y'}^{x'}}  B_y B_{y'}) \ket{\psi} \\
 &= \frac{2}{n} \qty(\openone \otimes \sum_{y}(-1)^{z_y^x+z_{y}^{x'}} \openone) \ket{\psi} + \frac{2}{n} \qty(\openone \otimes \sum_{y<y'} (-1)^{z_y^x+z_{y'}^{x'}}  B_y B_{y'}) \ket{\psi}
 \end{aligned}
\end{equation}
Similarly,
\begin{equation}\label{alicac2}
 \qty(A_{x'} A_x \otimes\openone) \ket{\psi} = \frac{2}{n} \qty(\openone \otimes \sum_{y}(-1)^{z_y^x+z_{y}^{x'}} \openone) \ket{\psi} + \frac{2}{n} \qty(\openone \otimes \sum_{y<y'} (-1)^{z_y^x+z_{y'}^{x'}} B_{y'} B_y ) \ket{\psi}
\end{equation}
Using the fact that $\qty(\openone \otimes \qty{B_{y},B_{y'}}) \ket{\psi}_{y\neq y'}=0$ for all $y \neq y'$, and adding Eqs.~(\ref{alicac1}) and (\ref{alicac2}), we obtain
\begin{equation}
   \qty(\qty{A_x,A_{x'}}\otimes\openone)\ket{\psi}=\frac{2}{n} \sum_{y} (-1)^{z_y^x+z_{y}^{x'}} \ket{\psi} \ \ \ \forall x\neq x' \in [2^{n-1}].
\end{equation}
Hence, the anticommutator of Alice’s observables is proportional to the Hamming dot product of the corresponding binary strings $z_y^x$ and $z_{y}^{x'}$.

Now, consider the quantity $\sum\limits_{x=1}^{2^{n-1}} (-1)^{z_y^x} A_x$. From Eq.~(\ref{aborf}), we find
\begin{equation}
\begin{aligned}
     \sum\limits_{x=1}^{2^{n-1}} (-1)^{z_y^x} A_x \otimes \openone \ket{\psi} &= \frac{1}{\sqrt{n}}  \sum\limits_{x=1}^{2^{n-1}}\sum\limits_{y'=1}^n (-1)^{z_y^x+z_{y'}^x} \qty(\openone\otimes B_{y'}) \ket{\psi} \\
    &= \frac{2^{n-1}}{\sqrt{n}} \sum\limits_{y'=1}^{n} \delta_{y,y'} \qty(\openone\otimes B_{y'}) \ket{\psi}=\frac{2^{n-1}}{\sqrt{n}} \qty(\openone\otimes B_y)\ket{\psi}
\end{aligned}
\end{equation}
Define the normalised, scaled observables for Alice as
\begin{equation} \label{aliceobsy}
    \mathcal{A}_y:=\frac{\sqrt{n}}{2^{n-1}} \sum\limits_{x=1}^{2^{n-1}} (-1)^{z_y^x} \ A_x \ \ \ \forall y \in [n].
\end{equation}
Then the relationship between Alice’s and Bob’s observables on the support of $\ket{\psi}$ can be re-expressed, using Eq.~(\ref{aborf}), as
\begin{equation} \label{baorf}
   \qty(\openone\otimes B_y)\ket{\psi}=  \qty(\mathcal{A}_y \otimes \openone) \ket{\psi} \implies \ev**{\qty(\mathcal{A}_y \otimes B_y)}{ \psi}=1 \ \ \ \forall y\in[n].
\end{equation}
Thus, on the support of the optimal state, Bob’s operators act as the transpose (or conjugate counterparts) of Alice’s.


\subsection{Structure of the Optimal State}\label{optimalState}

Let $\qty{\ket{i}_A},\qty{\ket{j}_B}$ be fixed orthonormal bases of the local Hilbert spaces $\mathcal{H}_A$ and $\mathcal{H}_B$, respectively. Define $\mu_{ij}\in \mathbb{C}$, satisfying $\abs{\mu_{ij}}\leq 1$ and $\sum_{ij}\abs{\mu_{ij}}^2 =1$, as the complex coefficients obtained by projecting $\ket{\psi}$ onto each product basis element, i.e., $\mu_{ij}=\mel{i}{\psi}{j}$. Then, any pure bipartite state $\ket{\psi} \in \mathcal{H}_A \otimes \mathcal{H}_B$ admits the decomposition
\begin{equation} \label{stader1}
    \ket{\psi}=\sum_{i,j=1}^d \mu_{ij} \ket{i}_A \otimes \ket{j}_B=\mathrm{vec}\qty(\mathscr{M})
\end{equation}
where the vectorisation map $\mathrm{vec}: \mathbb{C}^{d \times d} \to \mathbb{C}^{d^2}$ is defined as
\begin{equation}
 \mathrm{vec}\qty(\mathscr{M}):=\sum_{i,j=1}^d \mu_{ij} \ket{i}_A \otimes \ket{j}_B,
\end{equation}
thereby associating the bipartite vector $\ket{\psi}$ with the $d\times d$ coefficient matrix $\mathscr{M}:=[\mu_{ij}]\in \mathbb{C}^{d \times d}$, i.e. $\mathscr{M}_{ij}=\mu_{ij}$.

There always exists local unitary $U_A$ and $V_B$ acting on $\mathcal{H}_A$ and $\mathcal{H}_B$, respectively, that transform the local bases to the Schmidt basis, in which $\mathscr{M}$ becomes diagonal and positive. This follows from the singular value decomposition, 
\begin{equation}
 \mathscr{M}=U_A \mathcal{D} V_B^{\dagger}, \ \   \mathcal{D}=\begin{bmatrix}
        D_{r \times r} & \emptyset_{r \times d-r} \\ 
        \emptyset_{d-r \times r} & \emptyset_{d-r \times d-r}
    \end{bmatrix},
\end{equation}
where $D_{r \times r}=\text{diag}\qty(\lambda_1,\ldots,\lambda_r)$ contains the Schmidt coefficients $\sqrt{\lambda_i} > 0$ with $\sum_i \lambda_i =1$, and $r=\mathrm{rank}\qty(\mathscr{M})$. The state in Eq.~(\ref{stader1}) then can be expressed as
\begin{equation}\label{statevec}
    \ket{\psi}=\mathrm{vec}\qty(\mathscr{M})= \mathrm{vec}\qty(U_A \mathcal{D} V_B^{\dagger}) =\qty(U_A \otimes V^{\dagger}) \ \mathrm{vec}\qty(\mathcal{D}^{\frac{1}{2}}), \ \ \ \text{with } \mathrm{vec}\qty(\mathcal{D}^{\frac{1}{2}}) = \sum_{i=1}^r \sqrt{\lambda_i} \ket{u_i} \otimes \ket{v_i}.
\end{equation}
Since,
\begin{equation}\label{phivec}
\mathrm{vec}\qty(\mathcal{D}^{\frac{1}{2}}) = \mathrm{vec}\qty(D^{\frac{1}{2}})=\sqrt{r} \qty(\openone \otimes D^{\frac{1}{2}}) \ket{\phi^{+}}, \ \ \text{with } \ket{\phi^{+}}=\frac{1}{\sqrt{r}} \sum_i \ket{i}_A \ket{i}_B,
\end{equation}
we arrive at the following form
\begin{equation}\label{statevec2}
    \ket{\psi}=\sqrt{r}\qty(U_A \otimes V^{\dagger}_B) \qty(\openone \otimes \mathcal{D}^{\frac{1}{2}}) \ket{\phi^{+}}.
\end{equation}
Now, applying the optimal condition from Eq.~(\ref{baorf}) and using the standard identity $\qty(A \otimes B) \mathrm{vec}\qty(Y)=\mathrm{vec}\qty(A Y B^T)$, we find
\begin{equation}\label{lhsschb}
    (\mathcal{A}_y \otimes \openone)\ket{\psi}=\mathrm{vec}\qty(\mathcal{A}_yU_A\mathcal{D}^{\frac{1}{2}}V_B^{\dagger})=(\openone \otimes B_y)\ket{\psi} = \mathrm{vec}\qty(U_A \mathcal{D}^{\frac{1}{2}} V_B^{\dagger} B_y^{T})
\end{equation}
Injectivity of the vectorisation map implies
\begin{equation}\label{soprel}
  \mathcal{A}_yU_A\mathcal{D}^{\frac{1}{2}}V_B^{\dagger} = U_A \mathcal{D}^{\frac{1}{2}} V_B^{\dagger} B_y^{T} \implies \qty(U_A^{\dagger}\mathcal{A}_y U_A) \mathcal{D}^{\frac{1}{2}} = \mathcal{D}^{\frac{1}{2}} \qty(V_B^{\dagger} B_y^T V_B)
\end{equation}
We now redefine the observables in the Schmidt basis as
\begin{equation}
    \mathcal{A}_y \equiv U_A \mathcal{A}_y U_A^{\dagger}, \ \ B_y\equiv V_B^{\dagger}B^T_y V_B, 
\end{equation}
and define the Schmidt basis vectors by 
\begin{equation}
    \ket{u_i}:=U_A \ket{i}_A, \ \ \ \ket{v_i}:=V_B \ket{i}_B, \ \ \ \text{with } i \in \qty{1,2,\ldots,r}.
\end{equation}
Since $\mathcal{A}_y,B_y$ act on $\mathcal{H}^d$ locally, we extend the Schmidt basis by adding $d-r$ orthogonal vectors 
\begin{equation}\label{dschmidtb}
    \qty{\ket{u_p}}_{p=1}^d := \qty{\qty{\ket{u_i}}_{i=1}^r,\qty{\ket{u_i}}_{i=r+1}^d}; \ \ \ \qty{\ket{v_q}}_{q=1}^d := \qty{\qty{\ket{v_j}}_{j=1}^r,\qty{\ket{v_j}}_{j=r+1}^d}.
\end{equation}
In this basis, any Hermitian involutive operator acting on $\mathcal{H}^d$ can be expressed as a block matrix respecting the decomposition $\mathbb{C}^d=\mathbb{C}^r \oplus \mathbb{C}^{d-r}$
\begin{equation} \label{obsdchb}
    \mathcal{A}_y = \begin{bmatrix}
        \mathcal{A}_y^r & X_A \\
        X^{\dagger}_A & Z_A
    \end{bmatrix}, \ \ \ B_y = \begin{bmatrix}
        B_y^r & X_B \\
        X^{\dagger}_B & Z_B
    \end{bmatrix},
\end{equation}
where $\mathcal{A}_y^r,B_y^r \in \mathbb{C}^{r \times r}$ act on the respective Schmidt subspaces $(r \times r)$, and $X\in \mathbb{C}^{r \times(d-r)}$ and $Z\in \mathbb{C}^{(d-r) \times (d-r)}$ are the off-diagonal and complementary blocks. Moreover, since $\mathcal{A}_y$ and $B_y$ are Hermitian with $\mathcal{A}_y^2=B_y^2=\openone$, the following block relations hold
\begin{equation}\label{propblo}
    \qty(\mathcal{A}_y^r)^2+X_A X^{\dagger} = \openone_r, \ \ X^{\dagger}X_A + Z_A^2 = \openone_{d-r}, \ \ X^{\dagger} \mathcal{A}_y^r + Z_A X^{\dagger} = \mathcal{A}_y^r X_A + X_A Z_A=0.
\end{equation}
Now, in the extended Schmidt basis, the optimal condition of Eq.~(\ref{soprel}) becomes
\begin{equation}\label{abdaism}
    \mathcal{A}_y \mathcal{D}^{\frac{1}{2}} = \mathcal{D}^{\frac{1}{2}}B_y \implies \begin{bmatrix}
        \mathcal{A}_y^r D^{\frac{1}{2}} && \emptyset \\
       X_A^{\dagger} D^{\frac{1}{2}} && \emptyset
    \end{bmatrix} =
    \begin{bmatrix}
        D^{\frac{1}{2}} B_y^r  && D^{\frac{1}{2}} X_B \\
        \emptyset && \emptyset
    \end{bmatrix} \implies
    \begin{cases}
        (i) & \mathcal{A}_y^r = D^{\frac{1}{2}} B_y^r D^{-\frac{1}{2}}  \\
        (ii) & X_b=X_A^{\dagger} =0
    \end{cases}
\end{equation}
Thus, both $\mathcal{A}_y$ and $b_y$ are block diagonal in the extended Schmidt basis. From Eq.~(\ref{propblo}), we further have $\qty(\mathcal{A}_y^r)^2=\qty(B_y^r)^2=\openone_r$. Using $\qty(\mathcal{A}_y^r)^2=\openone_r$, we find
\begin{equation} \label{bdcom}
  \openone=\qty(\mathcal{A}_y^r){\dagger} \mathcal{A}_y^r  =  \qty(D^{\frac{1}{2}}B_y^r D^{-\frac{1}{2}})^{\dagger}\qty(D^{\frac{1}{2}}B_y^r D^{-\frac{1}{2}}) = D^{-\frac{1}{2}}B_y^r D B_y^r  D^{-\frac{1}{2}} \implies B_y^r D B_y^r =D \implies \qty[B_y^r,D]=0 \ \ \forall y \in [n].
\end{equation}
Using $\qty(B_y^r)^2=\openone_r$, A symmetric argument yields $\qty[\mathcal{A}_y^r,D]=0 \ \ \forall y \in [n]$. 

Let the distinct eigenvalues of $D$ be denoted by $\lambda^{(k)}$ with multiplicities $m_k$ for $k=1,\ldots,\alpha$, satisfying $\sum_{i=1}^{\alpha} m_{k}=r$. Then, $D=\oplus_{k=1}^{\alpha} \lambda^{(k)} \openone_{m_k}$, where $\openone_{m_k}$ is the identity on the $m_k$-dimensional degeneracy subspace, and the state can be expressed as
\begin{equation}
    \ket{\psi}=\sum_{k=1}^{\alpha} \sqrt{\lambda^{(k)}} \sum_{i \in I_k} \ket{u_i} \otimes \ket{v_i}=\bigoplus\limits_{k=1}^{\alpha} \sqrt{\lambda^{(k)} m_k} \ket{\phi^{+}_{m_k}}, \ \ \ \text{with } \ \ket{\phi^{+}_{m_k}}:=\frac{1}{\sqrt{m_k}} \sum_{i \in I_k} \ket{u_i} \otimes \ket{v_i}
\end{equation}
where each $\ket{\phi^{+}_{m_k}} \in \mathcal{H}^{m_k}_A \otimes \mathcal{H}^{m_k}_B$ is a maximally entangled on the $k$-th degeneracy block. The observables can be expressed in the Schmidt basis as
\begin{equation}
 \forall y, \  \mathcal{A}_y=\sum_{i,j}a_{ij}\dyad{u_i}{u_j}, \ \ a_{ij}= \matrixel{u_i}{\mathcal{A}_y}{u_j}, \ \ \ \text{and }  \  B_y=\sum_{i,j}b_{ij}\dyad{v_i}{v_j}, \ \ b_{ij}= \matrixel{v_i}{B_y}{v_j}.
\end{equation}
Since all algebraic relations are valid only on the support of $\ket{\psi}$, whose rank is $r$, we may, without loss of generality, restrict our attention to the subspace $\mathcal{H}^r_A \otimes \mathcal{H}^r_A$. For notational simplicity, we omit the superscript $r$ henceforth when referring to states, observables, or Hilbert spaces.

Using the relation obtained in Eq.~(\ref{bdcom}), with straightforward algebra yields
\begin{equation}
    [B_y,D]=0 \implies \matrixel{v_i}{[B_y,D]}{v_j}=\qty(\lambda_j-\lambda_i)b_{ij} =0 \ \ \forall i,j
\end{equation}
It follows that whenever $\lambda_i \neq \lambda_j$, the corresponding coefficient $b_{ij}=0$ for for every pair $i,j$. Hence, each $B_y$ has non-zero entries only within subspaces associated with equal Schmidt coefficients. The same argument applies to $\mathcal{A}_y$ in the $\ket{u_i}$ basis. Therefore, both sets of observables $\mathcal{A}_y$ and $B_y$ are block-diagonal with respect to the degeneracy structure of $D$
\begin{equation}
 \forall y, \  \mathcal{A}_y = \bigoplus\limits_{k=1}^{\alpha} \mathcal{A}_y^{(k)}, \ \ \ \text{and }  B_y = \bigoplus\limits_{k=1}^{\alpha} B_y^{(k)}, \ \ \ \mathcal{A}_y^{(k)}, B_y^{(k)} \in \mathscr{L}\qty(\mathcal{H}^{(k)}),
\end{equation}
where each block corresponds to a degenerate Schmidt value $\lambda_{i_1}=\lambda_{i_2}=\cdots=\lambda_{i_{m_k}}$, acting on the subspace $\mathcal{H}^{(k)}=\mathrm{span}\qty{\ket{u_{i_1}}:i \in I_k}$. In matrix form, all entries between different eigenspaces vanish, rendering each observable block-diagonal.

Using the optimality condition $\mathcal{A}_y D^{\frac{1}{2}}=D^{\frac{1}{2}} B_y$ from Eq.~(\ref{abdaism}), and noting that $D^{\frac{1}{2}}$ acts as a scalar multiple of the identity on each degenerate subspace, i.e. $\eval{D^{\frac{1}{2}}}_{\mathcal{H}^{(k)}}=\sqrt{\lambda^{(k)}} \openone_{m_{k}}$, we get
\begin{equation}
 \forall y, \  \mathcal{A}_y^{(k)}\sqrt{\lambda^{(k)}} \openone_{m_{k}}=\sqrt{\lambda^{(k)}} \openone_{m_{k}}B_y^{(k)} \implies \mathcal{A}_y^{(k)} = B_y^{(k)} \ \ \ \text{for every block $k$.}
\end{equation}
Hence, on each Schmidt block, Alice’s and Bob’s observables coincide, and each block supports the same operator algebra. Consequently, the block matrices of Alice and Bob coincide in the Schmidt basis, i.e. $\mathcal{A}_y=B_y$.

Block-diagonality of $\mathcal{A}_y$ and $B_y$ in the Schmidt basis leads to the following situations. If all Schmidt values are distinct, i.e., $\alpha=r$ and $m_k=1$ for every $k$, then each block is $1 \times 1$. Thus, all $\mathcal{A}_y$ ($B_y$) are diagonal in the Schmidt basis and hence mutually commuting. Since Bell nonlocality requires incompatible local measurements, this regime cannot yield a Bell violation. 

If all Schmidt values are equal, i.e., $\lambda_i=\frac{1}{r} \ \forall i$, then $D=\frac{1}{r}\openone$, and the shared state is maximally entangled state in $\mathcal{H}^r_A \otimes \mathcal{H}^r_B$. The optimality condition enforces complete anticommutation, i.e., $\qty{B_y,B_{y'}}=\qty{\mathcal{A}_y,\mathcal{A}_{y'}}=\delta_{yy'}\openone$. The minimal local Hilbert space dimension that accommodates $n$ such pairwise anticommuting Hermitian involutions is the smallest complex representation of the Clifford algebra $\mathrm{Cl}_n\qty(\mathbb{C})$, given by $r_{min}=2^{\lfloor \frac{n}{2}\rfloor}$. However, this case is sufficient, but not necessary, in general.

In general, for optimal Bell violation, degeneracy is necessary, i.e., $m_k\geq 2$ to allow non-commuting observables. Now, lets analyse the algebraic properties in each block. Since each $\mathcal{A}_y$ and $B_y$ are block diagonal, each block will only act on the corresponding block of the shared state. Thus, all optimality conditions hold independently within each block. From $\qty(\qty{\mathcal{A}_y,\mathcal{A}_{y'}}\otimes\openone) \ket{\psi} =\qty(\openone\otimes \qty{B_y,B_{y'}}) \ket{\psi}=0$, restricting to block $k$ gives
\begin{equation}
  \forall y, \ \  \qty(\qty{\mathcal{A}_y^{(k)},\mathcal{A}_{y'}^{(k)}}\otimes\openone) \ket{\phi^+} =\qty(\openone\otimes \qty{B_y^{(k)},B_{y'}^{(k)}}) \ket{\phi^+}=0.
\end{equation}
where $\ket{\phi^+}=\frac{1}{\sqrt{m_k}}\sum_{i=1}^{m_k} \ket{i}\otimes\ket{i}$ is the maximally entangled state on the $k$-th block. Using vectorisation, this becomes
\begin{equation}
\frac{1}{\sqrt{m_k}}\mathrm{vec}\qty(\qty{\mathcal{A}_y^{(k)},\mathcal{A}_{y'}^{(k)}})=\frac{1}{\sqrt{m_k}}\mathrm{vec}\qty(\qty{B_y^{(k)},B_{y'}^{(k)}}^T)=0
\end{equation}
Injectivity of the vectorisation map implies
\begin{equation}
\forall y\neq y', \ \ \qty{\mathcal{A}_y^{(k)},\mathcal{A}_{y'}^{(k)}}=0, \ \ \text{and } \qty{B_y^{(k)},B_{y'}^{(k)}}^T=0 \implies \qty{B_y^{(k)},B_{y'}^{(k)}}=0.
\end{equation}
Thus, within each Schmidt block, the local observables form a set of $n$ Hermitian involutions that pairwise anticommute.

Consequently, the degeneracy of $D$ partitions the global system into orthogonal subspaces, each supporting a maximally entangled component of $\ket{\psi}$. Within each such subspace, the observables realise a complex representation of the complex Clifford algebra $\mathrm{Cl}_n\qty(\mathbb{C})$. Hence, no block of smaller dimension $2^{\lfloor n/2 \rfloor}$ can achieve the optimal Bell violation.

Collecting all results, every optimal quantum realisation possesses the structure
\begin{equation}
    \ket{\psi}=\bigoplus_k \sqrt{\lambda^{(k)}m_k} \ket{\phi^{+}_{m_k}}, \ \ m_k \geq 2^{\lfloor \frac{n}{2} \rfloor}, \ \ \text{with } \mathcal{A}_y^{(k)}= B_y^{(k)},
\end{equation}
where each block carries an irreducible representation of the complex Clifford algebra $\mathrm{Cl}_n\qty(\mathbb{C})$, unique up to local unitary equivalence. The direct-sum freedom corresponds to classical probabilistic mixing between identical Clifford sectors, an unavoidable degeneracy that leaves the Bell functional value unchanged but does not affect the self-testing equivalence of the underlying irreducible blocks.
\section{An example for self-testing for $n=4$ case}\label{apxn4}

If a quantum strategy $\qty{A_x^{(k)},B_y^{(k)},\ket{\psi}^{(k)} \ \forall x,y \in \{1,2,3,4\}}$ provide the optimal quantum value of the Bell functional $\mathscr{G}_4$ within the $k-th$ Schmidt block, satisfying 
\begin{equation}
    \qty{\mathcal{A}_{y}^{(k)},\mathcal{A}_{y'}^{(k)}}=\qty{B_{y}^{(k)},B_{y'}^{(k)}}=2 \delta_{yy'}, \ \ \mathcal{A}_{y}^{(k)}=\qty(B_y^{(k)})^T,
\end{equation}
then there exists a unitary of the form $\mathcal{U}^{(k)} = U_A^{(k)}\otimes V_B^{(k)}$ acting such that:
\begin{equation}
    \begin{aligned}\label{A4ob}
        \qty(U_A^{(k)})^\dagger (\mathcal{A}_1) U_A^{(k)} &= \qty(\sigma_z \otimes \openone_{2})_A\otimes\openone_{J^{(k)}_A}  \ ;\  \  \
        \qty(U^{(k)}_A)^\dagger (\mathcal{A}_2) U^{(k)}_A = \qty(\sigma_y\otimes \openone_{2})_A\otimes\openone_{J^{(k)}_A},\\
        \qty(U^{(k)}_A)^\dagger (\mathcal{A}_3) U^{(k)}_A &= \qty(\sigma_x\otimes\sigma_z)_A\otimes\openone_{J^{(k)}_A}\ ;\  \ \ 
        \qty(U^{(k)}_A)^\dagger (\mathcal{A}_4) U_A^{(k)} = \qty(\sigma_x\otimes\sigma_{y})_A\otimes\openone_{J^{(k)}_A}.
    \end{aligned} 
\end{equation}
similarly for the Bob's part
\begin{equation}
    \begin{aligned}\label{B4ob}
        \qty(V_B^{(k)})^\dagger (B_1) V_B^{(k)} &= \qty(\sigma_z\otimes \openone_{2})_B\otimes\openone_{J_B^{(k)}} \ ;\  \  \
        \qty(V_B^{(k)})^\dagger (B_2) V_B^{(k)} = -\qty(\sigma_y\otimes  \openone_{2})_B\otimes\openone_{J_B^{(k)}},\\
        \qty(V_B^{(k)})^\dagger (B_3) V_B^{(k)} &=\qty(\sigma_x\otimes\sigma_z)_B\otimes\openone_{J_B^{(k)}}\ ;\  \  \
        \qty(V_B^{(k)})^\dagger (B_4) V_B^{(k)} = -\qty(\sigma_x\otimes\sigma_{y})_B\otimes\openone_{J_B^{(k)}}.
    \end{aligned}
\end{equation}
and 
\begin{equation}
    U_{A}^{(k)} \otimes V_{B}^{(k)} \ket{\psi} = \qty(\ket{\phi^+}\otimes\ket{\phi^+})_{A^\prime B^\prime}\otimes \ket{junk}_{AB}.
\end{equation}

To show such existence of local unitary, we omit the superscript $(k)$ denoting the $k$-th block. Any superscript appearing in the proof relates to iteration. To begin, without loss of generality, we consider a basis $V_1$ such that
\begin{equation}\label{b1}
    V_1 B_{1,4} V_1^\dag = \mathtt{B}_{1,4} = \sigma_z\otimes\openone_2\otimes\openone_{J_B}
\end{equation}
Moreover, let $\mathtt{B}_{2,4} =  \sum_{i,j=0}^{1} \ket{i}\bra{j}\otimes \chi_{ij}$ is in the same basis of $\mathtt{B}_{1,4}$. Then, $\mathtt{B}_{2,4}$ can be written as
\begin{equation}
    \mathtt{B}_{2,4} = \begin{bmatrix}
        \chi_{00} &\chi_{01}\\
        \chi_{01}^\dagger & \chi_{11}
    \end{bmatrix} 
\end{equation}
where $\chi_{ij}$ are hermitian matrices satisfying of dimension $2J_{B}$ and $4{J_B} = d$. Now, using the anticommutation relation $\{\mathtt{B}_1,\mathtt{B}_2\}=0$, we get $\mathtt{B}_{2,4} = -\mathtt{B}_{1,4}\mathtt{B}_{2,4}\mathtt{B}_{1,4}$. Putting Eq. (\ref{b1}) in the above expression for $\mathtt{B}_{2,4}$, we obtain
\begin{equation}\label{b21}
    \mathtt{B}_{2,4} = \begin{bmatrix}
        0 & X_{2,4}\\
        X_{2,4}^\dag & 0
    \end{bmatrix} \ \ \ \
\end{equation}
where $X_{2,4}=\chi_{01}$.  
Similarly, we get
\begin{equation}
    \begin{aligned}
        \mathtt{B}_{3,4} &= \begin{bmatrix}
            0 & X_{3,4}\\
            X_{3,4}^\dagger &0
        \end{bmatrix} \ \ \ \;
        \mathtt{B}_{4,4} = \begin{bmatrix}
            0 & X_{4,4}\\
            X_{4,4}^\dagger &0
        \end{bmatrix}.
    \end{aligned}
\end{equation}
Let us construct an unitary of the form 
\begin{equation}
    V_2 = \begin{bmatrix}
        \openone_{2J_B} & 0\\
        0 & -\iota X_{2,4}
    \end{bmatrix}
\end{equation}
This operates on $\mathtt{B}_{1,4}$ as $V_2( \sigma_z\otimes\openone_2\otimes\openone_{J_B})V_2^\dagger = \sigma_z\otimes\openone_2\otimes\openone_{J_B}$. Acting the same unitary on $\mathtt{B}_{2,4}$, gives
\begin{equation*}
     \begin{aligned}
        \mathtt{B}_2 &=V_2 \begin{bmatrix}
            0 & X_{2,4}\\
            X_{2,4}^\dagger & 0
        \end{bmatrix}V_2^\dagger = \begin{bmatrix}
            0 & i\openone_{2J_B}\\
            -i\openone_{2J_B}^\dagger & 0
        \end{bmatrix} = -\sigma_y\otimes\openone_2\otimes\openone_{J_B}.
    \end{aligned}
\end{equation*}
Similarly, considering $\{\mathtt{B}_{2,4},\mathtt{B}_{3,4}\} =0$, we get 
\begin{equation}\label{b22}
     \begin{aligned}\mathtt{B}_{3,4}  =
\begin{bmatrix}
          0 & X_{3,4}X_{2,4}^\dagger X_{3,4}\\
            X_{3,4}^\dagger X_{2,4}  X_{3,4}^\dagger & 0
        \end{bmatrix}.
         \end{aligned}
        \end{equation}
        Comparing Eqs (\ref{b21}) and (\ref{b22}), we get $X_{2,4} = -X_{3,4}X_{2,4}^\dagger X_{3,4}$. Using this relation in the expression of $\mathtt{B}_{3,4}$ after the application of unitary operation $V$,
\begin{equation}
     \begin{aligned}
        \mathtt{B}_{3,4} &=V_2 \begin{bmatrix}
            0 & X_{3,4}\\
            X_{3,4}^\dagger & 0
        \end{bmatrix}V_2^\dagger = \begin{bmatrix}
            0 & \iota X_{3,4}X_{2,4}^\dagger\\
            -\iota X_{2,4}X_{3,4}^\dagger & 0
        \end{bmatrix} = \sigma_x\otimes(\iota X_{3,4}X_{2,4}^\dagger).
    \end{aligned}
\end{equation}
Similar calculation yields $\mathtt{B}_{4,4} = \sigma_x\otimes(\iota X_{4,4}X_{2,4}^\dagger)$. Note here that $\iota X_{3,4}X_{2,4}^\dagger$ and $\iota X_{4,4}X_{2,4}^\dagger$ are hermitian unitary operators. Interestingly, $\mathtt{B}_{3,4}$ and $\mathtt{B}_{4,4}$ satisfy $\{\mathtt{B}_{3,4},\mathtt{B}_{4,4}\}=0$ implying $
\iota X_{3,4}X_{2,4}^\dagger$ and $\iota X_{4,4}X_{2,4}^\dagger$ needs to be anti-commuting. To obtain the necessary simplifications, we have to construct $V_1^{(1)}$ such that $V_1^{(1)} \iota X_{3,4}X_{2,4}^\dagger \qty(V_1^{(1)})^\dag= \sigma_z\otimes\openone_{J_B}$ and the anticommuativity of $
\iota X_{3,4}X_{2,4}^\dagger$ and $\iota X_{4,4}X_{2,4}^\dagger$ gives 
\begin{equation}
    V_1^{(1)} \qty[\iota X_{4,4}X_{2,4}^\dagger] \qty(V_1^{(1)})^\dag= \begin{bmatrix}
        0 & X_4^{(1)}\\
        \qty(X_4^{(1)})^\dag & 0
    \end{bmatrix}
\end{equation}
Now we can define a unitary matrix 
\begin{equation}
    V_2^{(1)} = \begin{bmatrix}
        \openone_{J_B} & 0\\
        0 & -\iota X_4^{(1)}
    \end{bmatrix}
\end{equation}
which gives
\begin{equation}
   V_2^{(1)} V_1^{(1)} \qty[\iota X_{4,4}X_{2,4}^\dagger] \qty(V_1^{(1)})^\dag \qty(V_2^{(1)})^\dag= -\sigma_y\otimes\openone_{J_A}
\end{equation}
Hence, considering the unitary $V_B = \qty(\openone_2\otimes V_2^{(1)} V_1^{(1)} )V_2V_1$ gives the desired result
\begin{equation}
    \begin{aligned}
        V_B B_{1,4} V_B & = \sigma_z\otimes\openone_2\otimes\openone_{J_B}\\
        V_B B_{2,4} V_B & = -\sigma_y\otimes\openone_2\otimes\openone_{J_B}\\
        V_B B_{3,4} V_B & = \sigma_x\otimes\sigma_z\otimes\openone_{J_B}\\
        V_B B_{4,4} V_B & = -\sigma_x\otimes\sigma_y\otimes\openone_{J_B}
    \end{aligned}
\end{equation}
which satisfies given Eq. (\ref{B4ob}). For Alice's observables, we are dropping the notation $\mathcal{A}_{y,n}$ and will just use $\mathcal{A}_{y}$ for the proof to be simplified. Similar to the Bob's case, we can construct a unitary $U_1$ such that $U_1 \mathcal{A}_1 U_1^\dag = \sigma_z\otimes\openone_2\otimes\openone_{J_A}$ Then, $U_1 \mathcal{A}_2 U_1^\dag$ can be written as
\begin{equation}
    U_1 \mathcal{A}_2 U_1^\dag= \begin{bmatrix}
        \eta_{00} &\eta_{01}\\
        \eta_{01}^\dagger & \eta_{11}
    \end{bmatrix} 
\end{equation}
where $\eta_{ij}$ are hermitian matrices satisfying of dimension $2J_{A}$ and $4{J_A} = d$. 
\begin{equation}\label{a21}
     U_1 \mathcal{A}_2 U_1^\dag = \begin{bmatrix}
        0 & Y_2\\
        Y_2^\dag & 0
    \end{bmatrix} \ \ \ \
\end{equation}
where $Y_2=\eta_{01}$.  
Similarly, we get
\begin{equation}
    \begin{aligned}
        U_1 \mathcal{B}_3 U_1^\dag &= \begin{bmatrix}
            0 & Y_3\\
            Y_3^\dagger &0
        \end{bmatrix} \ \ \ \
        U_1 \mathcal{B}_4 U_1^\dag = \begin{bmatrix}
            0 & Y_4\\
            Y_4^\dagger &0
        \end{bmatrix}
    \end{aligned}
\end{equation}
 We construct a unitary  of the form 
\begin{equation}
    U_2 = \begin{bmatrix}
        \openone_{2J_A} & 0\\
        0 & \iota Y_2
    \end{bmatrix}
\end{equation}
This gives $U_2U_1 \mathcal{A}_2 U_1^\dag U_2^\dag = \sigma_y\otimes\openone_2\otimes\openone_{J_A}$. We will follow the similar procedure as done for the Bob's observables, the only difference being that we construct 
\begin{equation}
    U_2^{(1)} = \begin{bmatrix}
        \openone_{J_A} & 0\\
        0 & \iota Y_4^{(1)}
    \end{bmatrix}
\end{equation}
This gives us the final unitary $U_A =  \qty(\openone_2 \otimes U_2^{(1)} U_1^{(1)})U_2U_1$ which gives us
\begin{equation}
    \begin{aligned}
        U_A \mathcal{A}_1 U_A & = \sigma_z\otimes\openone_2\otimes\openone_{J_A}\\
        U_A \mathcal{A}_2 U_A & = \sigma_y\otimes\openone_2\otimes\openone_{J_A}\\
        U_A \mathcal{A}_3 U_A & = \sigma_x\otimes\sigma_z\otimes\openone_{J_A}\\
        U_A \mathcal{A}_4 U_A & = \sigma_x\otimes\sigma_y\otimes\openone_{J_A}
    \end{aligned}
\end{equation}
which satisfies Eq. (\ref{A4ob}). Now, to demonstrate that the state $\ket{\psi}$ of the physical system in the basis $\mathcal{U} = U_A\otimes V_B$ is equivalent to two copies of maximally entangled state in the reference system, we start with self-testing relations $\mathcal{A}_y\otimes\mathcal{B}_y\ket{\psi} = \ket{\psi}$. Since the state  $\ket{\psi} \in \mathcal{H}_A \otimes \mathcal{H}_B \subseteq \mathcal{H}_{A^\prime} \otimes \mathcal{H}_{B^\prime} \otimes \mathcal{H}_{J_A} \otimes \mathcal{H}_{J_B}$ we can write the state $\mathcal{U}\ket{\psi}$ as
\begin{equation}\label{4state}
     \ket{\xi}=\mathcal{U}\ket{\psi}=\sum_{i_{1}i_{2}j_{1}j_{2}=0}^{1}\ket{i_{1}i_{2}j_{1}j_{2}}\otimes \ket{\psi_{i_1j_1i_2j_2}}
\end{equation}
 Note that the self-testing relation $\mathcal{A}_y\otimes B_y \ket{\psi} = \ket{\psi}$ is also satisfied in the the rotated frame
 \begin{equation}\label{apxobsrel}
      \qty(\mathcal{U} \mathcal{A}_y\otimes B_y\mathcal{U}^\dag) \ket{\xi} = \ket{\xi} 
 \end{equation}
 Now, considering the relation Eq.~(\ref{apxobsrel}) for $y=1$  and rearranging Hilbert spaces appropriately and, we get
 \begin{equation}
     \begin{aligned}
\qty[\qty(\sigma_z\otimes \openone\otimes\sigma_z\otimes\openone)\otimes \openone_{J_A}\otimes\openone_{J_B}] \sum_{i_{1}i_{2}j_{1}j_{2}=0}^{1}\ket{i_{1}i_{2}j_{1}j_{2}}\otimes \ket{\psi_{i_1j_1i_2j_2}} = \sum_{i_{1}i_{2}j_{1}j_{2}=0}^{1}\ket{i_{1}i_{2}j_{1}j_{2}}\otimes \ket{\psi_{i_1j_1i_2j_2}}\\
     \end{aligned}
 \end{equation}
Which gives us the following equation
\begin{equation}
    (-1)^{i_{1}+j_{1}}\sum_{i_{1}i_{2}j_{1}j_{2}=0}^{1}\ket{i_{1}i_{2}j_{1}j_{2}}\otimes \ket{\psi_{i_1j_1i_2j_2}} = \sum_{i_{1}i_{2}j_{1}j_{2}=0}^{1}\ket{i_{1}i_{2}j_{1}j_{2}}\otimes \ket{\psi_{i_1j_1i_2j_2}}
\end{equation}
Upon solving the above equation, we obtain 
\begin{eqnarray}
 (-1)^{i_{1}+j_{1}} \ket{\psi_{i_1j_1i_2j_2}}=\ket{\psi_{i_1j_1i_2j_2}}
\end{eqnarray}
which provides $\ket{\psi_{i_1,i_2,j_1,j_2}} = 0$ whenever  $(i_1+j_1)\ mod \ 2 = 1$\textbf{} implying $\ket{\psi_{0010}} = \ket{\psi_{0011}}=\ket{\psi_{0110}}=\ket{\psi_{0111}}=\ket{\psi_{1000}}=\ket{\psi_{1001}}=\ket{\psi_{1100}}=\ket{\psi_{1101}}=0$. Hence, putting the above values in Eq.~(\ref{4state}) we get
\begin{equation}\label{4psin}
\begin{aligned}
     \ket{\xi} &= \ket{0000}\otimes\ket{\psi_{0000}}+\ket{0001}\otimes\ket{\psi_{0001}}+\ket{0100}\otimes\ket{\psi_{0100}}+\ket{0101}\otimes\ket{\psi_{0101}}\\
     &+\ket{1010}\otimes\ket{\psi_{1010}}+\ket{1011}\otimes\ket{\psi_{1011}}+\ket{1110}\otimes\ket{\psi_{1110}}+\ket{1111}\otimes\ket{\psi_{1111}}
\end{aligned}
\end{equation}
Now using $\ket{\psi}$ in the relation Eq.~(\ref{apxobsrel}) for $y=2$ we can obatin
\begin{equation}
    \sum_{i_{1}i_{2}j_{1}j_{2}=0}^{1}\ket{i_{1}i_{2}j_{1}j_{2}}\otimes \ket{\psi_{i_1j_1i_2j_2}}=\sum_{i_{1}i_{2}j_{1}j_{2}=0}^{1}\ket{i_{1}i_{2}j_{1}j_{2}}\otimes \ket{\psi_{i_1j_1i_2j_2}}
\end{equation}
and solving we get $\ket{\psi_{i_1,i_2,j_1,j_2}}=\ket{\psi_{i_1',i_2',j_1',j_2'}}$ when $i_2 = i_2', j_2=j_2'$, which further gives, $\ket{\psi_{1010}}=\ket{\psi_{0000}}, \ket{\psi_{1011}}=\ket{\psi_{0001}}, \ket{\psi_{0100}}=\ket{\psi_{1110}}, \ket{\psi_{0101}}=\ket{\psi_{1111}}$. Now substituting the values in Eq. (\ref{4psin}), we have
\begin{equation}\label{4psinnn}
\begin{aligned}
     \ket{\xi} &= (\ket{0000}+\ket{1010})\otimes\ket{\psi_{0000}}+(\ket{0001}+\ket{1011})\otimes\ket{\psi_{0001}}\\
     &+(\ket{0100}+\ket{1110})\otimes\ket{\psi_{1110}}+(\ket{0101}+\ket{1111})\otimes\ket{\psi_{1111}}
\end{aligned}
\end{equation}
Next, using  Eq.~(\ref{apxobsrel}) for $y=3$ and simplifying, we get $\ket{\psi_{i_1,i_2,j_1,j_2}}=0$ when $i_1=i_2$ and $j_1\neq j_2$ which gives $\ket{\psi_{0001}}=\ket{\psi_{1110}}=0$. Again substituting the values in Eq.~(\ref{4psinnn})
\begin{equation}
\begin{aligned}
     \ket{\xi} &= (\ket{0000}+\ket{1010})\otimes\ket{\psi_{0000}}+(\ket{0101}+\ket{1111})\otimes\ket{\psi_{1111}}
\end{aligned}
\end{equation}
Using the relation Eq.~(\ref{apxobsrel}) for $y=4$, we obtain $\ket{\psi_{1111}}=\ket{\psi_{0000}}$. Thus, the final state with appropriate normalization becomes
\begin{equation}
    \ket{\xi}=\frac{1}{2}\qty(\ket{0000}+\ket{0101}+\ket{1010}+\ket{1111})\otimes2\ket{\psi_{0000}}
\end{equation}
where the first two bits are with Alice and the last two are with Bob. Rearranging the reference systemsystem, we obtain
\begin{equation}
    \ket{\xi} = \qty(\ket{\phi^+}_{i_1j_1}\otimes\ket{\phi^+}_{i_2j_2})\otimes2\ket{\psi_{0000}}
\end{equation}
Hence, the unitary $\mathcal{U}$ extracts two copies of maximally entangled states in the reference system.


\section{Bounding the distance of the physical State from the ideal state: } \label{statebound}

For the ideal case, the SOS conditions imply that the state lies entirely in the intersection of the kernels of these operators. Any physical state achieving a slightly suboptimal violation necessarily has a component outside this ideal subspace, and the magnitude of this component provides a direct measure of the deviation from the ideal state. By explicitly projecting the physical state onto the ideal subspace, we obtain quantitative bounds on the norm of the orthogonal component, thereby providing a precise characterization of state robustness.

From SOS, the relation between the ideal observables and the state
\begin{equation}\label{apidsos}
  \forall x \ \ \   M_x\ket{\psi}=0 \implies A_x \otimes\openone \ket{\psi}=\frac{1}{\sqrt{n}}(\openone\otimes\mathcal{B}_x)\ket{\psi},
\end{equation}
The ideal SOS relation $M_x\ket{\psi}=0$ implies that the state $\ket{\psi}$ is in the kernel of $M_x$. The kernel is defined as
\begin{equation}
    \ker(M_x)=\qty{\ket{\psi}:M_x\ket{\psi}=0}
\end{equation}
representing the set of all states that satisfy the ideal SOS relation perfectly. As there are $2^{n-1}$ such condition, we define the subspace of all states satisfying every constraint simultaneously as the intersection of these kernels
\begin{equation}
    \mathcal{S}=\bigcap\limits_x  \ker(M_x)
\end{equation}
The projection of a physical state $\ket{\tilde{\psi}}$ onto this subspace yields the closest ideal state, denoted $\ket{\psi'}$
\begin{equation}
    \ket{\psi'}=\frac{\Pi \ket{\tilde{\psi}}}{\norm{\Pi \ket{\tilde{\psi}}}}, \ \ \ {\norm{\Pi \ket{\tilde{\psi}}}}=\Tr[\Pi \dyad{\tilde{\psi}}]
\end{equation}
where, $\Pi$ is the orthogonal projector onto $\mathcal{S}$. Decomposing $\ket{\tilde{\psi}}$ into components parallel and orthogonal to the ideal subspace, one has
\begin{equation}\label{orpros}
    \ket{\tilde{\psi}}=\Pi\ket{\tilde{\psi}} +\qty(\openone-\Pi)\ket{\tilde{\psi}}
\end{equation}
Here, $\Pi\ket{\tilde{\psi}}$ lies within the ideal subspace, while $\qty(\openone-\Pi)\ket{\tilde{\psi}}$ is the component orthogonal to it. The norm of the latter component, $\norm{\qty(\openone-\Pi)\ket{\tilde{\psi}}}$, quantifies the distance between $\ket{\tilde{\psi}}$ and $\mathcal{S}$. Defining
\begin{equation}
    \ket{\psi_{\parallel}}:=\frac{\Pi\ket{\tilde{\psi}}}{\norm{\Pi\ket{\tilde{\psi}}}}, \ \ \text{and } \ \ket{\psi_{\perp}}:= \frac{\qty(\openone-\Pi)\ket{\tilde{\psi}}}{\norm{\qty(\openone-\Pi)\ket{\tilde{\psi}}}}
\end{equation}
we note that $\ket{\psi_{\parallel}}$ and $\ket{\psi_{\perp}}$ are orthogonal, i.e., $\ip{ \psi_{\parallel}}{ \psi_{\perp}}=0$, and can express $\ket{\tilde{\psi}}$ as
\begin{equation}
    \ket{\tilde{\psi}} = \sqrt{1-\epsilon^2} \ket{\psi_{\parallel}}+ \epsilon \ket{\psi_{\perp}}, \ \ \ \text{where } \epsilon=\norm{\qty(\openone-\Pi)\ket{\tilde{\psi}}}
\end{equation}
The distance between the ideal state $(\ket{\psi}\equiv\ket{\psi'}\equiv\ket{\psi_{\parallel}})$ and the physical state $\ket{\tilde{\psi}}$ then becomes
\begin{equation}
\begin{aligned}
        \norm{\ket{\tilde{\psi}}-\ket{\psi'}}^2&=\norm{\sqrt{1-\epsilon^2} \ket{\psi_{\parallel}}+ \epsilon \ket{\psi_{\perp}}-\ket{\psi_{\parallel}}}^2\\
        &=\norm{\qty(\sqrt{1-\epsilon^2}-1) \ket{\psi_{\parallel}}+ \epsilon \ket{\psi_{\perp}}}^2\\
        &= \qty(\sqrt{1-\epsilon^2}-1)^2+\epsilon^2 \ \ \text{[since $\ket{\psi_{\parallel}}$ and $\ket{\psi_{\perp}}$ are normalised and orthogonal]}\\
        &=2\qty(1-\sqrt{1-\epsilon^2}) \approx \epsilon^2 =\norm{\qty(\openone-\Pi)\ket{\tilde{\psi}}}^2
\end{aligned}
\end{equation}
Recalling the positive semidefinite operator from Eq.~(\ref{sstart}),
\begin{equation}
    \mathcal{M}=\sum_x \frac{\omega_x}{2}M_x^{\dagger}M_x, \ \ \ \text{with } \delta=\expval{\mathcal{M}}{\tilde{\psi}}
\end{equation}
we observe that $M_x\ket{\psi}=0 \ \forall x$, implies $\mathcal{M}\ket{\psi}=0$, so that the kernel of $\mathcal{M}$ coincides with $\mathcal{S}$
\begin{equation}
    \ker(\mathcal{M})=\mathcal{S}
\end{equation}
That means the projection onto $\ker(\mathcal{M})$ is exactly $\Pi$, i.e. $\mathcal{M}\Pi=0$. Thus, from Eq.~(\ref{orpros}), we get
\begin{equation}\label{expM}
\begin{aligned}
        \mathcal{M} \ket{\tilde{\psi}}&=\mathcal{M} \Pi \ket{\tilde{\psi}}+\mathcal{M}\qty(\openone-\Pi)\ket{\tilde{\psi}}\\
        &= \mathcal{M}\qty(\openone-\Pi)\ket{\tilde{\psi}}\\
        \implies & \expval{\mathcal{M}}{\tilde{\psi}}=\norm{\qty(\openone-\Pi)\ket{\tilde{\psi}}}^2
\end{aligned}
\end{equation}
Since $\mathcal{M}$ is a positive semidefinite operator acting as zero on $\mathcal{S}$, its nonzero eigenvalues correspond to directions orthogonal to $\mathcal{S}$. Denoting the spectral decomposition of $\mathcal{M}$ as
\begin{equation}
    \mathcal{M}=\sum_i \lambda_i \dyad{v_i}, \ \ \lambda_i\geq 0
\end{equation}
where $\qty{\ket{v_i}}$ span the full space and $\lambda_j=0$ for all $\ket{v_j}\in \mathcal{S}$. The smallest nonzero eigenvalue on the orthogonal complement $\mathcal{S}^{\perp}$
\begin{equation}\label{lminc1}
\begin{aligned}
       \lambda_{min}&=\min\limits_{\ket{\phi}\in \mathcal{S}^{\perp},\norm{\ket{\phi}}=1} \expval{\mathcal{M}}{\phi}\\
       &=\min\limits_{\ket{\phi}\in \mathcal{S}^{\perp},\norm{\ket{\phi}}=1} \expval{\sum_x \frac{\omega_x}{2}M_x^{\dagger}M_x}{\phi}\\
       &=\min\limits_{\ket{\phi}\in \mathcal{S}^{\perp},\norm{\ket{\phi}}=1} \sum_x \frac{\omega_x}{2}\expval{M_x^{\dagger}M_x}{\phi} \\
\end{aligned}
\end{equation}
Each $M_x$ is constructed from Hermitian and unitary components, $A_x\otimes\openone$ and $\openone\otimes \frac{\mathcal{B}_x}{\omega_x}$. Consequently, the eigenvalues of $M_x$ are restricted to the set $\qty{-2,0,2}$. Thus, $M_x^{\dagger}M_x$ has eigenvalues $\qty{0,4}$, with $0$ eigenvalue corresponds to the ideal subspace $\mathcal{S}$, i.e., the set of states that satisfy $M_x\ket{\psi}=0$. This ensures that the ideal state lies entirely within $\mathcal{S}$, and any component of a physical state outside this subspace corresponds to a positive eigenvalue of $M_x^{\dagger}M_x$.

To identify the smallest nonzero eigenvalue $\lambda_{min}$ of $\mathcal{M}$, we consider the spectral decomposition on the orthogonal complement $\mathcal{S}^{\perp}$. On this subspace, each $M_x^{\dagger}M_x$ acts positively, and since the zero eigenvalues are entirely contained within $\mathcal{S}$,the minimal nonzero contribution comes from the direction in $\mathcal{S}^{\perp}$ that aligns with only one of the $M_x^{\dagger}M_x$ terms being nonzero while the rest vanish. Denoting this particular operator as $M_j$, we can write
\begin{equation}\label{lminc2}
\begin{aligned}
       \lambda_{min}&\geq \min\limits_{\ket{\phi'}\in \mathcal{S},\norm{\ket{\phi'}}=1}  \sum_{x\neq j} \frac{\omega_x}{2}\expval{M_x^{\dagger}M_x}{\phi} + \frac{\omega_j}{2}\expval{M_j^{\dagger}M_j}{\phi^{*}}, \ \ \ket{\phi^{*}}\in \mathcal{S}^{\perp}  \\
       &=2 \sqrt{n}
\end{aligned}
\end{equation}
where $\ket{\phi^{*}}\in \mathcal{S}^{\perp}$ is the eigenvector associated with the minimal nonzero eigenvalue. Then, for any state, from Eq.~(\ref{expM}), we obtain
\begin{equation}
    \expval{\mathcal{M}}{\tilde{\psi}}\geq \lambda_{min} \norm{\qty(\openone-\Pi)\ket{\tilde{\psi}}}^2
\end{equation}
Thus,
\begin{equation}
  \norm{\ket{\tilde{\psi}}-\ket{\psi'}}^2 \approx  \norm{\qty(\openone-\Pi)\ket{\tilde{\psi}}}^2 \leq \frac{\norm{\mathcal{M}\ket{\tilde{\psi}}}^2}{\lambda_{min}} = \frac{1}{\lambda_{min}}\sum_{x = 1}^{2^{n-1}} \frac{\omega_x}{2} \norm{M_x\ket{\tilde{\psi}}}^2=\frac{\delta}{\lambda_{min}}\sim \frac{\delta}{2 \sqrt{n}}  \ \ [\text{from Eq.~(\ref{lminc2})}]
\end{equation}
Therefore,
\begin{equation}\label{suppstaterof}
  \norm{\ket{\tilde{\psi}}-\ket{\psi'}} \sim \sqrt{\frac{\delta}{2 \sqrt{n}}} 
\end{equation}


\section{Robust Bound on Bob’s Anticommutation Relations} \label{roancobb}

Recalling Eq.~(88) from the main text,
\begin{equation}\label{apomro}
\tilde{\omega}_x^2=\expval{\tilde{\mathcal{B}}_x^2}_{\tilde{\psi}}=n+\sum_{y<y'} (-1)^{z^x_y+z^x_{y'}} \expval{\tilde{B}_y,\tilde{B}_{y'}}_{\tilde{\psi}} = n+\Delta_x \ \ \ \forall x
\end{equation}
Taking the square root and applying a Taylor expansion yields
\begin{equation}
    \tilde{\omega}_x=\sqrt{n}\qty(1+\frac{\Delta_x}{n})^{\frac{1}{2}}\approx \sqrt{n} \qty(1+\frac{\Delta_x}{2n}-\frac{\Delta_x^2}{8n^2}+\order{\frac{\Delta_x^3}{n^3}})
\end{equation}
Since $\Delta_x \ll n$, implying $\eval{\order{\frac{\Delta_x^m}{n^m}}}_{m\geq 2}\approx 0$, we obtain
\begin{equation}\label{tildeomega}
    \tilde{\omega}_x \approx \sqrt{n} \qty(1+\frac{\Delta_x}{2n})
\end{equation}
Let us now consider the operator $\tilde{M}_x=\frac{1}{\tilde{\omega}_x}(\openone\otimes\tilde{\mathcal{B}}_x)-\tilde{A}_x \otimes \openone$. Then,
\begin{equation}\label{mbound1}
      \norm{\tilde{M}_x \ket{\tilde{\psi}}}^2=\qty(1+\frac{\expval{\tilde{\mathcal{B}}_x^2}}{\tilde{\omega}_x^2}-\frac{2}{\tilde{\omega}_x}\expval{\tilde{A}_x\otimes \tilde{\mathcal{B}}_x}) =2\qty[1-\frac{1}{\sqrt{n}}\qty(1-\frac{\Delta_x}{2n})\expval{\tilde{A}_x\otimes \tilde{\mathcal{B}}_x}]
\end{equation}
Applying the Cauchy–Schwarz inequality to $\expval{\tilde{A}_x\otimes \tilde{\mathcal{B}}_x}$, we have
\begin{equation}\label{expb}
    \abs{\expval{\tilde{A}_x\otimes \tilde{\mathcal{B}}_x}} \leq \sqrt{\expval{\tilde{A}_x^2}\expval{\tilde{\mathcal{B}}_x^2}}=\sqrt{n+\Delta_x}\approx \sqrt{n}\qty(1+\frac{\Delta_x}{2n})
\end{equation}
Substituting this into the previous Eq.~(\ref{mbound1}) gives 
\begin{equation}
   \norm{\tilde{M}_x \ket{\tilde{\psi}}}^2 \geq \frac{\Delta_x^2}{4n^2} 
\end{equation}
Summing over all $x$ and invoking the SOS relation yields
\begin{equation}
     \delta = \sum_{x = 1}^{2^{n-1}} \frac{\omega_x}{2} \norm{\tilde{M}_x\ket{\tilde{\psi}}}^2 \geq \sum_{x = 1}^{2^{n-1}} \frac{1}{2} \qty(1+\frac{\Delta_x}{2n})\frac{\Delta_x^2}{4n^2}\approx \sum_{x = 1}^{2^{n-1}} \frac{\Delta_x^2}{8n^2}
\end{equation}
Hence, the deviation parameter $\Delta_x$ satisfies the bound
\begin{equation}\label{newD}
    \abs{\Delta_x} \leq \frac{1}{2^{n+1}n}\sqrt{\delta} 
\end{equation}

Now, recall the matrix $C \in \mathbb{R}^{2^{n-1} \times \binom{n}{2}}$ defined in Eq.~(\ref{landc1}),
\begin{equation}
    C_{x,(y,y')}=(-1)^{z^x_y+z^x_{y'}}
\end{equation}
where the columns are indexed by the pairs $(y,y')$ with $1 \leq y < y' \leq n$. As proven in Appx. \ref{appacp}, this matrix has full column rank, i.e., $\text{rank}(C) = \binom{n}{2}$. Eq.~(\ref{apomro}) can thus be written in matrix form as
\begin{equation}
    \Delta=C \ \eta,
\end{equation}
where $\eta=\qty(\eta_{yy'})_{y<y'}$ is the vector of pairwise anticommutator expectations. Since $C$ has full column rank, the associated linear map $T: \mathbb{R}^{\binom{n}{2}} \to \mathbb{R}^{2^{n-1}}$ defined by $T(\eta) = C\eta$ is injective. Because the columns are independent, i.e., the only solution to $C\eta=0$ is $\eta=0$, i.e., no two distinct inputs map to the same output. Therefore, the pseudoinverse $C^+$ exists and satisfies $C^+C = I_{\binom{n}{2}}$. Therefore, $\eta=C^+\Delta$.

To bound the entries of $\eta$ in terms of $\Delta$,  we employ the $\ell_\infty$-norm (maximum absolute entry), defined as $\norm{v}_{\infty}:=\max_i |v_i|$. Then
\begin{equation} \label{normin}
    \norm{\eta}_{\infty}=\norm{C^+\Delta}_{\infty}\leq \norm{C^+}_{\infty\to \infty}\norm{\Delta}_{\infty}
\end{equation}
Here $\norm{C^+}_{\infty\to \infty}$ is the operator norm of $C^+$ as a map from $\ell_\infty$ to $\ell_\infty$, i.e., the maximum value of the $\ell_\infty$ norm under $C^+$. Component-wise, this implies
\begin{equation}
    \max\limits_{y<y'} |\eta_{yy'}|\leq \norm{C^+}_{\infty\to\infty} \max_x |\Delta_x|
\end{equation}
This tells us the largest entry of $\eta$ is controlled by the largest entry of $\Delta$, scaled by the norm of the pseudoinverse. Defining $K_n:=\norm{C^+}_{\infty\to\infty}>0$, we obtain the compact bound
\begin{equation} \label{etab}
    \max\limits_{y<y'}|\eta_{yy'}|\leq K_n \max_x |\Delta_x| 
\end{equation}
Finally, using the previously derived bound for $\Delta_X$ in Eq.~(\ref{newD}), we arrive at
\begin{equation} \label{acbobf}
    \max\limits_{y<y'}|\eta_{yy'}|\leq L_n \sqrt{\delta}, \ \ \  L_n=\frac{1}{2^{n+1}n} K_n
\end{equation}


\section{Derivation of Robustness Bounds for Measurement Observables: } \label{roobs}

We begin by inserting and subtracting the correlated term involving $\tilde{\mathcal{B}}_x$
\begin{equation}
    \qty(\tilde{A}_x-A_x)\otimes \openone=\qty(\tilde{A}_x\otimes \openone-\frac{\openone\otimes \tilde{\mathcal{B}}_x}{\tilde{\omega}_x})+\qty(\frac{\openone\otimes \tilde{\mathcal{B}}_x}{\tilde{\omega}_x}-A_x\otimes \openone).
\end{equation}
Physically, this separates Alice’s deviation into a part correlated with Bob’s operators and a residual part that accounts for the difference between the ideal and physical correlations. Taking the norm acting on $\ket{\tilde{\psi}}$ and applying the triangle inequality, $\norm{P+Q}\leq \norm{P}+\norm{Q}$, yields
\begin{equation}\label{arob}
\begin{aligned}
     \norm{\qty(\tilde{A}_x-A_x)\otimes \openone \ket{\tilde{\psi}}}&\leq\norm{\qty(\tilde{A}_x\otimes \openone-\frac{\openone\otimes \tilde{\mathcal{B}}_x}{\tilde{\omega}_x})\ket{\tilde{\psi}}}+\norm{\qty(\frac{\openone\otimes \tilde{\mathcal{B}}_x}{\tilde{\omega}_x}-A_x\otimes \openone)\ket{\tilde{\psi}}}\\
     &\leq F_n \sqrt{\delta} +\norm{\qty(\frac{\openone\otimes \tilde{\mathcal{B}}_x}{\tilde{\omega}_x}-A_x\otimes \openone)\ket{\tilde{\psi}}}  \\
     &=F_n \sqrt{\delta} +\norm{\qty{\frac{\openone\otimes \qty(\tilde{\mathcal{B}}_x-\mathcal{B}_x)}{\tilde{\omega}_x}+\qty(\frac{\openone\otimes\mathcal{B}_x}{\tilde{\omega}_x}-A_x\otimes \openone)}\ket{\tilde{\psi}}}\\
     &\leq F_n \sqrt{\delta} +\norm{\frac{\openone\otimes \qty(\tilde{\mathcal{B}}_x-\mathcal{B}_x)}{\tilde{\omega}_x}\ket{\tilde{\psi}}}+\norm{\qty(\frac{\openone\otimes\mathcal{B}_x}{\tilde{\omega}_x}-A_x\otimes \openone)\ket{\tilde{\psi}}}.
     \end{aligned}
     \end{equation}
Now, consider the second term $\norm{\qty(\frac{\openone\otimes\mathcal{B}_x}{\tilde{\omega}_x}-A_x\otimes \openone)\ket{\tilde{\psi}}}$. Using the Cauchy–Schwarz inequality and simple algebra, we obtain
\begin{equation}\label{obsrobustmid}
    \begin{aligned}
        \norm{\qty(\frac{\openone\otimes\mathcal{B}_x}{\tilde{\omega}_x}-A_x\otimes \openone)\ket{\tilde{\psi}}}&=\norm{\frac{\openone\otimes\mathcal{B}_x}{\tilde{\omega}_x}\qty(\ket{\tilde{\psi}}-\ket{\psi})+A_x\otimes\openone\qty(\ket{\tilde{\psi}}-\ket{\psi})+ \qty(\frac{\openone\otimes\mathcal{B}_x}{\tilde{\omega}_x}-A_x\otimes \openone) \ket{\psi}} \\
        &\leq \norm{\frac{\openone\otimes\mathcal{B}_x}{\tilde{\omega}_x}\qty(\ket{\tilde{\psi}}-\ket{\psi})}+ \norm{A_x\otimes\openone\qty(\ket{\tilde{\psi}}-\ket{\psi})}+\norm{ \qty(\frac{\openone\otimes\mathcal{B}_x}{\tilde{\omega}_x}-A_x\otimes \openone) \ket{\psi}}\\
        &\leq \sqrt{n}\qty(\frac{1}{\tilde{\omega}_x}+1)\norm{\ket{\tilde{\psi}}-\ket{\psi}}+ \norm{ \qty(\frac{\openone\otimes\mathcal{B}_x}{\tilde{\omega}_x}-A_x\otimes \openone) \ket{\psi}} \ \ \ \ [\text{since }\norm{\mathcal{B}_x\ket{\psi}}=\omega_x=\sqrt{n}].
    \end{aligned}
\end{equation}
Using the ideal SOS condition given by Eq.~(\ref{aborf}), we reduce the last term $(A_x\otimes\openone)\ket{\psi}=\frac{1}{\sqrt{n}}\qty(\openone\otimes\mathcal{B}_x)\ket{\psi}$ of Eq.~(\ref{obsrobustmid}) to
\begin{equation}\label{obsrobustmid1}
    \begin{aligned}
    \norm{\qty(\frac{\openone\otimes\mathcal{B}_x}{\tilde{\omega}_x}-A_x\otimes \openone)\ket{\tilde{\psi}}} &\leq \sqrt{n}\qty(\frac{1}{\tilde{\omega}_x}+1)\norm{\ket{\tilde{\psi}}-\ket{\psi}}+ \norm{ \qty(\frac{\openone\otimes\mathcal{B}_x}{\tilde{\omega}_x}-\frac{\openone\otimes\mathcal{B}_x}{\sqrt{n}}) \ket{\psi}}  \\
    &\leq\sqrt{n}\qty(\frac{1}{\tilde{\omega}_x}+1)C_n \sqrt{\delta}+ \abs{\qty(\frac{1}{\tilde{\omega}_x}-\frac{1}{\sqrt{n}})}\norm{ \qty(\openone\otimes\mathcal{B}_x) \ket{\psi}} \ \ \ [\text{using Eq.~(\ref{suppstaterof})}].
    \end{aligned}
\end{equation}
Employing Eq.~\ref{tildeomega},
\begin{equation}\label{1byomegab}
    \frac{1}{\tilde{\omega}_x}\approx \frac{1}{\sqrt{n}}\qty(1-\frac{\Delta_x}{2n}).
\end{equation}
Substituting $\frac{1}{\tilde{\omega}_x}$, Eq.~(\ref{obsrobustmid1}) gives
\begin{equation}\label{obsrobustmid2}
    \begin{aligned}
    \norm{\qty(\frac{\openone\otimes\mathcal{B}_x}{\tilde{\omega}_x}-A_x\otimes \openone)\ket{\tilde{\psi}}} &\leq \sqrt{n}\qty(\frac{1}{2\sqrt{n}}+1)C_n \sqrt{\delta}+ \frac{1}{2n}\abs{\Delta_x} \ \ \ [\text{since }\norm{\mathcal{B}_x \ket{\psi}}=\sqrt{n}] \\
    &\leq \qty(\frac{1}{2}+\sqrt{n})C_n \sqrt{\delta}+ \frac{1}{2^{n+2}n^2}\sqrt{\delta} \ \ \ \ [\text{from Eq.~(\ref{newD})}] \\
    &=Q_n \sqrt{\delta}, \ \ \ Q_n:= \qty{\qty(\frac{1}{2}+\sqrt{n})C_n+\frac{1}{2^{n+2}n^2}}\sim \order{n^{\frac{1}{4}}}.
    \end{aligned}
\end{equation}
Substituting Eq.~(\ref{obsrobustmid2}) back to Eq.~(\ref{arob}), we find
     \begin{equation}\label{obsrobustmid3}
  \norm{\qty(\tilde{A}_x-A_x)\otimes \openone \ket{\tilde{\psi}}} \leq \qty(F_n+Q_n) \sqrt{\delta} +\frac{1}{\tilde{\omega}_x}\norm{\openone\otimes \qty(\tilde{\mathcal{B}}_x-\mathcal{B}_x)\ket{\tilde{\psi}}}.
\end{equation}
Starting from the definition of $\tilde{M}_x\ket{\tilde{\psi}}=\tilde{A}_x \otimes \openone \ket{\tilde{\psi}}-\frac{1}{\tilde{\omega}_x}(\openone\otimes\tilde{\mathcal{B}}_x)\ket{\tilde{\psi}}$, we immediately express Bob’s scaled (unnormalised) physical observable in terms of Alice’s physical observable and the residual error operator, acting on the physical state
\begin{equation}\label{bobphysose}
    \frac{1}{\tilde{\omega}_x}(\openone\otimes\tilde{\mathcal{B}}_x)\ket{\tilde{\psi}}=\tilde{A}_x \otimes \openone \ket{\tilde{\psi}}-\tilde{M}_x\ket{\tilde{\psi}}.
\end{equation}
This equation clearly shows that Bob’s observable is fully determined by Alice’s action on the physical state, up to the correction $\tilde{M}_x$. In the ideal, noiseless case, $\tilde{M}_x\ket{\psi}=0$, Bob’s scaled observable is therefore entirely fixed by Alice’s observable acting on the ideal state $\ket{\psi}$. Consequently, Bob’s ideal scaled observables can be represented as a linear combination of Alice’s physical operators on the subspace defined by the ideal state $\ket{\psi}$ as
\begin{equation}\label{bobphytarsose}
  \frac{1}{\sqrt{n}}\qty( \openone \otimes \mathcal{B}_x\ket{\psi})= \tilde{A}_x \otimes \openone \ket{\psi}.
\end{equation}
Here, we deliberately use $\tilde{A}_x$ instead of $A_x$ to emphasise that, in the ideal limit, Bob’s observables are fully determined by Alice’s physical observable acting on the ideal subspace. Subtracting Eq.~(\ref{bobphysose})-(\ref{bobphytarsose}), we obtain
\begin{equation}\label{bobdiffphytar1}
\frac{1}{\tilde{\omega}_x}(\openone\otimes\tilde{\mathcal{B}}_x)\ket{\tilde{\psi}}-  \frac{1}{\sqrt{n}}\qty( \openone \otimes \mathcal{B}_x\ket{\psi})=\tilde{A}_x \otimes \openone \qty(\ket{\tilde{\psi}}-\ket{\psi}) - \tilde{M}_x\ket{\tilde{\psi}}.
\end{equation}
Considering the LHS of Eq.~(\ref{bobdiffphytar1}),
\begin{equation}\label{bobdiffphytar2}
    \begin{aligned}
   \frac{1}{\tilde{\omega}_x}(\openone\otimes\tilde{\mathcal{B}}_x)\ket{\tilde{\psi}}-  \frac{1}{\sqrt{n}}\qty( \openone \otimes \mathcal{B}_x\ket{\psi})&= \frac{1}{\tilde{\omega}_x}\qty{\openone\otimes\qty(\tilde{\mathcal{B}}_x-\mathcal{B}_x)}\ket{\tilde{\psi}}+   \frac{1}{\tilde{\omega}_x}(\openone\otimes\mathcal{B}_x)\ket{\tilde{\psi}} - \frac{1}{\sqrt{n}}(\openone\otimes\mathcal{B}_x)\ket{\psi}  \\
   &=\frac{1}{\tilde{\omega}_x}\qty{\openone\otimes\qty(\tilde{\mathcal{B}}_x-\mathcal{B}_x)}\ket{\tilde{\psi}} + \frac{1}{\sqrt{n}} \openone\otimes \mathcal{B}_x\qty(\ket{\tilde{\psi}}-\ket{\psi})-\frac{\Delta_x}{2n\sqrt{n}} \openone\otimes \mathcal{B}_x \ket{\tilde{\psi}}.
    \end{aligned}
\end{equation}
Thus, from Eqs.~(\ref{bobdiffphytar1}) and (\ref{bobdiffphytar2}), we get
\begin{equation}\label{bobdiffphytar3}
\frac{1}{\tilde{\omega}_x}\qty{\openone\otimes\qty(\tilde{\mathcal{B}}_x-\mathcal{B}_x)}\ket{\tilde{\psi}}= \tilde{A}_x \otimes \openone \qty(\ket{\tilde{\psi}}-\ket{\psi}) - \tilde{M}_x\ket{\tilde{\psi}}-   \frac{1}{\sqrt{n}} \openone\otimes \mathcal{B}_x\qty(\ket{\tilde{\psi}}-\ket{\psi})+\frac{\Delta_x}{2n\sqrt{n}} \openone\otimes \mathcal{B}_x \ket{\tilde{\psi}}.
\end{equation}
Taking norm and suitably applying triangle inequality to Eq.~(\ref{bobdiffphytar3}), we get
\begin{equation}\label{bobdiffphytar4}
\begin{aligned}
     \frac{1}{\tilde{\omega}_x}\norm{\openone\otimes\qty(\tilde{\mathcal{B}}_x-\mathcal{B}_x)\ket{\tilde{\psi}}} & \leq  \norm{\ket{\tilde{\psi}}-\ket{\psi}}+\norm{\tilde{M}_x\ket{\tilde{\psi}}}+\frac{1}{\sqrt{n}}\norm{\mathcal{B}_x}\cdot \norm{\ket{\tilde{\psi}}-\ket{\psi}}+\frac{1}{2n\sqrt{n}}\norm{\Delta_x}\cdot \norm{\mathcal{B}_x}   \\
     & \leq C_n \sqrt{\delta} + F_n \sqrt{\delta} + C_n \sqrt{\delta} + \frac{1}{2^{n+2}n^2}\sqrt{\delta} \ \ \ [\text{using $\norm{\mathcal{B}_x\ket{\psi}}=\omega_x=\sqrt{n}$}] \\
     &= H_n \sqrt{\delta}, \ \ \ \text{with } H_n:=2 C_n +F_n + \frac{1}{2^{n+2}n^2}.
\end{aligned}
\end{equation}
Finally, substituting Eq.~(\ref{bobdiffphytar4}) into Eq.~(\ref{obsrobustmid3}), we obtain the robust bound of Alice's observables
\begin{equation}\label{alicefinalrob}
  \norm{\qty(\tilde{A}_x-A_x)\otimes \openone \ket{\tilde{\psi}}} \leq D_n \sqrt{\delta} \ \ \ \text{with } D_n:= F_n+Q_n+H_n\sim \order{n^{\frac{1}{4}}}.
\end{equation}
Using the definitions $\mathcal{B}_x=\sum_{y=1}^n (-1)^{z^x_y} B_y$ and its physical counterpart $\tilde{\mathcal{B}}_x=\sum_{y=1}^n (-1)^{z^x_y} \tilde{B}_y$, we can invert this linear relation to express each of Bob’s individual physical observables in terms of the scaled observables $\tilde{\mathcal{B}}$. Applying the inverse transformation yields
\begin{equation}\label{bobdiffphytar3a}
  \openone\otimes\qty(\tilde{B}_y-B_y)\ket{\tilde{\psi}} = \openone\otimes \qty{\frac{1}{2^{n-1}} \sum_{x=1}^{2^{n-1}}(-1)^{z^x_y}\qty(\tilde{\mathcal{B}}_x-\mathcal{B}_x)}\ket{\tilde{\psi}}.
\end{equation}
This equation shows that the deviation of Bob’s individual observable from its ideal counterpart is determined by a linear combination of the deviations of the scaled observables. Taking the norm on both sides and applying the triangle inequality, we find
\begin{equation}\label{bobdiffphytar3b}
\norm{\openone\otimes\qty(\tilde{B}_y-B_y)\ket{\tilde{\psi}}} \leq \frac{1}{2^{n-1}} \sum_{x=1}^{2^{n-1}}\abs{(-1)^{z^x_y}} \cdot  \norm{\openone\otimes \qty(\tilde{\mathcal{B}}_x-\mathcal{B}_x)\ket{\tilde{\psi}} } \leq E_n \sqrt{\delta} \ \ [\text{using Eq.~(\ref{bobdiffphytar4})}]
\end{equation}
where $E_n:=\sqrt{n}H_n \sim \order{n^{\frac{1}{4}}}$.

\twocolumngrid
\bibliography{references} 
\end{document}